\begin{document}
\theoremstyle{plain}
\newtheorem*{ithm}{Theorem}
\newtheorem*{idefn}{Definition}
\newtheorem{thm}{Theorem}[section]
\newtheorem{lem}[thm]{Lemma}
\newtheorem{prop}[thm]{Proposition}
\newtheorem{cor}[thm]{Corollary}
\newtheorem*{icor}{Corollary}
\theoremstyle{definition}
\newtheorem{assum}[thm]{Assumption}
\newtheorem{notation}[thm]{Notation}
\newtheorem{defn}[thm]{Definition}
\newtheorem{clm}[thm]{Claim}
\newtheorem{ex}[thm]{Example}
\theoremstyle{remark}
\newtheorem{rem}[thm]{Remark}
\newcommand{\unit}{\mathbb I}
\newcommand{\ali}[1]{{\mathfrak A}_{[ #1 ,\infty)}}
\newcommand{\alm}[1]{{\mathfrak A}_{(-\infty, #1 ]}}
\newcommand{\nn}[1]{\lV #1 \rV}
\newcommand{\br}{{\mathbb R}}
\newcommand{\dm}{{\rm dom}\mu}
\newcommand{\lb}{l_{\bb}(n,n_0,k_R,k_L,\lal,\bbD,\bbG,Y)}
\newcommand{\Ad}{\mathop{\mathrm{Ad}}\nolimits}
\newcommand{\Proj}{\mathop{\mathrm{Proj}}\nolimits}
\newcommand{\RRe}{\mathop{\mathrm{Re}}\nolimits}
\newcommand{\RIm}{\mathop{\mathrm{Im}}\nolimits}
\newcommand{\Wo}{\mathop{\mathrm{Wo}}\nolimits}
\newcommand{\Prim}{\mathop{\mathrm{Prim}_1}\nolimits}
\newcommand{\Primz}{\mathop{\mathrm{Prim}}\nolimits}
\newcommand{\ClassA}{\mathop{\mathrm{ClassA}}\nolimits}
\newcommand{\Class}{\mathop{\mathrm{Class}}\nolimits}
\newcommand{\diam}{\mathop{\mathrm{diam}}\nolimits}
\def\qed{{\unskip\nobreak\hfil\penalty50
\hskip2em\hbox{}\nobreak\hfil$\square$
\parfillskip=0pt \finalhyphendemerits=0\par}\medskip}
\def\proof{\trivlist \item[\hskip \labelsep{\bf Proof.\ }]}
\def\endproof{\null\hfill\qed\endtrivlist\noindent}
\def\proofof[#1]{\trivlist \item[\hskip \labelsep{\bf Proof of #1.\ }]}
\def\endproofof{\null\hfill\qed\endtrivlist\noindent}
\newcommand{\oo}{{\boldsymbol\omega}}
\newcommand{\caA}{{\mathcal A}}
\newcommand{\caB}{{\mathcal B}}
\newcommand{\caC}{{\mathcal C}}
\newcommand{\caD}{{\mathcal D}}
\newcommand{\caE}{{\mathcal E}}
\newcommand{\caF}{{\mathcal F}}
\newcommand{\caG}{{\mathcal G}}
\newcommand{\caH}{{\mathcal H}}
\newcommand{\caI}{{\mathcal I}}
\newcommand{\caJ}{{\mathcal J}}
\newcommand{\caK}{{\mathcal K}}
\newcommand{\caL}{{\mathcal L}}
\newcommand{\caM}{{\mathcal M}}
\newcommand{\caN}{{\mathcal N}}
\newcommand{\caO}{{\mathcal O}}
\newcommand{\caP}{{\mathcal P}}
\newcommand{\caQ}{{\mathcal Q}}
\newcommand{\caR}{{\mathcal R}}
\newcommand{\caS}{{\mathcal S}}
\newcommand{\caT}{{\mathcal T}}
\newcommand{\caU}{{\mathcal U}}
\newcommand{\caV}{{\mathcal V}}
\newcommand{\caW}{{\mathcal W}}
\newcommand{\caX}{{\mathcal X}}
\newcommand{\caY}{{\mathcal Y}}
\newcommand{\caZ}{{\mathcal Z}}
\newcommand{\bbA}{{\mathbb A}}
\newcommand{\bbB}{{\mathbb B}}
\newcommand{\bbC}{{\mathbb C}}
\newcommand{\bbD}{{\mathbb D}}
\newcommand{\bbE}{{\mathbb E}}
\newcommand{\bbF}{{\mathbb F}}
\newcommand{\bbG}{{\mathbb G}}
\newcommand{\bbH}{{\mathbb H}}
\newcommand{\bbI}{{\mathbb I}}
\newcommand{\bbJ}{{\mathbb J}}
\newcommand{\bbK}{{\mathbb K}}
\newcommand{\bbL}{{\mathbb L}}
\newcommand{\bbM}{{\mathbb M}}
\newcommand{\bbN}{{\mathbb N}}
\newcommand{\bbO}{{\mathbb O}}
\newcommand{\bbP}{{\mathbb P}}
\newcommand{\bbQ}{{\mathbb Q}}
\newcommand{\bbR}{{\mathbb R}}
\newcommand{\bbS}{{\mathbb S}}
\newcommand{\bbT}{{\mathbb T}}
\newcommand{\bbU}{{\mathbb U}}
\newcommand{\bbV}{{\mathbb V}}
\newcommand{\bbW}{{\mathbb W}}
\newcommand{\bbX}{{\mathbb X}}
\newcommand{\bbY}{{\mathbb Y}}
\newcommand{\bbZ}{{\mathbb Z}}
\newcommand{\str}{^*}
\newcommand{\lv}{\left \vert}
\newcommand{\rv}{\right \vert}
\newcommand{\lV}{\left \Vert}
\newcommand{\rV}{\right \Vert}
\newcommand{\la}{\left \langle}
\newcommand{\ra}{\right \rangle}
\newcommand{\ltm}{\left \{}
\newcommand{\rtm}{\right \}}
\newcommand{\lcm}{\left [}
\newcommand{\rcm}{\right ]}
\newcommand{\ket}[1]{\lv #1 \ra}
\newcommand{\bra}[1]{\la #1 \rv}
\newcommand{\lmk}{\left (}
\newcommand{\rmk}{\right )}
\newcommand{\al}{{\mathcal A}}
\newcommand{\md}{M_d({\mathbb C})}
\newcommand{\id}{\mathop{\mathrm{id}}\nolimits}
\newcommand{\Tr}{\mathop{\mathrm{Tr}}\nolimits}
\newcommand{\Ran}{\mathop{\mathrm{Ran}}\nolimits}
\newcommand{\Ker}{\mathop{\mathrm{Ker}}\nolimits}
\newcommand{\Aut}{\mathop{\mathrm{Aut}}\nolimits}
\newcommand{\spn}{\mathop{\mathrm{span}}\nolimits}
\newcommand{\Mat}{\mathop{\mathrm{M}}\nolimits}
\newcommand{\UT}{\mathop{\mathrm{UT}}\nolimits}
\newcommand{\DT}{\mathop{\mathrm{DT}}\nolimits}
\newcommand{\GL}{\mathop{\mathrm{GL}}\nolimits}
\newcommand{\spa}{\mathop{\mathrm{span}}\nolimits}
\newcommand{\supp}{\mathop{\mathrm{supp}}\nolimits}
\newcommand{\rank}{\mathop{\mathrm{rank}}\nolimits}
\newcommand{\idd}{\mathop{\mathrm{id}}\nolimits}
\newcommand{\ran}{\mathop{\mathrm{Ran}}\nolimits}
\newcommand{\dr}{ \mathop{\mathrm{d}_{{\mathbb R}^k}}\nolimits} 
\newcommand{\dc}{ \mathop{\mathrm{d}_{\cc}}\nolimits} \newcommand{\drr}{ \mathop{\mathrm{d}_{\rr}}\nolimits} 
\newcommand{\zin}{\mathbb{Z}}
\newcommand{\rr}{\mathbb{R}}
\newcommand{\cc}{\mathbb{C}}
\newcommand{\ww}{\mathbb{W}}
\newcommand{\nan}{\mathbb{N}}\newcommand{\bb}{\mathbb{B}}
\newcommand{\aaa}{\mathbb{A}}\newcommand{\ee}{\mathbb{E}}
\newcommand{\pp}{\mathbb{P}}
\newcommand{\wks}{\mathop{\mathrm{wk^*-}}\nolimits}
\newcommand{\mk}{{\Mat_k}}
\newcommand{\mnz}{\Mat_{n_0}}
\newcommand{\mn}{\Mat_{n}}
\newcommand{\dist}{\dc}
\newcommand{\braket}[2]{\left\langle#1,#2\right\rangle}
\newcommand{\ketbra}[2]{\left\vert #1\right \rangle \left\langle #2\right\vert}
\newcommand{\abs}[1]{\left\vert#1\right\vert}
\newtheorem{nota}{Notation}[section]
\def\qed{{\unskip\nobreak\hfil\penalty50
\hskip2em\hbox{}\nobreak\hfil$\square$
\parfillskip=0pt \finalhyphendemerits=0\par}\medskip}
\def\proof{\trivlist \item[\hskip \labelsep{\bf Proof.\ }]}
\def\endproof{\null\hfill\qed\endtrivlist\noindent}
\def\proofof[#1]{\trivlist \item[\hskip \labelsep{\bf Proof of #1.\ }]}
\def\endproofof{\null\hfill\qed\endtrivlist\noindent}
\newcommand{\ZZ}{\bbZ_2\times\bbZ_2}
\newcommand{\SSS}{\mathcal{S}}
\newcommand{\cs}{S}
\newcommand{\ct}{t}
\newcommand{\hS}{S}
\newcommand{\vv}{{\boldsymbol v}}

\title{A $\bbZ_2$-index of symmetry protected topological phases with reflection symmetry for quantum spin chains}

\author{Yoshiko Ogata \thanks{ Graduate School of Mathematical Sciences
The University of Tokyo, Komaba, Tokyo, 153-8914, Japan
Supported in part by
the Grants-in-Aid for
Scientific Research, JSPS.}}
\maketitle

\begin{abstract}
For the classification of SPT phases, defining an index is a central problem.
In the famous paper \cite{po},
Pollmann, Tuner, Berg, and Oshikawa introduced $\bbZ_2$-indices for injective matrix products states 
(MPS) which have either $\bbZ_2\times \bbZ_2$ 
dihedral group (of $\pi$-rotations about $x$, $y$, and $z$-axes) symmetry,
time-reversal symmetry,
or reflection symmetry.
The first two are on-site symmetries. 
In \cite{Ogata4}, an index for on-site symmetries, which generalizes the
index in \cite{po}, was introduced
for general unique gapped ground state phases in quantum spin chains.
It was proved that the index is an invariant of the $C^1$-classification of SPT phases.
The index for the reflection symmetry, which is not an on-site symmetry, was left as an open question.
In this paper, we introduce a $\bbZ_2$-index for the reflection symmetric
unique gapped ground state phases, and complete the generalization problem of
index by Pollmann et.al.
We also show that the index is an invariant of the $C^1$-classification.

\end{abstract}
\section{Introduction}
Classification of unique gapped ground states in quantum many-body systems is an important problem in modern condensed matter physics and quantum information science.  In one dimension, it is believed that all unique gapped ground states belong to a single phase, in the sense that any two such ground states can be smoothly connected with each other thorough a series of models with unique gapped ground states.  This conjecture was verified for frustration free models with uniformly bounded degeneracy \cite{Ogata3}.  Motivated by the study of the Haldane phenomena in antiferromagnetic quantum spin chains, Gu and Wen  \cite{GuWen2009} proposed a finer classification based on the notion of symmetry protected topological (SPT) phase. Instead of considering the whole family of Hamiltonians,
we fix some symmetry and
consider the set of all Hamiltonians with a unique gapped ground state in the bulk, satisfying the symmetry.
We then say such two Hamiltonians are equivalent if
they can be connected to each other via a continuous path of symmetric Hamiltonians
with unique gapped ground state.
It can be possible that two symmetric Hamiltonians which can be connected via a path of
 {\it non-symmetric} gapped Hamiltonians
fails to be connected
via a path of {\it symmetric} gapped Hamiltonians.
A Hamiltonian
 which can not be connected to trivial Hamiltonians (i.e, Hamiltonians with on-site interactions) via a symmetry preserving path belongs to the SPT phase.
The question is how to show some Hamiltonian is in the SPT phase.
One way should be defining some index which is stable along the path of symmetric gapped Hamiltonians.
If some Hamiltonian has an index which is different from that of trivial phases,
the Hamiltonian should be in a SPT phase.
Finding such an index is a non-trivial important question for the classification problem of SPT phases.

In the famous paper \cite{po},
Pollmann, Tuner, Berg, and Oshikawa introduced $\bbZ_2$-indices for injective matrix products states 
(MPS) which have either $\bbZ_2\times \bbZ_2$ 
dihedral group (of $\pi$-rotations about $x$, $y$, and $z$-axes) symmetry,
time-reversal symmetry,
or reflection symmetry.
The first two are on-site symmetry, and the index is the cohomology class of
some projective representation associated to the symmetric injective MPS.
It was claimed there, that as the index takes discrete values,
it should be stable under the continuous path of gapped Hamiltonians.

The $\bbZ_2$-index beyond the framework of matrix product state was recently introduced by Tasaki
for systems satisfying on-site $U(1)$-symmetry together with one of $\bbZ_2\times\bbZ_2$-onsite symmetry/reflection symmetry/time reversal symmetry \cite{ta}.
He showed that these are actually invariant of the classification.

In \cite{Ogata4}, we extended the index of Pollmann et.al. for {\it on-site} symmetry with full generality 
(without asking $U(1)$-symmetry). 
We also proved  that our index is an invariant of the $C^1$-classification of SPT phases.
The index for the reflection symmetry, which is not an on-site symmetry, was left as an open question.
In this paper, we introduce a $\bbZ_2$-index for the reflection symmetric
unique gapped ground state phase, and complete the generalization problem of
index by Pollmann et.al.

Now let us state our result more in details. 
For a Hilbert space $\caH$, $B(\caH)$ denotes the set of all bounded operators on $\caH$.
If $V:\caH_1\to\caH_2$ is a linear/anti-linear map from a Hilbert space $\caH_1$ to 
another Hilbert space $\caH_2$,
then $\Ad (V):B(\caH_1)\to B(\caH_2)$ denotes the map
$\Ad(V)(x):=V x V^*$, $x\in B(\caH_1)$.

We start by summarizing standard setup of quantum spin chains on the infinite chain \cite{BR1,BR2}.
Throughout this paper, we fix some $2\le d\in\nan$.
We denote the algebra of $d\times d$ matrices by $\Mat_{d}$.
We denote the standard basis of $\cc^{d}$ by $\{\psi_\mu\}_{\mu=1,\ldots,d}$, and 
set $e_{\mu,\nu}=\ket{\psi_\mu}\bra{\psi_\nu}$ for each $\mu,\nu=1,\ldots,d$.

We denote the set of all finite subsets in ${\bbZ}$ by ${\mathfrak S}_{\bbZ}$.
For each $n\in\nan$, we set $\Lambda_n:=[-n,n]\cap \bbZ$.
For each $z\in\bbZ$,  let $\caA_{\{z\}}$ be an isomorphic copy of $\Mat_{d}$, and for any finite subset $\Lambda\subset\bbZ$, let $\caA_{\Lambda} = \otimes_{z\in\Lambda}\caA_{\{z\}}$, which is the local algebra of observables in $\Lambda$. 
For finite $\Lambda$, the algebra $\caA_{\Lambda} $ can be regarded as the set of all bounded operators acting on
the Hilbert space $\otimes_{z\in\Lambda}{\bbC}^{d}$.
We use this identification freely.
If $\Lambda_1\subset\Lambda_2$, the algebra $\caA_{\Lambda_1}$ is naturally embedded in $\caA_{\Lambda_2}$ by tensoring its elements with the identity. 
The algebra $\caA_{R}$ (resp. $\caA_L$) representing the half-infinite chain
is given as the inductive limit of the algebras $\caA_{\Lambda}$ with $\Lambda\in{\mathfrak S}_{\bbZ}$, $\Lambda\subset[0,\infty)$
(resp. $\Lambda\subset(-\infty -1]$). 
The algebra $\caA$, representing the two sided infinite chain
is given as the inductive limit of the algebras $\caA_{\Lambda}$ with $\Lambda\in{\mathfrak S}_{\bbZ}$. 
Note that $\caA_{\Lambda}$ for $\Lambda\in {\mathfrak S}_{\bbZ}$,
$\caA_L$, and $\caA_R$ can be regarded naturally as subalgebras of
$\caA$.
We denote the set of local observables by $\caA_{\rm loc}=\bigcup_{\Lambda\in{\mathfrak S}_\bbZ}\caA_{\Lambda}
$.
We denote by $\beta_x$ the automorphisms on $\caA$ representing the space translation by  $x\in\bbZ$.
By $Q^{(j)}$, $j\in\bbZ$, we denote the element of $\caA$ with $Q\in\Mat_d$ in the $j$-th component of the tensor product of $\caA$ and the unit in any other component.
The reflection $\gamma$ is the unique  $*$-automorphism on $\caA$ 
which satisfies
\begin{align}
\gamma\lmk Q^{(j)}\rmk=Q^{(-j-1)},\quad \text{for all}\; Q\in\Mat_{d}\;\text{and}\; j \in\bbZ.
\end{align}
From $\gamma$, we define $*$-isomorphisms $\gamma_{R\to L}:\caA_R\to\caA_{L}$ and 
$\gamma_{L\to R}:\caA_L\to\caA_{R}$ by
\begin{align}
\gamma\lmk
\unit_{\caA_L}\otimes A
\rmk
=\gamma_{R\to L}(A)\otimes \unit_{\caA_R},\quad A\in\caA_R,
\end{align}
and 
\begin{align}
\gamma\lmk
B\otimes \unit_{\caA_R}
\rmk
=\unit_{\caA_L}\otimes\gamma_{L\to R}(B), \quad B\in\caA_L.
\end{align}

We introduce the $\bbZ_2$-index for reflection invariant  pure states satisfying the split property 
(see Definition \ref{split}) in Section \ref{indexsec} Definition \ref{index}. 
Since a unique gapped ground state of a reflection invariant Hamiltonians
satisfies these properties, this defines an index for
such systems. (See Section \ref{c1main}.)
The definition of the index is simple.
Let $\omega$ be a reflection invariant  pure state which satisfies the split property with respect to
$\caA_L$ and $\caA_R$. We then can find its GNS triple of the form
$(\caH_\omega\otimes \caH_\omega, \pi_\omega\circ\gamma_{L\to R}\otimes \pi_\omega, \Omega_\omega)$ (where $\pi_\omega$ is an irreducible representation of $\caA_R$)
and a unitary operator $\Gamma_\omega$ on $\caH_\omega\otimes \caH_\omega$ implementing 
$\gamma$.
(Lemma \ref{veq}.)
From the structure, we either have $\Gamma_\omega(\xi\otimes \eta)=\eta\otimes \xi$
for all $\xi,\eta\in\caH_\omega$
or $\Gamma_\omega(\xi\otimes \eta)=-\eta\otimes \xi$ for all $\xi,\eta\in\caH_\omega$.
(Theorem \ref{indint}.)
This sign $\sigma_\omega=\pm1$ is our $\bbZ_2$-index.
The same index 
can be obtained from
the Tomita-Takesaki modular conjugation.:
For the above GNS triple of $\omega$, 
let $\unit\otimes s_\omega$ be the support projection of $\Omega_\omega$ in
$\unit\otimes B(\caH_\omega)$.
Then we can consider modular conjugation $J_\omega$ associated to 
$s_\omega\otimes B(s_\omega \caH_\omega)$
 and $\Omega_\omega$.
(Lemma \ref{thir}.)
There exists an anti-unitary
${\theta} :s_\omega\caH_\omega\to s_\omega\caH_\omega$ such that
\begin{align}
J_\omega\lmk s_\omega\otimes x\rmk J_\omega^*
=\theta x\theta^*\otimes s_\omega,\quad J_\omega\lmk x\otimes s_\omega\rmk J_\omega^*
=s_\omega\otimes \theta x\theta^*,
\end{align}
for all $x\in B(s_\omega\caH_\omega)$,
(Proposition \ref{kdef}.)
This ${\theta}$ satisfies ${\theta}^2=\kappa_\omega s_\omega$ with
some $\kappa_\omega\in\{-1,1\}$, because of $J_\omega^2=s_\omega\otimes s_\omega$. 
It turns out that $\kappa_\omega$ coincides with our $\bbZ_2$-index $\sigma_\omega$. 
(Theorem \ref{coffee}.)
This $\theta$ is related to the Schmidt decomposition of $\Omega_\omega$.(Lemma \ref{thir}.)
Therefore, considering the Schmidt decomposition can be one way to calculate the index $\sigma_\omega$.
(Remark \ref{su}.)

As stated above, for reflection invariant injective matrix product states,
a $\bbZ_2$-index was introduced in \cite{po}.
It turns out that our $\bbZ_2$-index restricted to such states
coincides with that of \cite{po}.
This is proven in Section \ref{poco} using the relation $\kappa_\omega=\sigma_\omega$.

The $\bbZ_2$-index $\sigma_\omega$ is invariant under 
automorphic equivalence via an automorphism
which allows a reflection invariant decomposition.:
\begin{idefn}
We say an automorphism $\alpha$ of $\caA$ allows a reflection invariant 
decomposition if there is an automorphisms $\alpha_R$
on $\caA_R$, and a unitary $W$ in $\caA$
such that
\begin{align}
\tilde\alpha^{-1}\circ\alpha=\Ad(W),\quad \gamma(W)=W,
\end{align}
where
\begin{align}
\tilde\alpha:=\lmk
\gamma_{R\to L} \circ\alpha_R\circ \gamma_{L\to R}\rmk\otimes \alpha_R.
\end{align}
\end{idefn}
From the definition, we can show the following:
\begin{ithm}(See Theorem \ref{aeii} for details.)
Let $\omega_0,\omega_1$ be reflection invariant pure states satisfying the split property.
Suppose that $\omega_0$ and $\omega_1$ 
are automorphic equivalent via an automorphism
which allows a reflection invariant decomposition.
Then the $\bbZ_2$-indices $\sigma_{\omega_0}$,  $\sigma_{\omega_1}$ associated to
$\omega_0$, $\omega_1$ are equal.
\end{ithm}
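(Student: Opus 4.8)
The plan is to split the given automorphism into its two structural factors and track the effect of each on the GNS data from which the index is read off. Writing $\alpha=\tilde\alpha\circ\Ad(W)$ as in the definition, and adopting the convention $\omega_1=\omega_0\circ\alpha$, I would introduce the intermediate state $\omega':=\omega_0\circ\tilde\alpha$, so that $\omega_1=\omega'\circ\Ad(W)$. First I would verify that $\omega'$ is again a pure, reflection invariant state with the split property, so that $\sigma_{\omega'}$ is even defined: purity is immediate since $\tilde\alpha$ is an automorphism; the split property is preserved because $\tilde\alpha=\alpha_L\otimes\alpha_R$, with $\alpha_L:=\gamma_{R\to L}\circ\alpha_R\circ\gamma_{L\to R}$, respects the decomposition $\caA=\caA_L\otimes\caA_R$; and reflection invariance follows from the identity $\gamma\circ\tilde\alpha=\tilde\alpha\circ\gamma$, which one checks directly from the shape of $\alpha_L$ together with $\gamma_{L\to R}\circ\gamma_{R\to L}=\id$. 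The whole statement then reduces to the two claims $\sigma_{\omega'}=\sigma_{\omega_0}$ and $\sigma_{\omega_1}=\sigma_{\omega'}$.

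For the first claim I would compute the GNS representation of $\omega'$ directly from that of $\omega_0$. Write $\Pi_0:=\pi_{\omega_0}\circ\gamma_{L\to R}\otimes\pi_{\omega_0}$ for the representation in the standard triple $(\caH_{\omega_0}\otimes\caH_{\omega_0},\Pi_0,\Omega_{\omega_0})$ of Lemma \ref{veq}. Then $\omega'(x)=\la\Omega_{\omega_0},(\Pi_0\circ\tilde\alpha)(x)\,\Omega_{\omega_0}\ra$, so $(\caH_{\omega_0}\otimes\caH_{\omega_0},\Pi_0\circ\tilde\alpha,\Omega_{\omega_0})$ is a GNS triple for $\omega'$. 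Using $\gamma_{L\to R}\circ\gamma_{R\to L}=\id$ one simplifies $\Pi_0\circ\tilde\alpha=(\tilde\pi\circ\gamma_{L\to R})\otimes\tilde\pi$ with $\tilde\pi:=\pi_{\omega_0}\circ\alpha_R$, which is again an irreducible representation of $\caA_R$ because $\alpha_R$ is an automorphism. Hence the triple for $\omega'$ is of exactly the standard form of Lemma \ref{veq}, on the same Hilbert space with the same tensor factorization and the same cyclic vector $\Omega_{\omega_0}$. The crucial point is that the implementing unitary is unchanged: from $\gamma\circ\tilde\alpha=\tilde\alpha\circ\gamma$ one gets $(\Pi_0\circ\tilde\alpha)(\gamma(x))=\Gamma_{\omega_0}(\Pi_0\circ\tilde\alpha)(x)\Gamma_{\omega_0}^*$, while $\Gamma_{\omega_0}\Omega_{\omega_0}=\Omega_{\omega_0}$, so $\Gamma_{\omega_0}$ implements $\gamma$ and fixes the cyclic vector of $\omega'$ as well. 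Since $\omega'$ is pure its GNS representation is irreducible, the commutant is trivial, and there is a \emph{unique} $\gamma$-implementing unitary fixing the cyclic vector; therefore $\Gamma_{\omega'}=\Gamma_{\omega_0}$. As $\sigma$ is read off via Theorem \ref{indint} from the action of this unitary on the unchanged product $\caH_{\omega_0}\otimes\caH_{\omega_0}$, we conclude $\sigma_{\omega'}=\sigma_{\omega_0}$.

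For the second claim I would use, rather than conjugating the representation, the GNS triple for $\omega_1=\omega'\circ\Ad(W)$ given by $\bigl(\caH_{\omega'},\Pi',\Pi'(W)^*\Omega_{\omega'}\bigr)$, where $(\caH_{\omega'},\Pi',\Omega_{\omega'})$ is the standard triple for $\omega'$ just produced; this keeps the representation $\Pi'$ (hence the standard form and the tensor factorization) intact and only moves the cyclic vector. The hypothesis $\gamma(W)=W$ is precisely what makes the old implementing unitary survive: from $\Gamma_{\omega'}\Pi'(W)^*\Gamma_{\omega'}^*=\Pi'(\gamma(W)^*)=\Pi'(W)^*$ together with $\Gamma_{\omega'}\Omega_{\omega'}=\Omega_{\omega'}$ one obtains $\Gamma_{\omega'}\bigl(\Pi'(W)^*\Omega_{\omega'}\bigr)=\Pi'(W)^*\Omega_{\omega'}$, so $\Gamma_{\omega'}$ implements $\gamma$ and fixes the new cyclic vector. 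By the same uniqueness argument, now using purity of $\omega_1$, we get $\Gamma_{\omega_1}=\Gamma_{\omega'}$ and hence $\sigma_{\omega_1}=\sigma_{\omega'}$. Combining the two claims gives $\sigma_{\omega_1}=\sigma_{\omega_0}$.

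The individual steps are short once the correct GNS realizations are chosen, so I expect the main obstacle to be bookkeeping rather than a deep difficulty: verifying carefully that $\Pi_0\circ\tilde\alpha$ collapses to the standard tensor form, so that Theorem \ref{indint} genuinely applies and the sign is extracted in the same way, and confirming that purity, the split property, and reflection invariance persist for $\omega'$ so that $\sigma_{\omega'}$ is defined at all. The conceptual heart is the uniqueness of the $\gamma$-implementing unitary fixing the cyclic vector, which reduces each claim to the single observation that $\Gamma_{\omega_0}$ continues to do the job; this is exactly where $\gamma(W)=W$ and $\gamma\circ\tilde\alpha=\tilde\alpha\circ\gamma$ enter, and the latter is precisely why the definition demands the reflection-symmetric shape $\alpha_L=\gamma_{R\to L}\circ\alpha_R\circ\gamma_{L\to R}$.
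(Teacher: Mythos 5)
Your proof is correct and takes essentially the same route as the paper: you produce the reflection-split representation $\lmk \caH_{\omega_0},\,\pi_{\omega_0}\circ\alpha_R,\,\lmk\hat\pi_{\omega_0}\circ\tilde\alpha\rmk(W^*)\Omega_{\omega_0},\,\Gamma_{\omega_0}\rmk$ of $\omega_1$, using exactly the two facts $\gamma\circ\tilde\alpha=\tilde\alpha\circ\gamma$ and $\gamma(W)=W$ to keep $\Gamma_{\omega_0}$ as the implementing unitary, and then read off the index from the unchanged $\Gamma_{\omega_0}$ via the representation-independence in Theorem \ref{indint}. The only organizational difference is that you factor the argument through the intermediate state $\omega'=\omega_0\circ\tilde\alpha$, whereas the paper verifies the four conditions of Definition \ref{rsr} for $\omega_1$ in a single pass.
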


Recalling that a unique gapped ground state is pure and satisfies the split property (see Theorem \ref{matsui}), our $\bbZ_2$-index can be understood as an index of 
of reflection invariant Hamiltonians with unique gapped ground states.
It turns out that this $\bbZ_2$-index is an invariant of the $C^1$-classification.:
\begin{icor}(See Theorem \ref{c1t} for more precise statement)
Let us consider 
a $C^1$-path of interactions, 
in the reflection invariant unique gapped ground state phase.
Suppose that if we associate some suitable boundary conditions along the path, they give
 local Hamiltonians which are gapped for an
increasing sequence of finite boxes. (See Definition \ref{boundary}.)
Then the $\bbZ_2$-index $\sigma_\omega$ does not change along the path.
\end{icor}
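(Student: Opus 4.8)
The plan is to reduce the statement to the automorphic-equivalence theorem (Theorem \ref{aeii}) by producing, for nearby parameters along the path, an automorphism relating the two ground states that allows a reflection invariant decomposition. Since the $\bbZ_2$-index $\sigma_\omega$ takes only the values $\pm1$, it suffices to prove that it is \emph{locally constant} in the path parameter; constancy on the connected interval then follows. So I fix a parameter $s_0$, let $\omega_{s}$ denote the unique gapped ground state of the interaction at parameter $s$ (which is reflection invariant by uniqueness and satisfies the split property by Theorem \ref{matsui}), and for $s$ close to $s_0$ try to exhibit an automorphism $\alpha$ of $\caA$ with $\omega_{s}=\omega_{s_0}\circ\alpha$ admitting the decomposition of the Definition above.

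First I would construct the connecting automorphism by the standard quasi-adiabatic continuation (spectral flow) of Hastings--Wen and Bachmann--Michalakis--Nachtergaele--Sims. The gap hypothesis, supplied here through the finite-box local Hamiltonians with uniformly open spectral gap (Definition \ref{boundary}), yields a time-dependent generator $D_{s}$ built from the derivative of the interaction convolved with the quasi-adiabatic kernel, whose flow $\alpha$ satisfies $\omega_{s}=\omega_{s_0}\circ\alpha$ in the thermodynamic limit. Because the entire path is reflection invariant, the derivative of the interaction is reflection invariant, and the kernel construction commutes with $\gamma$ (reflection is a symmetry of the Heisenberg dynamics); hence the generator inherits reflection invariance, its quasi-local terms being permuted by the reflection.

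The heart of the argument is to convert this $\gamma$-invariance into a reflection invariant decomposition. I would truncate $D_{s}$ into a strictly right-supported generator $\hat D_R$ on $\caA_R$, a strictly left-supported generator, and a remainder collecting the quasi-local tails and the terms straddling the bond between $-1$ and $0$. Since $\hat D_R$ and its reflection commute, their flows factorize exactly, integrating to $\alpha_R$ on $\caA_R$ and to $\gamma_{R\to L}\circ\alpha_R\circ\gamma_{L\to R}$ on the left, that is, to $\tilde\alpha=(\gamma_{R\to L}\circ\alpha_R\circ\gamma_{L\to R})\otimes\alpha_R$. The discrepancy between $\alpha$ and $\tilde\alpha$ is controlled by the remainder, and Lieb--Robinson bounds let me assemble it into an inner automorphism $\Ad(W)$ with $W\in\caA$, so that $\tilde\alpha^{-1}\circ\alpha=\Ad(W)$. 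Because the truncation is performed symmetrically about the reflection axis and the generator is $\gamma$-invariant, one checks $\gamma(W)=W$. Matching the cut to the reflection axis, and verifying that the Lieb--Robinson tail converges to a \emph{genuine} reflection invariant unitary of $\caA$ rather than a mere approximant, is the step I expect to be the main obstacle; it is the reflection analogue of the factorization used for on-site symmetries in \cite{Ogata4}, but the constraint tying the left factor to the right factor via $\gamma$ must be tracked throughout.

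Finally, with $\alpha$ exhibited as an automorphism allowing a reflection invariant decomposition and $\omega_{s}=\omega_{s_0}\circ\alpha$, Theorem \ref{aeii} gives $\sigma_{\omega_{s}}=\sigma_{\omega_{s_0}}$. This establishes local constancy of $\sigma_\omega$, and since the index is $\pm1$-valued and the parameter interval is connected, $\sigma_\omega$ is constant along the entire path.
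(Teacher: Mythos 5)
Your proposal follows essentially the same route as the paper: the paper's Proposition \ref{aep} (proved by the method of Proposition 3.5 of \cite{Ogata4}, which is precisely the spectral-flow construction and left/right/inner decomposition with $\gamma(W)=W$ that you sketch) yields an automorphism allowing a reflection invariant decomposition with $\varphi_1=\varphi_0\circ\alpha$, and Theorem \ref{aeii} then gives $\hat\sigma_{\Phi(0)}=\hat\sigma_{\Phi(1)}$. The only difference is cosmetic: since Condition B supplies a uniform gap along the whole path, the paper connects the endpoint ground states directly, making your local-constancy-plus-connectedness reduction unnecessary (though harmless).
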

This can be shown from the fact that ground states along the $C^1$-path
are mutually automorphic equivalent via an automorphism which allows a reflection invariant decomposition.
The boundary conditions in the Corollary can be arbitrary, as long as they guarantee the gap. We may take it as periodic boundary condition, for example.
Furthermore, the boundary condition itself does not need to be reflection invariant.

Our theorem, along with results in [PTOB1,PTOB2,CGW,Tas2] about matrix product states, shows that 
AKLT interaction and trivial interaction belong to different reflection symmetric unique gapped ground state phases.
In other words, 
AKLT interaction and trivial interaction can never be connected by a $C^1$-path of reflection invariant 
interactions without without closing the gap.

\section{The $\bbZ_2$-index associated to the reflection symmetric split states}\label{indexsec}
We introduce $\bbZ_2$-index for reflection invariant pure state satisfying the split property.
Let us first recall the definition of the split property.
Here we give the following definition, which is most suitable for our purpose.
It corresponds to the standard definition \cite{dl} in our setting (see \cite{Matsui2}).
\begin{defn}\label{split}
Let $\varphi$ be a pure state on $\caA$. Let $\varphi_R$ be the restriction of
$\varphi$ to $\caA_R$, and $(\caH_{\varphi_R},\pi_{\varphi_R},\Omega_{\varphi_R})$ be the GNS triple of $\varphi_R$.
We say $\varphi$ satisfies the split property with respect to $\caA_L$ and $\caA_R$,
if the von Neumann algebra $\pi_{\varphi_R}(\caA_{R})''$ is a type I factor.
\end{defn}
Recall that a type I factor is isomorphic to $B(\caK)$, the set of all bounded operators 
on a  Hilbert space $\caK$. 
See \cite{takesaki}.
We consider following type of  GNS-triple for a reflection invariant pure state which satisfies the split property.
\begin{defn}\label{rsr}
Let $\omega$ be
a reflection invariant pure state
on $\caA$ which satisfies the split property with respect to $\caA_R$ and
$\caA_L$. 
We say $(\caH_\omega, \pi_\omega, \Omega_\omega, \Gamma_\omega)$  is a reflection-spilt representation associated to
$\omega$ if setting $\hat \caH_\omega:=\caH_\omega\otimes \caH_\omega$ and
$\hat\pi_\omega:=\lmk \pi_\omega\circ\gamma_{L\to R}\rmk\otimes \pi_\omega$, following hold:
\begin{enumerate}
\item $(\caH_\omega, \pi_\omega)$ is an irreducible representation of $\caA_R$, 
\item  $\Omega_\omega$ is a unit vector of $\hat \caH_\omega$,
\item  $(\hat \caH_\omega, 
\hat\pi_\omega, \Omega_\omega)$ is a GNS triple of $\omega$,
\item $\Gamma_\omega$ is the unique unitary operator on $\hat \caH_\omega$ such that
\begin{align}\label{guni}
\Gamma_\omega \hat\pi_\omega(A)\Omega_\omega=\hat\pi_\omega\circ\gamma(A)\Omega_\omega,\quad A\in\caA.
\end{align}
\end{enumerate}
\end{defn}
\begin{rem}\label{gm2}
Because of $\gamma^2=\id$, from the definition of $\Gamma_\omega$  (\ref{guni}),
we have $\Gamma_\omega^2=\unit_{\caH_\omega}$. 
\end{rem}
\begin{rem}
For the rest of this paper, for any reflection-spilt representation 
$(\caH_\omega, \pi_\omega, \Omega_\omega, \Gamma_\omega)$
we use the notation $\hat \caH_\omega:=\caH_\omega\otimes \caH_\omega$ and
$\hat\pi_\omega:=\lmk \pi_\omega\circ\gamma_{L\to R}\rmk\otimes \pi_\omega$.
\end{rem}

\begin{lem}\label{veq}
For
any reflection invariant pure state $\omega$ on $\caA$ which satisfies the split property with respect to $\caA_R$ and
$\caA_L$, there exists 
 a reflection-spilt representation $(\caH_\omega, \pi_\omega, \Omega_\omega, \Gamma_\omega)$  associated to
 $\omega$.
 Furthermore, if $(\bar \caH_\omega, \bar \pi_\omega, \bar\Omega_\omega, \bar\Gamma_\omega)$
is another reflection-spilt representation of $\omega$,
there exists a unitary $V:\caH_\omega\to \bar\caH_\omega$ such that
such that 
\begin{align}
&\Ad(V)\circ\pi_\omega(A)
=\bar\pi_\omega(A),\quad A\in\caA_R,\label{fst}\\
&\bar\Gamma_\omega=\Ad \lmk V\otimes V\rmk\lmk \Gamma_\omega \rmk,\label{scd}\\
&(V\otimes V)\Omega_\omega=\bar \Omega_\omega\label{trd}.
\end{align}
\end{lem}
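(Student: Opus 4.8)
The plan is to prove existence and uniqueness separately, extracting both from the spatial structure of a pure split state together with the self-adjoint unitary implementing $\gamma$.

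\emph{Existence.} I would start from the GNS triple $(\caH,\pi,\Omega)$ of $\omega$ on $\caA$, which is irreducible since $\omega$ is pure, so $\pi(\caA)''=B(\caH)$. By the split property $\pi(\caA_R)''$ is a type I factor, so $\caH$ factors as $\caH\cong\caH_1\otimes\caH_2$ with $\pi(\caA_R)''=\unit\otimes B(\caH_2)$; since $\pi(\caA_L)''\subseteq\pi(\caA_R)'=B(\caH_1)\otimes\unit$ while $\pi(\caA_L)''\vee\pi(\caA_R)''=B(\caH)$ by purity, a commutant computation forces $\pi(\caA_L)''=B(\caH_1)\otimes\unit$. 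I then set $\caH_\omega:=\caH_2$ and define the irreducible representation $\pi_\omega$ of $\caA_R$ by $\pi(\unit_{\caA_L}\otimes A)=\unit\otimes\pi_\omega(A)$. To identify the left leg with $\caH_\omega$ I use the unique self-adjoint unitary $\Gamma$ implementing $\gamma$ with $\Gamma\Omega=\Omega$ (it exists because $\omega\circ\gamma=\omega$, and $\Gamma^2=\unit$ as $\gamma^2=\id$). Since $\Gamma\pi(\caA_R)''\Gamma=\pi(\caA_L)''$, the map $x\mapsto\Gamma(\unit\otimes x)\Gamma$ is a normal $*$-isomorphism of $B(\caH_2)$ onto $B(\caH_1)$, hence spatial: $\Gamma(\unit\otimes x)\Gamma=wxw^*\otimes\unit$ for a unitary $w\colon\caH_2\to\caH_1$. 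Conjugating the relation $\Gamma\pi(B\otimes\unit_{\caA_R})\Gamma=\pi(\unit_{\caA_L}\otimes\gamma_{L\to R}(B))$ then gives $\pi(B\otimes\unit_{\caA_R})=w\,\pi_\omega(\gamma_{L\to R}(B))\,w^*\otimes\unit$, so the unitary $U:=w\otimes\unit\colon\caH_\omega\otimes\caH_\omega\to\caH$ conjugates $\pi$ into $\hat\pi_\omega=(\pi_\omega\circ\gamma_{L\to R})\otimes\pi_\omega$. Setting $\Omega_\omega:=U^*\Omega$ produces a GNS triple of $\omega$ on $\hat\caH_\omega$ of the required form, and the unitary $\Gamma_\omega$ of Definition \ref{rsr}(4) exists and is unique by the usual GNS argument applied to (\ref{guni}).

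\emph{Uniqueness, splitting of the intertwiner.} Given a second reflection-split representation, both $(\hat\caH_\omega,\hat\pi_\omega,\Omega_\omega)$ and $(\bar{\hat\caH}_\omega,\bar{\hat\pi}_\omega,\bar\Omega_\omega)$ are GNS triples of the same state $\omega$, so there is a unitary $W\colon\hat\caH_\omega\to\bar{\hat\caH}_\omega$ with $\Ad(W)\circ\hat\pi_\omega=\bar{\hat\pi}_\omega$ and $W\Omega_\omega=\bar\Omega_\omega$. Restricting the intertwining relation to $\unit_{\caA_L}\otimes\caA_R$ and to $\caA_L\otimes\unit_{\caA_R}$ and taking weak closures — using irreducibility of $\pi_\omega$ and $\bar\pi_\omega$, so that each leg generates all of $B$ of the relevant space — gives $\Ad(W)(\unit\otimes B(\caH_\omega))=\unit\otimes B(\bar\caH_\omega)$ and $\Ad(W)(B(\caH_\omega)\otimes\unit)=B(\bar\caH_\omega)\otimes\unit$. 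The standard tensor-splitting lemma (a unitary carrying each tensor factor onto the corresponding factor is a tensor product up to a scalar, which can be absorbed) then yields $W=V_1\otimes V_2$ for unitaries $V_1,V_2\colon\caH_\omega\to\bar\caH_\omega$.

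\emph{Uniqueness, the crux and conclusion.} The decisive point is to force $V_1$ and $V_2$ to agree up to a phase, and this is where the reflection structure enters. Reading off the two legs of $\Ad(W)\circ\hat\pi_\omega=\bar{\hat\pi}_\omega$ gives $\Ad(V_2)\circ\pi_\omega=\bar\pi_\omega$ on $\caA_R$ from the right factor, and $\Ad(V_1)\circ(\pi_\omega\circ\gamma_{L\to R})=\bar\pi_\omega\circ\gamma_{L\to R}$ from the left factor; since $\gamma_{L\to R}\colon\caA_L\to\caA_R$ is onto, the latter is again $\Ad(V_1)\circ\pi_\omega=\bar\pi_\omega$ on $\caA_R$. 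Thus $V_1$ and $V_2$ intertwine the \emph{same} pair of representations of $\caA_R$, so $V_2^*V_1$ commutes with $\pi_\omega(\caA_R)''=B(\caH_\omega)$ and is a phase; rescaling by one of its square roots produces a single unitary $V$ with $V\otimes V=W$. This already gives (\ref{fst}), which is $\Ad(V)\circ\pi_\omega=\bar\pi_\omega$, and (\ref{trd}), which is $W\Omega_\omega=\bar\Omega_\omega$. Finally (\ref{scd}) follows from the uniqueness clause in Definition \ref{rsr}(4): for all $A\in\caA$,
\begin{align}
W\Gamma_\omega W^*\,\bar{\hat\pi}_\omega(A)\bar\Omega_\omega
&=W\Gamma_\omega\hat\pi_\omega(A)\Omega_\omega
=W\hat\pi_\omega(\gamma(A))\Omega_\omega\\
&=\bar{\hat\pi}_\omega(\gamma(A))\bar\Omega_\omega,
\end{align}
which is the defining relation (\ref{guni}) for $\bar\Gamma_\omega$, so $\bar\Gamma_\omega=W\Gamma_\omega W^*=\Ad(V\otimes V)(\Gamma_\omega)$. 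I expect the main obstacle to be exactly this crux step: tensor-splitting delivers only $V_1\otimes V_2$, and it is the fact that the left leg carries $\pi_\omega\circ\gamma_{L\to R}$ — the same irreducible representation of $\caA_R$ as the right leg, transported by $\gamma_{L\to R}$ — that collapses $V_1$ and $V_2$ into a common $V$, which is what the reflection symmetry buys us.
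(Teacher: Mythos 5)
Your proof is correct, and its existence half takes a genuinely different route from the paper's. The paper works upward from the restriction: it takes a GNS triple of $\omega|_{\caA_R}$, uses the type I assumption to get a $*$-isomorphism $\iota:\pi_R(\caA_R)''\to B(\caH_\omega)$, and then has to produce the cyclic vector $\Omega_\omega$ indirectly, via quasi-equivalence of $\hat\pi_\omega$ with the GNS representation of $\omega|_{\caA_L}\otimes\omega|_{\caA_R}$ (Takesaki's tensor-product theorem) and of the latter with $\omega$ itself (Matsui's Proposition 2.2), which yields a density matrix that purity collapses to a rank-one projection. You instead factorize the full GNS space $\caH\cong\caH_1\otimes\caH_2$ directly from type I-ness of the half-chain algebra, identify $\pi(\caA_L)''=B(\caH_1)\otimes\unit$ by a commutant computation, and then use the reflection unitary $\Gamma$ itself to manufacture the unitary $w:\caH_2\to\caH_1$ identifying the two legs, so the cyclic vector comes for free as $U^*\Omega$; this bypasses the quasi-equivalence machinery entirely. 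The uniqueness half is essentially the paper's argument in different packaging: the paper applies Wigner's theorem only to the right leg and observes that $\Ad(\tilde V\otimes \tilde V)$ already reproduces $\Ad(U)$ on $\hat\pi_\omega(\caA)$ (precisely because the left leg carries $\pi_\omega\circ\gamma_{L\to R}$), whereas you split $W=V_1\otimes V_2$ first and then collapse $V_1,V_2$ to a common $V$ up to a phase --- the same crux, namely that reflection forces both legs to carry the same irreducible representation of $\caA_R$.

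One step you should make explicit, since it is the only place your argument leans on an unstated fact: Definition \ref{split} says that the GNS representation of the \emph{restriction} $\omega|_{\caA_R}$ generates a type I factor, while your opening move needs $\pi(\caA_R)''$ to be a type I factor in the GNS representation $(\caH,\pi,\Omega)$ of $\omega$ on all of $\caA$. These are equivalent for pure $\omega$, but that requires a short argument: purity gives $\pi(\caA_R)''\cap\pi(\caA_R)'\subseteq\pi(\caA_L)'\cap\pi(\caA_R)'=\pi(\caA)'=\bbC\unit$, so $\pi(\caA_R)''$ is a factor; consequently, with $P'$ the projection onto the closure of $\pi(\caA_R)\Omega$ (which carries the GNS representation of $\omega|_{\caA_R}$), the induction map $x\mapsto xP'$ has trivial kernel, hence is a $*$-isomorphism of $\pi(\caA_R)''$ onto the von Neumann algebra generated by the restricted GNS representation, and type I-ness transfers. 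With that inserted, your argument is complete.
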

\begin{proof}
Let $(\caH_R, \pi_R,\Omega_R)$ be a GNS triple of $\omega\vert_{\caA_R}$.
Note that from the reflection invariance of $\omega$, 
 $(\caH_R, \pi_R\circ\gamma_{L\to R},\Omega_R)$ is a GNS triple 
 of $\omega\vert_{\caA_L}$.
 Therefore, 
 $(\caH_R\otimes \caH_R, \pi_R\circ\gamma_{L\to R}\otimes \pi_R, \Omega_R\otimes\Omega_R)$
 is a GNS triple of $\omega\vert_{\caA_L}\otimes \omega\vert_{\caA_R}$.
 
 Since $\omega$ satisfies the split property, there exists a Hilbert space 
$\caH_\omega$ and a $*$-isomorphism $\iota:\pi_R(\caA_R)''\to B(\caH_\omega)$.
We introduce a representation $\pi_\omega:=\iota\circ\pi_R$ of $\caA_R$ on $\caH_\omega$.
Since $\iota$ is a $*$-isomorphism, $(\caH_\omega,\pi_\omega)$ is an irreducible representation of $\caA_R$.

Set $\hat \caH_\omega:=\caH_\omega\otimes \caH_\omega$ and let $\hat\pi_\omega:=\lmk \pi_\omega\circ\gamma_{L\to R}\rmk\otimes 
\pi_\omega$ be the representation of
$\caA$ on $\hat \caH_\omega$.
Now we would like to show the existence of a unit vector 
$\Omega_\omega\in \hat\caH_\omega$ 
such that $(\hat\caH_\omega, \hat\pi_\omega, \Omega_\omega)$
is a GNS triple of $\omega$.
Since there is a $*$-isomorphism
 \[
 \iota\otimes \iota:\lmk \pi_R\circ\gamma_{L\to R}(\caA_L)\rmk''\otimes \pi_R(\caA_R)''=
 \lmk \pi_R(\caA_R)\rmk''\otimes \pi_R(\caA_R)''\to 
 B(\caH_\omega\otimes \caH_\omega)
 \]
 (Theorem 5.2 IV of \cite{takesaki}),
the representation $\hat\pi_\omega=\lmk
\iota\otimes \iota \rmk\circ \lmk\pi_R\circ\gamma_{L\to R}\otimes \pi_R\rmk$
is quasi-equivalent to $\pi_R\circ\gamma_{L\to R}\otimes \pi_R$,
the GNS representation of $\omega\vert_{\caA_L}\otimes \omega\vert_{\caA_R}$.
(Theorem 2.4.26 \cite{BR1}.)
On the other hand, as $\omega$ satisfies the split property, 
by the proof of Proposition 2.2 of \cite{Matsui1},
$\omega\vert_{\caA_L}\otimes \omega\vert_{\caA_R}$ is quasi-equivalent to $\omega$. ( In Proposition 2.2 of \cite{Matsui1}, it is assumed that the state is translationally invariant because of the first equivalent condition (i). However, the proof for the equivalence between
(ii) and (iii) does not require translation invariance.)
Hence $\hat\pi_\omega$ is quasi-equivalent to
the GNS representation of $\omega$. (See section 2.4 of \cite{BR1}.)
Therefore, there is a density matrix $\rho$ on $\hat\caH_\omega$ 
such that
\begin{align}
\omega(A)
=\Tr\rho\lmk
\hat \pi_\omega(A)
\rmk,\quad A\in\caA.
\end{align}
Since $\omega$ is pure and $\hat\pi_\omega(\caA)''=B(\hat\caH_\omega)$,
this $\rho$ should be a one rank operator onto a one dimensional space $\bbC\Omega_\omega$,
with some unit vector $\Omega_\omega\in\hat\caH_\omega$.
Because of $\hat\pi_\omega(\caA)''=B(\hat\caH_\omega)$,
$\Omega_\omega$ is cyclic for
$\hat\pi_\omega(\caA)$, and $(\hat\caH_\omega, \hat\pi_\omega, \Omega_\omega)$
is a GNS triple of $\omega$. From the $\gamma$-invariance of $\omega$, there exists $\Gamma_\omega$ satisfying
(\ref{guni}).
(Corollary 2.3.17 of \cite{BR1}.)

Now let us show the uniqueness. Let $(\bar \caH_\omega, \bar \pi_\omega, \bar\Omega_\omega, \bar\Gamma_\omega)$
be another reflection-spilt representation of $\omega$.
Since both of $(\caH_\omega\otimes \caH_\omega, \lmk\pi_\omega\circ\gamma_{L\to R}\rmk\otimes 
\pi_\omega, \Omega_\omega)$
and $(\bar \caH_\omega\otimes \bar \caH_\omega, \lmk\bar \pi_\omega\circ\gamma_{L\to R}\rmk\otimes 
\bar\pi_\omega, \bar\Omega_\omega)$
are GNS triple of $\omega$, there is a unitary $U:\caH_\omega\to \bar\caH_\omega$
such that
\begin{align}\label{gnsu}
U\lmk \pi_\omega\circ\gamma_{L\to R}\otimes \pi_\omega\rmk\lmk A \rmk U^*
= \lmk {\bar \pi}_\omega\circ\gamma_{L\to R}\otimes \bar\pi_\omega\rmk(A),\quad A\in\caA,\;\text{and}\;
U\Omega_\omega=\bar\Omega_\omega.
\end{align}
(Theorem 2.3.16 of \cite{BR1}.)
Restricting the first equation to $\caA_R$, we have
\begin{align}
U\lmk \unit_{\caH_\omega}\otimes \pi_\omega\lmk A \rmk \rmk U^*
=\unit_{\caH_\omega}\otimes \bar \pi_\omega(A),\quad A\in\caA_R.
\end{align}
From this we obtain a $*$-isomorphism $\tau$ from 
$B(\caH_\omega)=\pi_\omega(\caA_R)''$ to $B(\bar\caH_\omega)=\bar\pi_\omega(\caA_R)''$
such that
\begin{align}
U\lmk \unit_{\caH_\omega}\otimes x\rmk U^*
=\unit_{\bar\caH_\omega}\otimes \tau(x),\quad x\in\caB(\caH_\omega),
\end{align}
which satisfies
\begin{align}
\tau\circ \pi_\omega\lmk A \rmk 
= \bar \pi_\omega(A),\quad A\in\caA_R.
\end{align}
Applying Wigner's theorem to $\tau$, there exists a unitary $\tilde V:\caH_\omega\to\bar\caH_\omega$
such that
\begin{align}
\tau(x)=\tilde Vx\tilde V^*,\quad x\in B(\caH_\omega).
\end{align}
In particular, we have
\begin{align}\label{vact}
\Ad(\tilde V)\lmk \pi_\omega(A)\rmk=\bar\pi_\omega(A),\quad A\in\caA_R.
\end{align}
From this and (\ref{gnsu}), we have
\begin{align}
&\Ad(U)\lmk \lmk  \pi_\omega\circ\gamma_{L\to R}\otimes \pi_\omega\rmk\lmk A \rmk \rmk
= \lmk {\bar \pi}_\omega\circ\gamma_{L\to R}\otimes \bar\pi_\omega\rmk(A)\nonumber\\
&=\Ad(\tilde V\otimes \tilde V)\lmk \lmk  \pi_\omega\circ\gamma_{L\to R}\otimes \pi_\omega\rmk\lmk A \rmk\rmk 
,\quad A\in\caA.
\end{align}
Since $\lmk \lmk  \pi_\omega\circ\gamma_{L\to R}\otimes \pi_\omega\rmk\lmk \caA \rmk \rmk''=
B(\caH_\omega\otimes\caH_\omega)$, 
$(\tilde V\otimes \tilde V)^*U=c\unit_{\caH_\omega}$, for some $c\in \bbT$.
Choosing one $c_1\in \bbT$ such that $c_1^2=c$,
we set $V:=c_1\tilde V$.
Then we have $U=V\otimes V$, and
from (\ref{gnsu}), (\ref{trd}) holds.
The property (\ref{scd}) holds from (\ref{gnsu}) and (\ref{trd}) .
By (\ref{vact}), we obtain (\ref{fst}).
\end{proof}
For a reflection invariant pure state which satisfies the split property with respect to $\caA_R$ and
$\caA_L$ we can define an index via the reflection-split representation of $\omega$.
\begin{thm}\label{indint}
Let $\omega$ be
a reflection invariant pure state which satisfies the split property with respect to $\caA_R$ and
$\caA_L$. 
Let $(\caH_\omega, \pi_\omega, \Omega_\omega, \Gamma_\omega)$  be a reflection-spilt representation associated to
$\omega$.
Then we have
\begin{align}\label{gxg}
\Gamma_\omega\lmk \unit_{\caH_\omega}\otimes x\rmk\Gamma_\omega^*
=x\otimes \unit_{\caH_\omega},\quad
x\in B(\caH_\omega).
\end{align}
Furthermore, there exists a constant $\sigma_\omega=\pm 1$
such that
\begin{align}\label{gsd}
\Gamma_\omega\lmk \xi\otimes \eta\rmk
=\sigma_{\omega}\eta\otimes \xi,\quad
\text{for all} \quad \xi,\eta\in\caH_\omega.
\end{align}
This $\sigma_\omega$ is independent of the choice of the reflection-spilt representation
$(\caH_\omega, \pi_\omega, \Omega_\omega, \Gamma_\omega)$.
\end{thm}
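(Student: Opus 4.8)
The plan is to establish the covariance relation (\ref{gxg}) first, then combine it with $\Gamma_\omega^2=\unit_{\hat\caH_\omega}$ from Remark \ref{gm2} to show that $\Gamma_\omega$ followed by the tensor flip is a scalar, and finally read off independence of the representation from the transformation law (\ref{scd}) of Lemma \ref{veq}.

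First I would upgrade the defining relation (\ref{guni}), which only constrains $\Gamma_\omega$ on the cyclic vector, to a genuine intertwiner of representations. For $A,C\in\caA$, using multiplicativity of $\hat\pi_\omega$ and of $\gamma$ together with (\ref{guni}),
\[
\Gamma_\omega\hat\pi_\omega(A)\hat\pi_\omega(C)\Omega_\omega
=\hat\pi_\omega(\gamma(A))\hat\pi_\omega(\gamma(C))\Omega_\omega
=\hat\pi_\omega(\gamma(A))\Gamma_\omega\hat\pi_\omega(C)\Omega_\omega ,
\]
and cyclicity of $\Omega_\omega$ then gives $\Gamma_\omega\hat\pi_\omega(A)\Gamma_\omega^*=\hat\pi_\omega(\gamma(A))$ for all $A\in\caA$. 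Specializing to $A=\unit_{\caA_L}\otimes B$ with $B\in\caA_R$, and using $\gamma(\unit_{\caA_L}\otimes B)=\gamma_{R\to L}(B)\otimes\unit_{\caA_R}$ together with $\gamma_{L\to R}\circ\gamma_{R\to L}=\id$ (a consequence of $\gamma^2=\id$), I obtain $\Gamma_\omega(\unit_{\caH_\omega}\otimes\pi_\omega(B))\Gamma_\omega^*=\pi_\omega(B)\otimes\unit_{\caH_\omega}$. Both sides are normal in the operator argument, and $\pi_\omega(\caA_R)$ is ultraweakly dense in $B(\caH_\omega)=\pi_\omega(\caA_R)''$ by irreducibility, so (\ref{gxg}) follows by continuity.

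Next I would exploit $\Gamma_\omega^2=\unit_{\hat\caH_\omega}$, which also makes $\Gamma_\omega$ self-adjoint. Conjugating (\ref{gxg}) once more by $\Gamma_\omega$ yields the mirror relation $\Gamma_\omega(x\otimes\unit_{\caH_\omega})\Gamma_\omega^*=\unit_{\caH_\omega}\otimes x$. Let $F$ be the tensor flip $F(\xi\otimes\eta)=\eta\otimes\xi$, a self-adjoint unitary with $F(\unit_{\caH_\omega}\otimes x)F^*=x\otimes\unit_{\caH_\omega}$ and $F(x\otimes\unit_{\caH_\omega})F^*=\unit_{\caH_\omega}\otimes x$. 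Setting $U:=F\Gamma_\omega$, the two covariance relations show that $U$ commutes with both $\unit_{\caH_\omega}\otimes B(\caH_\omega)$ and $B(\caH_\omega)\otimes\unit_{\caH_\omega}$; since these two algebras are each other's commutants, $U$ lies in their intersection $\bbC\unit_{\hat\caH_\omega}$. Hence $U=\sigma_\omega\unit_{\hat\caH_\omega}$ for a scalar, unitarity forces $|\sigma_\omega|=1$, and rearranging gives $\Gamma_\omega=\sigma_\omega F$, which is exactly (\ref{gsd}). Finally $\unit_{\hat\caH_\omega}=\Gamma_\omega^2=\sigma_\omega^2 F^2=\sigma_\omega^2\unit_{\hat\caH_\omega}$ pins down $\sigma_\omega=\pm1$.

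For independence of the choice, let $(\bar\caH_\omega,\bar\pi_\omega,\bar\Omega_\omega,\bar\Gamma_\omega)$ be a second reflection-split representation with sign $\bar\sigma_\omega$, and take the unitary $V$ of Lemma \ref{veq}, so that $\bar\Gamma_\omega=\Ad(V\otimes V)(\Gamma_\omega)$ by (\ref{scd}). Since $V\otimes V$ intertwines the two flips, $\Ad(V\otimes V)(F)$ is again the flip $\bar F$ on $\bar\caH_\omega\otimes\bar\caH_\omega$; conjugating $\Gamma_\omega=\sigma_\omega F$ therefore gives $\bar\Gamma_\omega=\sigma_\omega\bar F$, whence $\bar\sigma_\omega=\sigma_\omega$. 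The only place demanding real care is Step 1, where one must be sure the intertwining passes from the cyclic vector to all of $B(\caH_\omega)$; this rests on cyclicity of $\Omega_\omega$, normality of conjugation by $\Gamma_\omega$, and irreducibility of $\pi_\omega$. The ``flip is a scalar'' observation is the conceptual heart of the argument, but it is short once (\ref{gxg}) and $\Gamma_\omega^2=\unit_{\hat\caH_\omega}$ are available.
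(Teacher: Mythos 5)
Your proposal is correct, and its first and last steps coincide with the paper's proof: the paper likewise treats $\Gamma_\omega$ as an intertwiner, writing $\Gamma_\omega\hat\pi_\omega(\unit_{\caA_L}\otimes A)\Gamma_\omega^*=\hat\pi_\omega\circ\gamma(\unit_{\caA_L}\otimes A)$ (this operator identity comes with the construction of $\Gamma_\omega$ via Corollary 2.3.17 of \cite{BR1}; your explicit derivation from (\ref{guni}) by cyclicity is a sound way to justify it), and then invokes $\sigma$-weak continuity together with $\pi_\omega(\caA_R)''=B(\caH_\omega)$ to get (\ref{gxg}); the independence argument via Lemma \ref{veq} is essentially identical to yours. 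Where you genuinely differ is the passage from (\ref{gxg}) to (\ref{gsd}). The paper never introduces the flip: it conjugates rank-one projections to deduce $\Gamma_\omega(\xi\otimes\eta)=\sigma_{\xi,\eta}\,\eta\otimes\xi$ with a priori vector-dependent phases $\sigma_{\xi,\eta}\in\bbT$, then uses $\Gamma_\omega^2=\unit$ and further rank-one identities (with $\ketbra{\xi\otimes\xi}{\eta\otimes\eta}$ and $\ketbra{\xi\otimes\eta}{\xi\otimes\xi}$) to show all these phases agree and equal a common sign. You instead set $U:=F\Gamma_\omega$ with $F$ the tensor flip, observe that (\ref{gxg}) and its mirror (obtained from $\Gamma_\omega^2=\unit$) make $U$ commute with both $B(\caH_\omega)\otimes\unit_{\caH_\omega}$ and $\unit_{\caH_\omega}\otimes B(\caH_\omega)$, and conclude $U\in\bbC\unit_{\hat\caH_\omega}$, so that $\Gamma_\omega=\sigma_\omega F$ in one stroke, with $\Gamma_\omega^2=\unit$ pinning $\sigma_\omega^2=1$. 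Your route is shorter and more conceptual, but it relies on the commutation theorem $\lmk\unit_{\caH_\omega}\otimes B(\caH_\omega)\rmk'=B(\caH_\omega)\otimes\unit_{\caH_\omega}$ (equivalently, that the two factors generate $B(\hat\caH_\omega)$, whose commutant is trivial, together with triviality of the center of a factor); this is standard for type I factors, so the step is legitimate. The paper's phase-bookkeeping argument is longer but entirely elementary, using nothing beyond rank-one operators and (\ref{gxg}) itself. Both arguments use $\Gamma_\omega^2=\unit$ in the same essential role, and both yield the same global conclusion $\Gamma_\omega=\sigma_\omega F$.
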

\begin{defn}\label{index}
From Theorem \ref{indint}, for each
 reflection invariant pure state $\omega$ which satisfies the split property with respect to $\caA_R$ and
$\caA_L$
, we can define a $\bbZ_2$-index $\sigma_\omega$.
\end{defn}
\begin{proof}
We first prove (\ref{gxg}).
For any $A\in \caA_R$, we have 
\begin{align}
&\Gamma_\omega\lmk
\unit_{\caH_\omega}\otimes \pi_\omega(A)
\rmk\Gamma_\omega^*
=\Gamma_\omega\lmk
\hat\pi_\omega\lmk
\unit_{\caA_L}\otimes A
\rmk\rmk
\Gamma_\omega^*
=\hat\pi_\omega\circ \gamma\lmk
\unit_{\caA_L}\otimes A
\rmk\nonumber\\
&=\hat\pi_\omega
\lmk
\gamma_{R\to L}(A)\otimes \unit_{\caH_\omega}
\rmk
=\pi_\omega\circ\gamma_{L\to R}\lmk\gamma_{R\to L}(A)\rmk\otimes \unit_{\caA_R}
=\pi_\omega(A)\otimes \unit_{\caH_\omega}.
\end{align}
Since both sides are $\sigma$-weak continuous and $\pi_\omega(\caA_R)''=B(\caH_\omega)$, we obtain
 (\ref{gxg}).

From (\ref{gxg}), we derive (\ref{gsd}).:
For any nonzero $\xi,\eta\in\caH_\omega$,
there exists $\sigma_{\xi,\eta}\in\bbT$ such that
\begin{align}\label{xe}
\Gamma_\omega\lmk\xi\otimes\eta\rmk
=\sigma_{\xi,\eta} \lmk\eta\otimes\xi\rmk,
\end{align}
because
\begin{align}
\Gamma_\omega
\ketbra{\xi\otimes\eta}{\xi\otimes\eta}
\Gamma_\omega^*
=\ketbra{\eta\otimes\xi}{\eta\otimes\xi},
\end{align}
from (\ref{gxg}).
Considering the case $\xi=\eta\neq 0$ in (\ref{xe}), we have
\begin{align}
\xi\otimes\xi=
\Gamma_\omega^2\lmk\xi\otimes\xi\rmk
=\sigma_{\xi,\xi} \Gamma_\omega\lmk\xi\otimes\xi\rmk
=\sigma_{\xi,\xi} ^2\xi\otimes\xi.
\end{align}
The first equality is from Remark \ref{gm2}. 
From this, we obtain $\sigma_{\xi,\xi}=\pm 1$.
Again by (\ref{gxg}), for nonzero $\xi,\eta\in\caH_\omega$, we obtain
\begin{align}
\Gamma_\omega\ketbra{\xi\otimes \xi}{\eta\otimes\eta}\Gamma_\omega^*
=\ketbra{\xi\otimes \xi}{\eta\otimes\eta}.
\end{align}
On the other hand, from the above argument, we have
\begin{align}
\Gamma_\omega\ketbra{\xi\otimes \xi}{\eta\otimes\eta}\Gamma_\omega^*
=\sigma_{\xi,\xi}\sigma_{\eta,\eta} \ketbra{\xi\otimes \xi}{\eta\otimes\eta}.
\end{align}
Since $\eta,\xi$ are not zero, we obtain $\sigma_{\xi,\xi}\sigma_{\eta,\eta} =1$.
Recalling that $\sigma_{\xi,\xi}, \sigma_{\eta,\eta}$ take values in $\pm 1$, we obtain
$\sigma_{\xi,\xi}=\sigma_{\eta,\eta}$.
Therefore, we set $\sigma_\omega:=\sigma_{\xi,\xi}$,
which is independent of the choice of nonzero $\xi\in\caH_\omega$.
To prove (\ref{gsd}), we use (\ref{gxg}) again and for any nonzero $\xi,\eta\in\caH_\omega$, we have
\begin{align}
\ketbra{\eta\otimes \xi}{\xi\otimes\xi}
=\Gamma_\omega\ketbra{\xi\otimes \eta}{\xi\otimes\xi}\Gamma_\omega^*
=\sigma_{\xi,\eta}\sigma_{\omega} \ketbra{\eta\otimes \xi}{\xi\otimes\xi}.
\end{align}
The first equality follows from (\ref{gxg}) and the second one is the definition 
of $\sigma_{\xi,\eta}$ and $\sigma_{\omega}$.
From this and $\sigma_\omega=\pm1$, we obtain $\sigma_{\xi,\eta}=\sigma_{\omega}$, completing the proof of (\ref{gsd}).

To show  that the sign $\sigma_\omega$ is independent of the choice
of the reflection-split representation, let 
$(\bar\caH_\omega, \bar\pi_\omega, \bar \Omega_\omega, \bar\Gamma_\omega)$  be another reflection-spilt representations associated to
$\omega$.
Let $V:\caH_\omega\to\bar\caH_\omega$ be the unitary given in Lemma \ref{veq}.
We have
\begin{align}
\bar\Gamma_\omega\lmk\xi\otimes\eta\rmk
=\lmk V\otimes V\rmk\Gamma_\omega\lmk V^*\xi\otimes V^*\eta\rmk
=\lmk V\otimes V\rmk\sigma_\omega\lmk V^*\eta\otimes V^*\xi\rmk
=\sigma_\omega\lmk \eta\otimes \xi\rmk,
\end{align}
for any $\xi,\eta\in \bar\caH_\omega$,
proving the claim.
\end{proof}

Now we prove that the $\bbZ_2$-index is invariant under 
automorphic equivalence via an automorphism
which allows a reflection invariant decomposition.
Let us recall the definition:
\begin{defn}\label{adt}
We say an automorphism $\alpha$ of $\caA$ allows a reflection invariant 
decomposition if there is an automorphisms $\alpha_R$
on $\caA_R$, and a unitary $W$ in $\caA$
such that
\begin{align}
\tilde\alpha^{-1}\circ\alpha=\Ad(W),\quad \gamma(W)=W,
\end{align}
where
\begin{align}
\tilde\alpha:=\lmk
\gamma_{R\to L} \circ\alpha_R\circ \gamma_{L\to R}\rmk\otimes \alpha_R.
\end{align}
We call $(\alpha_R,W)$, a reflection invariant decomposition of $\alpha$.
\end{defn}
We prove the following theorem:
\begin{thm}\label{aeii}
Let $\omega_0,\omega_1$ be reflection invariant pure states satisfying the split property
with respect to $\caA_R$ and
$\caA_L$.
Suppose that $\omega_0$ and $\omega_1$ 
are automorphic equivalent via an automorphism $\alpha$, i.e., $\omega_1=\omega_0\circ\alpha$,
which allows a reflection invariant decomposition  $(\alpha_R,W)$.
Then the $\bbZ_2$-indices $\sigma_{\omega_0}$,  $\sigma_{\omega_1}$ associated to
$\omega_0$, $\omega_1$ are equal.
\end{thm}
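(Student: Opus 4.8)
The plan is to manufacture, out of a fixed reflection-split representation of $\omega_0$, an explicit reflection-split representation of $\omega_1=\omega_0\circ\alpha$ on the \emph{same} Hilbert space, and then to read off the index as an eigenvalue of the flip. Fix a reflection-split representation $(\caH_{\omega_0},\pi_{\omega_0},\Omega_{\omega_0},\Gamma_{\omega_0})$ of $\omega_0$ and let $F$ be the flip $F\lmk\xi\otimes\eta\rmk=\eta\otimes\xi$ on $\hat\caH_{\omega_0}=\caH_{\omega_0}\otimes\caH_{\omega_0}$. By Theorem \ref{indint} we have $\Gamma_{\omega_0}=\sigma_{\omega_0}F$, and setting $A=\unit$ in (\ref{guni}) gives $\Gamma_{\omega_0}\Omega_{\omega_0}=\Omega_{\omega_0}$, hence
\begin{align}
F\Omega_{\omega_0}=\sigma_{\omega_0}\Omega_{\omega_0}.
\end{align}
This eigenvalue equation is the quantity I want to transport to $\omega_1$.

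First I would absorb the tensor-product factor $\tilde\alpha$. Writing $\alpha=\tilde\alpha\circ\Ad(W)$ and using $\gamma_{L\to R}\circ\gamma_{R\to L}=\id$, a direct computation on elementary tensors $B\otimes C\in\caA_L\otimes\caA_R$ shows that $\tilde\alpha$ merely replaces the on-site representation $\pi_{\omega_0}$ by $\pi_{\omega_1}:=\pi_{\omega_0}\circ\alpha_R$, i.e.
\begin{align}
\hat\pi_{\omega_0}\circ\tilde\alpha=\lmk \pi_{\omega_1}\circ\gamma_{L\to R}\rmk\otimes\pi_{\omega_1}=:\hat\pi_{\omega_1}.
\end{align}
Here $\pi_{\omega_1}$ is again irreducible since $\alpha_R\in\Aut(\caA_R)$, so $\hat\pi_{\omega_1}$ has reflection-split form. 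The same kind of elementary-tensor computation (now with $\pi_{\omega_1}$), using $\gamma_{L\to R}\circ\gamma_{R\to L}=\id$ and tracking $\gamma$ through the $\caA_L\otimes\caA_R$ splitting, shows that the flip intertwines the reflection in any reflection-split representation:
\begin{align}
F\,\hat\pi_{\omega_1}(A)\,F=\hat\pi_{\omega_1}\circ\gamma(A),\quad A\in\caA.
\end{align}

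Next I would absorb the inner part $\Ad(W)$ into the cyclic vector. Put $\hat W:=\hat\pi_{\omega_1}(W)$ and $\Omega_{\omega_1}:=\hat W^*\Omega_{\omega_0}$. Since $\hat\pi_{\omega_0}\circ\alpha=\Ad(\hat W)\circ\hat\pi_{\omega_1}$, for every $A\in\caA$ one gets $\omega_1(A)=\omega_0(\alpha(A))=\la\Omega_{\omega_0},\hat W\,\hat\pi_{\omega_1}(A)\,\hat W^*\Omega_{\omega_0}\ra=\la\Omega_{\omega_1},\hat\pi_{\omega_1}(A)\Omega_{\omega_1}\ra$. Irreducibility of $\pi_{\omega_1}$ gives $\hat\pi_{\omega_1}(\caA)''=B(\hat\caH_{\omega_0})$, so the unit vector $\Omega_{\omega_1}$ is cyclic, and $(\caH_{\omega_0},\pi_{\omega_1},\Omega_{\omega_1},\Gamma_{\omega_1})$ is a genuine reflection-split representation of $\omega_1$ (with $\Gamma_{\omega_1}$ its implementing unitary, which exists since $\omega_1$ is reflection invariant). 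By Theorem \ref{indint}, $\Gamma_{\omega_1}=\sigma_{\omega_1}F$, and setting $A=\unit$ in (\ref{guni}) gives $\Gamma_{\omega_1}\Omega_{\omega_1}=\Omega_{\omega_1}$, that is, $F\Omega_{\omega_1}=\sigma_{\omega_1}\Omega_{\omega_1}$.

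It remains to compute this eigenvalue, and this is where the hypothesis $\gamma(W)=W$ enters decisively. Since $\gamma(W^*)=W^*$, the intertwining relation gives $F\,\hat\pi_{\omega_1}(W^*)=\hat\pi_{\omega_1}(\gamma(W^*))\,F=\hat\pi_{\omega_1}(W^*)\,F$, so $F$ commutes with $\hat W^*=\hat\pi_{\omega_1}(W^*)$. Therefore
\begin{align}
F\Omega_{\omega_1}=F\hat W^*\Omega_{\omega_0}=\hat W^*F\Omega_{\omega_0}=\sigma_{\omega_0}\hat W^*\Omega_{\omega_0}=\sigma_{\omega_0}\Omega_{\omega_1},
\end{align}
whence $\sigma_{\omega_1}=\sigma_{\omega_0}$. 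The conceptual point — and the only place both structural conditions on $\alpha$ are used — is this last display: the factor $\tilde\alpha$ keeps everything in reflection-split form, so that $F$ is the relevant flip throughout, while $\gamma(W)=W$ is exactly what makes the vector-shift $\hat W^*$ commute with $F$ and hence preserve the $F$-eigenvalue. I expect the only steps needing care to be the two elementary-tensor identities in the second paragraph and the cyclicity of $\Omega_{\omega_1}$; both are routine once $\pi_{\omega_1}$ is known to be irreducible.
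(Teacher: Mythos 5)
Your proof is correct and takes essentially the same route as the paper: both construct the reflection-split representation $\hat\pi_{\omega_1}=\hat\pi_{\omega_0}\circ\tilde\alpha$ of $\omega_1$ on the same Hilbert space with GNS vector $\hat\pi_{\omega_1}(W^*)\Omega_{\omega_0}$, and both hinge on $\gamma(W)=W$ together with the compatibility of $\tilde\alpha$ with the reflection. The only difference is cosmetic: the paper verifies that $\Gamma_{\omega_0}$ itself is the implementing unitary of the new representation (via $\gamma\circ\tilde\alpha=\tilde\alpha\circ\gamma$) and then invokes the representation-independence of the index from Theorem \ref{indint}, whereas you keep $\Gamma_{\omega_1}$ abstract and instead transport the eigenvalue of the flip $F$ through the commutation $F\,\hat\pi_{\omega_1}(W^*)=\hat\pi_{\omega_1}(W^*)\,F$ --- the same facts in a slightly different packaging.
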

\begin{proof}
Let $(\caH_{\omega_0}, \pi_{\omega_0}, \Omega_{\omega_0}, \Gamma_{\omega_0})$  be a reflection-spilt representation associated to
$\omega_0$.
Set $\hat \caH_{\omega_0}:=\caH_{\omega_0}\otimes \caH_{\omega_0}$,
$\hat\pi_{\omega_0}:=\lmk \pi_{\omega_0}\circ\gamma_{L\to R}\rmk\otimes \pi_{\omega_0}$.
We also set $\tilde\alpha:=\lmk
\gamma_{R\to L} \circ\alpha_R\circ \gamma_{L\to R}\rmk\otimes \alpha_R$ and
\[
\hat\pi_{\omega_1}:=\lmk \pi_{\omega_0}\circ\alpha_R\circ\gamma_{L\to R}\rmk\otimes 
\lmk \pi_{\omega_0}\circ
\alpha_R\rmk
=\hat\pi_{\omega_0}\circ\tilde\alpha.
\]
We claim that $(\caH_{\omega_0}, \pi_{\omega_0}\circ\alpha_R, 
\hat\pi_{\omega_1}(W^*)\Omega_{\omega_0}, \Gamma_{\omega_0})$
is a reflection-split representation of $\omega_1$.
From this, we obtain the statement of the Theorem i.e., $\sigma_{\omega_0}=\sigma_{\omega_1}$.

The first condition of Definition \ref{rsr} is from $\pi_{\omega_0}\circ\alpha_R(\caA_R)''=\pi_{\omega_0}(\caA_R)''=B(\caH_{\omega_0})$.
The second one is trivial because $W$ is unitary.
To prove the third one, note that $(\hat \caH_{\omega_0}, \hat\pi_{\omega_1}, \Omega_{\omega_0})$ is a GNS triple of 
$\omega_0\circ \tilde\alpha$.
From $\tilde\alpha^{-1}\circ\alpha=\Ad(W)$ and $\omega_1=\omega_0\circ\alpha$,
we have
\begin{align}
\omega_1=\omega_0\circ\alpha=\omega_0\circ \tilde\alpha\circ\Ad(W).
\end{align}
Combining these two facts, we have
\begin{align}
\omega_1(A)=\braket{\hat\pi_{\omega_1}(W^*)\Omega_{\omega_0}}{\hat\pi_{\omega_1}(A)\hat\pi_{\omega_1}(W^*)\Omega_{\omega_0}},\quad A\in\caA.
\end{align}
Since $\hat\pi_{\omega_1}(\caA)''=B(\hat\caH_{\omega_0})$,
$\hat\pi_{\omega_1}(W^*)\Omega_{\omega_0}$ is cyclic for $\hat\pi_{\omega_1}(\caA)$.
Hence $(\hat \caH_{\omega_0}, \hat\pi_{\omega_1}, \hat\pi_{\omega_1}(W^*)\Omega_{\omega_0})$ is a GNS triple of 
$\omega_1$.
Finally for the fourth condition of Definition \ref{rsr},
note that $\gamma$ and $\tilde\alpha$ commute because
\begin{align}
&\gamma\circ\tilde \alpha(A\otimes B)
=\gamma\circ\lmk \lmk
\gamma_{R\to L} \circ\alpha_R\circ \gamma_{L\to R}(A)\rmk\otimes \alpha_R(B)\rmk
= \lmk \gamma_{R\to L}\circ\alpha_R(B)\rmk\otimes\lmk \alpha_R\circ \gamma_{L\to R}(A)\rmk\nonumber\\
&=\lmk
\gamma_{R\to L} \circ\alpha_R\circ \gamma_{L\to R}\circ \gamma_{R\to L}(B)\rmk\otimes 
\alpha_R\circ\gamma_{L\to R}(A)
=\tilde\alpha\circ\gamma(A\otimes B),
\end{align}
for any $A\in\caA_L$ and $B\in \caA_R$.
From this fact, we obtain
\begin{align}
&\Gamma_{\omega_0}\hat\pi_{\omega_1}\lmk A\rmk\hat\pi_{\omega_1}(W^*)\Omega_{\omega_0}
=\Gamma_{\omega_0}\hat\pi_{\omega_0}\circ \tilde \alpha\lmk AW^*\rmk
\Omega_{\omega_0}
=\lmk\hat\pi_{\omega_0}\circ\gamma\circ \tilde \alpha\lmk AW^*\rmk\rmk
\Omega_{\omega_0}\nonumber\\
&=\lmk\hat\pi_{\omega_0}\circ \tilde \alpha\circ\gamma\lmk AW^*\rmk\rmk
\Omega_{\omega_0}
=\lmk
\hat\pi_{\omega_0}\circ \tilde \alpha\circ\gamma\lmk A\rmk\rmk
\lmk
\hat\pi_{\omega_0}\circ \tilde \alpha\lmk W^*\rmk\rmk
\Omega_{\omega_0}
=\lmk
\hat\pi_{\omega_1}\circ\gamma\lmk A\rmk\rmk
\lmk
\hat\pi_{\omega_1}\lmk W^*\rmk\rmk
\Omega_{\omega_0},
\end{align}
for all $A\in\caA$.
For the fourth equality, we used $\gamma(W)=W$.
This completes the proof.
\end{proof}

\section{$C^1$-classification of gapped Hamiltonians with the reflection symmetry.}\label{c1main}
Let us now apply the result in Section \ref{indexsec} to the $C^1$-classification of gapped Hamiltonians preserving the reflection symmetry. 

A mathematical model of a quantum spin chain is fully specified by its interaction $\Phi$.
An interaction is a map $\Phi$ from 
${\mathfrak S}_{\bbZ}$ into ${\caA}_{\rm loc}$ such
that $\Phi(X) \in {\caA}_{X}$ 
and $\Phi(X) = \Phi(X)^*$
for each $X \in {\mathfrak S}_{\bbZ}$. 
Let $R:\bbZ\to\bbZ$ be the reflection : $R(i):=-i-1$, $i\in\bbZ$.
An interaction $\Phi$ is reflection invariant
if $\gamma(\Phi(X))=\Phi(R(X))$
for all $X\in {\mathfrak S}_\bbZ$.
An interaction $\Phi$
 is of finite range if there exists an $m\in {\mathbb N}$ such that
$\Phi(X)=0$ for $X$ with diameter larger than $m$.
We denote by $\caB_{f}$, 
the set of all finite range interactions $\Phi$ which satisfy
\begin{align}\label{fi}
a_\Phi:= \sup_{X\in {\mathfrak S}_{\bbZ}}
\lV
\Phi\lmk X\rmk
\rV<\infty.
\end{align}
We may define addition on $\caB_f$: for $\Phi,\Psi\in\caB_f$,
$\Phi+\Psi$ denotes the interaction given by
$(\Phi+\Psi)(X)=\Phi(X)+\Psi(X)$ for each $X\in{\mathfrak S}_{\bbZ}$.

For an interaction $\Phi$ and a finite set $\Lambda\in{\mathfrak S}_{\bbZ}$, we define the local Hamiltonian on $\Lambda$ by
\begin{equation}\label{GenHamiltonian}
\lmk H_{\Phi}\rmk_{\Lambda}:=\sum_{X\subset{\Lambda}}\Phi(X).
\end{equation}
The dynamics given by this local Hamiltonian is denoted by
\begin{align}\label{taulamdef}
\tau_t^{\Phi,\Lambda}\lmk A\rmk:= e^{it\lmk H_{\Phi}\rmk_{\Lambda}} Ae^{-it\lmk H_{\Phi}\rmk_{\Lambda}},\quad
t\in\bbR,\quad A\in\caA.
\end{align}
If $\Phi$ belongs to $\caB_{f}$,
the limit
\begin{align}\label{taudef}
\tau_t^{\Phi}\lmk A\rmk=\lim_{\Lambda\to\bbZ}
\tau_t^{\Phi,\Lambda}\lmk A\rmk
\end{align}
exists for each $A\in \caA$ and $t\in{\mathbb R}$, 
and defines a strongly continuous one parameter group of automorphisms $\tau^\Phi$ on $\caA$. 
(See \cite{BR2}.)
We denote the generator of the $C^* $-dynamics $\tau^{\Phi}$ by $\delta_{\Phi}$.


For $\Phi\in\caB_f$, a state $\varphi$ on $\caA$ is called a \mbox{$\tau^{\Phi}$-ground} state
if the inequality
$
-i\,\varphi(A^*{\delta_{\Phi}}(A))\ge 0
$
holds
for any element $A$ in the domain $\caD({\delta_{\Phi}})$ of ${\delta_\Phi}$.
Let $\varphi$ be a $\tau^\Phi$-ground state, with the GNS triple $(\caH_\varphi,\pi_\varphi,\Omega_\varphi)$.
Then there exists a unique positive operator $H_{\varphi,\Phi}$ on $\caH_\varphi$ such that
$e^{itH_{\varphi,\Phi}}\pi_\varphi(A)\Omega_\varphi=\pi_\varphi(\tau_t^\Phi(A))\Omega_\varphi$,
for all $A\in\caA$ and $t\in\mathbb R$.
We call this $H_{\varphi,\Phi}$ the bulk Hamiltonian associated with $\varphi$.
Note that $\Omega_\varphi$ is an eigenvector of $H_{\varphi,\Phi}$ with eigenvalue $0$. See \cite{BR2} for the general theory.

The following definition clarifies what we mean by a model with a unique gapped ground state.
\begin{defn}
We say that a model with an interaction $\Phi\in\caB_f$ has a unique gapped ground state if 
(i)~the $\tau^\Phi$-ground state, which we denote as $\varphi$, is unique, and 
(ii)~there exists a $g>0$ such that
$\sigma(H_{\varphi,\Phi})\setminus\{0\}\subset [g,\infty)$, where  $\sigma(H_{\varphi,\Phi})$ is the spectrum of $H_{\varphi,\Phi}$.
\end{defn}
Note that the uniqueness of $\varphi$ implies that 0 is a non-degenerate eigenvalue of $H_{\varphi,\Phi}$.

If $\varphi$ is a \mbox{$\tau^{\Phi}$-ground} state of reflection invariant interaction $\Phi\in {\caB}_f$,
then its reflection
$ \varphi\circ\gamma$ is also a \mbox{$\tau^{\Phi}$-ground} state.
In particular, if $\varphi$ is a unique \mbox{$\tau^{\Phi}$-ground} state, it is pure and reflection invariant.

In \cite{Matsui2}, T.Matsui showed that the spectral gap implies the split property.
\begin{thm}[Theorem 1.5, Lemma 4.1, and Proposition 4.2 of \cite{Matsui2}]\label{matsui}
Let $\varphi$ be a pure $\tau^\Phi$-ground state of $\Phi\in{\caB}_f$, and denote by  $H_{\varphi,\Phi}$ the corresponding bulk Hamiltonian. 
Assume that $0$ is a non-degenerate eigenvalue of $H_{\varphi,\Phi}$ and
there exists $g>0$ such that $\sigma(H_{\varphi,\Phi})\setminus\{0\}\subset [g,\infty)$.
Then $\varphi$ satisfies the  split property with respect to $\caA_L$ and $\caA_R$.
\end{thm}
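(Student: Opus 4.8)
The plan is to derive the split property of Definition \ref{split}---that $\pi_{\varphi_R}(\caA_R)''$ is a type I factor---from a quantitative bound on the entanglement across the cut at the origin, assembling the chain
\[
\text{spectral gap}\ \Longrightarrow\ \text{exponential clustering}\ \Longrightarrow\ \text{half-chain area law}\ \Longrightarrow\ \text{type I factor}.
\]
The hypotheses used are exactly that $\Omega_\varphi$ is the unique, gapped ground vector of $H_{\varphi,\Phi}\ge 0$ with $\sigma(H_{\varphi,\Phi})\setminus\{0\}\subset[g,\infty)$ and that $\Phi\in\caB_f$ is of finite range; the rest is a synthesis of known implications.

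First I would establish exponential clustering. Because $\Phi\in\caB_f$, the dynamics $\tau^\Phi$ obeys a Lieb--Robinson bound, so for $A\in\caA_L$ and $B$ localized in $[0,n]\cap\bbZ$ the evolved observable $\tau_t^\Phi(B)$ is, up to an error exponentially small in $n-v\abs{t}$, supported near $[0,n]\cap\bbZ$. Inserting this quasi-locality into the spectral representation of the connected correlation function and using that $0$ is a non-degenerate eigenvalue of $H_{\varphi,\Phi}$ separated by $g$ from the rest of the spectrum gives
\[
\abs{\varphi(AB)-\varphi(A)\varphi(B)}\le C\,\lV A\rV\,\lV B\rV\,e^{-c\ell},
\]
for left and right observables separated by distance $\ell$, with $c$ controlled by $g$ and the Lieb--Robinson velocity. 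This is the standard Nachtergaele--Sims/Hastings--Koma argument.

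Next I would pass to a uniform entanglement bound and then to the operator-algebraic conclusion. Let $\rho_n$ be the density matrix of $\varphi$ restricted to $\caA_{[0,n]\cap\bbZ}$ and $S(\rho_n)=-\Tr\rho_n\log\rho_n$ its entropy. In one dimension the gap enforces an area law, $\sup_n S(\rho_n)<\infty$, with a bound depending only on $g$, $a_\Phi$, and the range; I would import this from Hastings' one-dimensional area law, whose construction already feeds on the clustering and Lieb--Robinson estimates above to approximate the ground-state projector by one of bounded Schmidt rank across the cut. Equivalently, the Schmidt coefficients of $\Omega_\varphi$ across the origin are summable with controlled decay. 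This summability lets one realize the reduced state through a trace-class ``density operator'' whose atomic spectral projections, together with the matrix units implementing the Schmidt frame, exhibit $\pi_{\varphi_R}(\caA_R)''$ as some $B(\caK)$---a type I factor, the factor property following from purity of $\varphi$. This is the content of Lemma 4.1 and Proposition 4.2 of \cite{Matsui2}.

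I expect the final conversion to be the main obstacle. Turning a finite entropy bound at each scale into the existence, in the $n\to\infty$ limit, of a genuinely \emph{type I} factor (rather than type II or III) is the delicate part: one must control the limit of the operators $\rho_n$ and rule out any continuous component of the limiting spectrum, so that the limiting algebra is atomic. The area-law input is deep but may be cited as a black box, and the clustering step is routine once the Lieb--Robinson bound is in hand; the weight of the proof therefore sits in the passage from the quantitative bound to the type I structure.
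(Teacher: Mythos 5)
Your proposal follows exactly the route behind the paper's treatment of this statement: the paper does not prove Theorem \ref{matsui} itself but imports it from \cite{Matsui2} (Theorem 1.5, Lemma 4.1, Proposition 4.2), whose proof is precisely the chain you describe --- the gap yields, via Hastings' one-dimensional area law, a uniform bound on the entanglement entropy of $\varphi$ restricted to finite intervals, and Matsui's Lemma 4.1 and Proposition 4.2 convert that uniform entropy bound into the statement that $\pi_{\varphi_R}(\caA_R)''$ is a type I factor. Since you black-box the same two ingredients (the area law, and the entropy-to-type-I conversion) that the paper cites, your proposal is correct and is essentially the same argument.
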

%
This theorem, combined with Definition \ref{index} allows us to define the $\bbZ_2$-index for
reflection invariant Hamiltonians with unique gapped ground state.
\begin{defn}
Let $\Phi\in\caB_f$ be a reflection invariant interaction
which has a unique gapped ground state $\omega$.
By Theorem \ref{matsui}, $\omega$ satisfies the split property. Hence we obtain the $\bbZ_2$-index $\sigma_\omega$
in Definition  \ref{index}.
In this setting, we denote this $\sigma_\omega$
 by $\hat \sigma_{\Phi}$ and call it the $\bbZ_2$-index associated to $\Phi$.
\end{defn}


Since $\hat \sigma_\Phi$ takes discrete values $\{-1,1\}$,
for a continuous path of interactions $\Phi(s)$, we would expect that
$\hat \sigma_{\Phi(s)}$ is constant.
We prove this in the setting of $C^1$-classification.
\begin{defn}\label{boundary}
We say the map $\Phi:[0,1]\ni s \to \Phi(s):=\{\Phi(X;s)\}_{X\in {\mathfrak S}_\bbZ}\in {\caB_f}$ is a $C^1$-path of reflection invariant gapped interactions satisfying the {\it Condition B},
if 
there exist 
\begin{description}
\item[(i)] numbers
$M,R\in\nan$, $g>0$  and an increasing sequence $n_k\in\nan$, $k=1,2,\ldots$,
\item[(ii)] $C^1$-functions
$a,b:[0,1]\to \bbR$ such that $a(s)<b(s)$,
\item[(iii)]
a sequence of paths of interactions $\Psi_k:[0,1]\ni s \to \Psi_k(s):=\{\Psi_k(X;s)\}_{X\in {\mathfrak S}_\bbZ}\in {\caB_f}$, $k=1,2,\ldots$,
\end{description}
and the following hold:
\begin{enumerate}
\item For each $X\in{\mathfrak S}_\bbZ$, the map
$[0,1]\ni s\to \Phi(X;s), \Psi_k(X;s)\in\caA_{X}$ are continuous and piecewise $C^1$.
We denote by $\Phi'(X;s)$, $\Psi'_k(X;s)$, the corresponding derivatives.
\item For each $s\in[0,1]$, and $X\in{\mathfrak S}_\bbZ$ with $\diam (X)\ge M$, we have
$\Phi(X;s)=0$.
\item For each $s\in[0,1]$, and $k\in\bbN$, we have $\Psi_k(X;s)=0$ unless $X\subset \Lambda_{n_k}\setminus \Lambda_{n_k-R}$.
\item Interactions are bounded as follows
\begin{align}
C_1:=\sup_{s\in[0,1]}\sup_{k\in\nan}\sup_{X\in {\mathfrak S}_\bbZ}
\lmk
\lV
\Phi\lmk X;s\rmk
\rV+|X|\lV
\Phi' \lmk X;s\rmk
\rV
+
\lV
\Psi_k\lmk X;s\rmk
\rV+|X|\lV
\Psi_k'\lmk X;s\rmk
\rV
\rmk<\infty.
\end{align}
\item For each $s\in[0,1]$, there exists a unique $\tau^{\Phi(s)}$-ground state
$\varphi_s$. 
\item
For each $s\in[0,1]$, $\Phi(s)$ is reflection-invariant.
\item  
 For each $k\in\bbN$ and $s\in[0,1]$, 
 the spectrum $\sigma\lmk \lmk H_{\Phi(s)+\Psi_k(s)}\rmk_{\Lambda_{n_k}}\rmk$
of $ \lmk H_{\Phi(s))+\Psi_k(s)}\rmk_{\Lambda_{n_k}}$ is decomposed into two non-empty disjoint parts
$
\sigma\lmk \lmk H_{\Phi(s)+\Psi_k(s)}\rmk_{\Lambda_{n_k}}\rmk=
\Sigma_1^{(k)}(s)\cup\Sigma_2^{(k)}(s)
$
such that $\Sigma_1^{(k)}(s)\subset [a(s),b(s)]$,
$\Sigma_2^{(k)}(s)\subset [b(s)+g,\infty)$ and 
the diameter of $\Sigma_1^{(k)}(s)$ converges to $0$ as $k\to\infty$.
\end{enumerate}
\end{defn}
The interaction $\Psi_k(s)$ corresponds to a boundary condition. Note that it does not forbid an interaction
between intervals $[-n,-n+R]\cap \bbZ$ and $[n-R,n]\cap \bbZ$.
In particular, the periodic boundary condition is included in this framework.
Also, note that we {\it do not} require that the boundary term $\Psi_k(s)$
to be reflection invariant. 

By exactly the same way as in Proposition 3.5 of \cite{Ogata4}, we can show the following.:
\begin{prop}\label{aep}
Let $\Phi:[0,1]\ni s \to \Phi(s):=\{\Phi(X;s)\}_{X\in {\mathfrak S}_\bbZ}\in {\caB_f}$ be a $C^1$-path of reflection invariant gapped interactions satisfying the {\it Condition B}.
Let $\varphi_s$ be the unique $\tau^{\Phi(s)}$-ground state, for each $s\in[0,1]$. 
Then
$\varphi_0$ and $\varphi_1$
are automorphic equivalent via an automorphism, 
which allows a reflection invariant decomposition.
\end{prop}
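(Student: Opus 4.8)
The plan is to build the connecting automorphism by the quasi-adiabatic continuation (spectral flow) technique of Hastings, in the refined form of Bachmann, Michalakis, Nachtergaele, and Sims, following the proof of Proposition 3.5 of \cite{Ogata4}, and then to read off a reflection invariant decomposition from the way the spectral flow generator splits about the origin. First I would invoke Condition B: for each $k$ the spectrum of $\lmk H_{\Phi(s)+\Psi_k(s)}\rmk_{\Lambda_{n_k}}$ splits as $\Sigma_1^{(k)}(s)\subset[a(s),b(s)]$ and $\Sigma_2^{(k)}(s)\subset[b(s)+g,\infty)$ with uniform gap $g$, so the finite–volume spectral flow $U_k(s)$ generated by the $g$–weighted derivative of $\lmk H_{\Phi(s)+\Psi_k(s)}\rmk_{\Lambda_{n_k}}$ is well defined and carries the low–lying spectral projection at parameter $0$ to that at $s$. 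The uniform bound $C_1$ (item 4) together with the finite range (items 2,3) supplies Lieb–Robinson bounds uniform in $k$ and $s$, which is exactly what permits the thermodynamic limit; as in \cite{Ogata4} this yields a strongly continuous family of automorphisms $\alpha_s$ of $\caA$ with $\varphi_s=\varphi_0\circ\alpha_s$, and I set $\alpha:=\alpha_1$. Since $\Psi_k'(X;s)$ is supported in $\Lambda_{n_k}\setminus\Lambda_{n_k-R}$, hence at distance of order $n_k$ from the origin, its contribution to the action of $\alpha_s$ on any fixed local observable is bounded by a subexponentially small tail in $n_k$ and disappears in the limit; thus $\alpha$ coincides with the spectral flow generated by the \emph{bulk}, reflection invariant part coming from $\Phi'$ alone.

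Write this bulk generator as a (time–dependent) interaction $\{D(X;s)\}_X$. Reflection invariance of $\Phi$, hence of $\Phi'$, propagates through the Heisenberg evolution and the symmetric weight to give $\gamma\lmk D(X;s)\rmk=D\lmk R(X);s\rmk$. I would then define $\alpha_R$ as the automorphism of $\caA_R$ generated by the truncated interaction $\{D(X;s):X\subset[0,\infty)\}$, which is supported in $\caA_R$; reflection invariance of $D$ forces the automorphism of $\caA_L$ generated by $\{D(X;s):X\subset(-\infty,-1]\}$ to equal $\gamma_{R\to L}\circ\alpha_R\circ\gamma_{L\to R}$. Hence $\tilde\alpha=\lmk\gamma_{R\to L}\circ\alpha_R\circ\gamma_{L\to R}\rmk\otimes\alpha_R$ is exactly the factorized, reflection symmetric part of $\alpha$, and (as computed in the proof of Theorem \ref{aeii}) $\tilde\alpha_s$ commutes with $\gamma$.

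It remains to control $\beta_s:=\tilde\alpha_s^{-1}\circ\alpha_s$. In the interaction picture $\beta_s$ solves a cocycle equation whose generator is $\tilde\alpha_s^{-1}$ applied to the discarded crossing part $\{D(X;s):X\cap[0,\infty)\neq\emptyset,\ X\cap(-\infty,-1]\neq\emptyset\}$; every such $X$ meets the origin region, so with the subexponential decay of $D$ this generator is a convergent element of $\caA$ localized near the cut, and therefore $\beta_s$ is inner, $\tilde\alpha^{-1}\circ\alpha=\beta_1=\Ad(W)$ with $W\in\caA$. Because the collection of crossing sets is reflection symmetric, $D$ is reflection invariant, and $\tilde\alpha_s$ commutes with $\gamma$, the cocycle generator is reflection invariant, so the implementer produced by the localized Dyson series satisfies $\gamma(W)=W$. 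This exhibits $(\alpha_R,W)$ as a reflection invariant decomposition of $\alpha$, and Theorem \ref{aeii} then gives $\sigma_{\varphi_0}=\sigma_{\varphi_1}$.

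\textbf{The main obstacle} is the last paragraph: one must show with quantitative Lieb–Robinson estimates, uniform in $k$, that the crossing part of the generator really converges in norm to an element of $\caA$ so that $\beta_1$ is \emph{genuinely inner} rather than merely quasi–local, and that the reflection symmetry of $D$ survives both the left/right truncation and the thermodynamic limit so that $W$ can be taken with $\gamma(W)=W$. These are precisely the estimates transported ``in exactly the same way'' from Proposition 3.5 of \cite{Ogata4}; the only genuinely new ingredient is the reflection–symmetric choice of the truncation about the origin, which is available exactly because $\Phi$, and hence the bulk generator $D$, is reflection invariant.
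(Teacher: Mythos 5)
Your proposal is correct and takes essentially the same route as the paper, whose entire proof consists of invoking Proposition 3.5 of \cite{Ogata4}: finite-volume quasi-adiabatic spectral flow under Condition B, thermodynamic limit via Lieb--Robinson bounds with the boundary terms $\Psi_k$ dropping out, and the splitting of the resulting quasi-local generator into left, right, and crossing parts, the crossing part being inner. Your reflection-symmetry bookkeeping (the left truncation is the $\gamma$-image of the right one, and the $\gamma$-invariant crossing generator yields $\gamma(W)=W$) is precisely the adaptation that the paper leaves implicit in the phrase ``by exactly the same way.''
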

As a corollary of this proposition and Theorem \ref{aeii}, we obtain the following.
\begin{thm}\label{c1t}
Let $\Phi:[0,1]\ni s \to \Phi(s):=\{\Phi(X;s)\}_{X\in {\mathfrak S}_\bbZ}\in {\caB_f}$ be a $C^1$-path of reflection invariant gapped interactions satisfying the {\it Condition B}.
Then
we have
$\hat \sigma_{\Phi(0)}=\hat \sigma_{\Phi(1)}$.
\end{thm}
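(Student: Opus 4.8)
The plan is to read off Theorem \ref{c1t} as a direct consequence of Proposition \ref{aep} and Theorem \ref{aeii}, so that all the work lies in checking that the endpoint ground states satisfy the hypotheses of Theorem \ref{aeii}. Write $\varphi_0$ and $\varphi_1$ for the unique $\tau^{\Phi(0)}$- and $\tau^{\Phi(1)}$-ground states supplied by condition 5 of Condition B (Definition \ref{boundary}). By the definition of the index attached to an interaction, $\hat\sigma_{\Phi(0)}=\sigma_{\varphi_0}$ and $\hat\sigma_{\Phi(1)}=\sigma_{\varphi_1}$, so it suffices to prove $\sigma_{\varphi_0}=\sigma_{\varphi_1}$.

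First I would verify that $\varphi_0$ and $\varphi_1$ are admissible inputs for Theorem \ref{aeii}. Each $\Phi(i)$ is reflection invariant by condition 6, and a unique ground state of a reflection invariant interaction is automatically pure and reflection invariant, as recorded in the text preceding Theorem \ref{matsui}. Gappedness of the bulk Hamiltonian together with non-degeneracy of the ground-state energy then yields the split property with respect to $\caA_L$ and $\caA_R$ via Theorem \ref{matsui}. Hence $\varphi_0,\varphi_1$ are reflection invariant pure states satisfying the split property, which is exactly the standing assumption of Theorem \ref{aeii}.

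With the hypotheses in place, I would apply Proposition \ref{aep} to the given $C^1$-path to obtain an automorphism $\alpha$ of $\caA$ with $\varphi_1=\varphi_0\circ\alpha$ that allows a reflection invariant decomposition $(\alpha_R,W)$ in the sense of Definition \ref{adt}. Feeding $\omega_0:=\varphi_0$, $\omega_1:=\varphi_1$, and this $\alpha$ into Theorem \ref{aeii} yields $\sigma_{\varphi_0}=\sigma_{\varphi_1}$, and translating back through the definition of $\hat\sigma$ gives $\hat\sigma_{\Phi(0)}=\hat\sigma_{\Phi(1)}$, as claimed.

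The genuinely substantive step—and the one I expect to be the main obstacle—is Proposition \ref{aep}, which I am entitled to assume but which is where the reflection structure must be controlled. There the automorphic equivalence is produced by a quasi-adiabatic (spectral-flow) evolution along the path, and the subtle point is to arrange the connecting automorphism in the factorized form $\tilde\alpha\circ\Ad(W)$ of Definition \ref{adt}, with $\alpha_R$ acting on $\caA_R$, its reflection conjugate $\gamma_{R\to L}\circ\alpha_R\circ\gamma_{L\to R}$ acting on $\caA_L$, and $W$ a $\gamma$-invariant unitary. This is where reflection invariance of the bulk interaction $\Phi(s)$ (condition 6) enters decisively; the boundary interactions $\Psi_k(s)$ serve only to certify the spectral gap on the finite boxes $\Lambda_{n_k}$ and are therefore permitted to break the symmetry, since they do not feed into the bulk automorphism that connects the two ground states. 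Granting Proposition \ref{aep}, however, the present theorem is immediate.
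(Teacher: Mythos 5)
Your proposal is correct and follows exactly the paper's own route: the paper derives Theorem \ref{c1t} precisely as a corollary of Proposition \ref{aep} (automorphic equivalence of $\varphi_0$ and $\varphi_1$ via an automorphism admitting a reflection invariant decomposition) combined with Theorem \ref{aeii}, with the admissibility of $\varphi_0,\varphi_1$ (purity, reflection invariance, split property via Theorem \ref{matsui}) entering through the definition of $\hat\sigma_\Phi$ just as you describe. Your closing remarks on where the substance lies (Proposition \ref{aep}, proved as in Proposition 3.5 of \cite{Ogata4} by spectral flow) and on why the boundary terms $\Psi_k(s)$ need not be reflection invariant also match the paper's discussion.
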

Namely, the $\bbZ_2$-index is invariant along the $C^1$-path of reflection invariant gapped interactions, satisfying the {\it Condition B}.

\section{The $\bbZ_2$-index and the modular conjugation}
In this section, we give a characterization of $\sigma_\omega$ from the point of view of 
Tomita-Takesaki modular theory.
It will be used in Section \ref{poco}, to prove that our index generalizes the index introduced in \cite{po}.
This also allows us to connect $\sigma_\omega$ with the Schmidt decomposition of $\Omega_\omega$.

First let us recall Tomita-Takesaki theory. See \cite{BR1} or \cite{takesaki2} for more information.
Let $\caM$ be a von Neumann algebra acting on a Hilbert space $\caH$.
Let $\Omega$ be a cyclic (i.e. $\caM\Omega$ is dense in $\caH$) and separating
(i.e., $x\Omega=0$, $x\in\caM$ implies $x=0$) vector
for $\caM$.
We define an anti-linear operator on $\caH$ with domain
$\caM\Omega$ by
\begin{align}\label{sdef}
Sx\Omega:=x^*\Omega,\quad x\in\caM.
\end{align}
It turns out that $S$ is closable. (Proposition 2.5.9 of \cite{BR1}.) We denote the closure by the same symbol $S$.
The operator $S$ has a polar decomposition $S=J\Delta^{\frac12}$ where $J$ is an anti-unitary $J$ called the modular conjugation associated to $(\caM,\Omega)$ and $\Delta$ is a nonsingular positive operator called modular operator associated to $(\caM,\Omega)$.
For the commutant  $\caM'$ of $\caM$, we have 
\begin{align}\label{fdef}
J\Delta^{-\frac 12}x'\Omega
=\lmk x'\rmk^*\Omega,\quad x'\in\caM'.
\end{align}
We also have
\begin{align}\label{add}
\Delta^{-\frac 12}=J\Delta^{\frac 12} J^*,\quad J^2=\unit,\quad
\Delta\Omega=J\Omega=\Omega.
\end{align}
(Proposition 2.5.11 of \cite{BR1}.)
The subspace $\caM'\Omega$ is a core of $J\Delta^{-\frac 12}$.
The Tomita-Takesaki theory states that
\begin{align}
\Delta^{it}\caM\Delta^{-it}=\caM, \; \text{for all } \; t\in\bbR\quad\text{and}\quad
J\caM J^*=\caM'.
\end{align}
From the first property, we may define a $W^*$-dynamics (i.e.,
$\sigma$-weak continuous one parameter group of automorphisms)
$\sigma_t(x):=\Delta^{it}x\Delta^{-it}$, $t\in\bbR$, $x\in\caM$ on
$\caM$. It is called the modular automorphisms associated to $(\caM,\Omega)$.

Set $\caD:=\{z\in\bbC\mid 0< \Im z< 1\}$ and denote by $\bar\caD$ its closure.
For any $x,y\in\caM$, there exists a bounded and continuous 
function $F_{x,y}$ on $\bar\caD$ which is analytic on $\caD$,
satisfying
\begin{align}\label{fxy}
F_{x,y}(t)=\braket{\Omega}{\sigma_t(x)y\Omega},\quad
F_{x,y}(t+i)=\braket{\Omega}{y\sigma_t(x)\Omega},
\end{align}
 for all $t\in\bbR$.
 This condition is called the KMS-condition for the positive linear functional $\caM\ni x\mapsto \braket{\Omega}{x \Omega}$ on $\caM$
 and 
the modular automorphisms are characterized as the unique $W^*$-dynamics
which satisfies the KMS condition for this linear functional. (Theorem 1.2 VIII \cite{takesaki2}.)

Now let us come back to our problem.
\begin{lem}\label{bas}
Let $\omega$ be
a reflection invariant pure state on $\caA$ which satisfies the split property with respect to $\caA_R$ and
$\caA_L$. 
Let $(\caH_\omega, \pi_\omega, \Omega_\omega, \Gamma_\omega)$  be a reflection-spilt representation associated to
$\omega$.
Let $s_\omega$ be a projection in $B(\caH_\omega)$ such that 
the support projection of $\Omega_\omega$ in 
$\unit_{\caH_\omega}\otimes B(\caH_\omega)$ is $\unit_{\caH_\omega}\otimes s_\omega$.
Set $\caM:=s_\omega\otimes B(s_\omega\caH_\omega)$, and
$p_\omega=s_\omega\otimes s_\omega$.
Then 
\begin{enumerate}
\item the support projection of $\Omega_\omega$ in $ B(\caH_\omega)\otimes \unit_{\caH_\omega}$ is $s_\omega\otimes \unit_{\caH_\omega}$,
\item $p_\omega\Omega_\omega=\Omega_\omega$,
\item the commutant $\caM'$ of $\caM$ in $p_\omega\hat \caH_\omega$  is 
$\caM'= B(s_\omega\caH_\omega)\otimes s_\omega$,
\item $\Omega_\omega$ is cyclic and separating for $\caM$ in $p_\omega\hat\caH_\omega$,
\item $\Gamma_\omega p_\omega=p_\omega \Gamma_\omega$.
\end{enumerate}
\end{lem}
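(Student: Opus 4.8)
The plan is to derive all five assertions from a single structural input, namely the fact proved in Theorem~\ref{indint} that $\Gamma_\omega$ acts as $\sigma_\omega$ times the tensor flip, together with the identity $\Gamma_\omega\Omega_\omega=\Omega_\omega$ (which is (\ref{guni}) with $A=\unit$). Writing $F$ for the flip $F(\xi\otimes\eta)=\eta\otimes\xi$, relation (\ref{gsd}) gives $\Gamma_\omega=\sigma_\omega F$, so that for all $a,b\in B(\caH_\omega)$ one has $\Gamma_\omega(a\otimes b)\Gamma_\omega^*=b\otimes a$; in particular $\Gamma_\omega(\unit_{\caH_\omega}\otimes x)\Gamma_\omega^*=x\otimes\unit_{\caH_\omega}$ (this is also (\ref{gxg})) and, symmetrically, $\Gamma_\omega(x\otimes\unit_{\caH_\omega})\Gamma_\omega^*=\unit_{\caH_\omega}\otimes x$. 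These conjugation rules, plus the fact that $\Gamma_\omega$ fixes $\Omega_\omega$, are essentially the only tools I would use. Claim (5) is then immediate: since $F$ commutes with the symmetric projection $p_\omega=s_\omega\otimes s_\omega$, so does $\Gamma_\omega=\sigma_\omega F$, giving $\Gamma_\omega p_\omega=p_\omega\Gamma_\omega$.

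For claim (1), I would use that a unitary fixing $\Omega_\omega$ carries support projections to support projections. Concretely, let $q\otimes\unit_{\caH_\omega}$ be the support projection of $\Omega_\omega$ in $B(\caH_\omega)\otimes\unit_{\caH_\omega}$. Because $\Gamma_\omega\Omega_\omega=\Omega_\omega$ and $\Gamma_\omega\bigl(B(\caH_\omega)\otimes\unit_{\caH_\omega}\bigr)\Gamma_\omega^*=\unit_{\caH_\omega}\otimes B(\caH_\omega)$, conjugation by $\Gamma_\omega$ sends it to the support projection of $\Omega_\omega$ in $\unit_{\caH_\omega}\otimes B(\caH_\omega)$; that is, $\unit_{\caH_\omega}\otimes q=\Gamma_\omega(q\otimes\unit_{\caH_\omega})\Gamma_\omega^*$ must equal $\unit_{\caH_\omega}\otimes s_\omega$, so $q=s_\omega$. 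Claim (2) follows at once: both $s_\omega\otimes\unit_{\caH_\omega}$ and $\unit_{\caH_\omega}\otimes s_\omega$ fix $\Omega_\omega$, hence $p_\omega\Omega_\omega=(s_\omega\otimes\unit_{\caH_\omega})(\unit_{\caH_\omega}\otimes s_\omega)\Omega_\omega=\Omega_\omega$.

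Claim (3) is the tensor-product commutation theorem. On $p_\omega\hat\caH_\omega=s_\omega\caH_\omega\otimes s_\omega\caH_\omega$ the algebra $\caM=s_\omega\otimes B(s_\omega\caH_\omega)$ is just $\unit_{s_\omega\caH_\omega}\otimes B(s_\omega\caH_\omega)$, whose commutant in $B(s_\omega\caH_\omega\otimes s_\omega\caH_\omega)$ is $B(s_\omega\caH_\omega)\otimes\unit_{s_\omega\caH_\omega}=B(s_\omega\caH_\omega)\otimes s_\omega=\caM'$ (the commutation theorem for tensor products; see \cite{takesaki}).

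The substantive step is claim (4), for which I would invoke the Schmidt decomposition $\Omega_\omega=\sum_i\lambda_i\,\xi_i\otimes\eta_i$ with $\lambda_i>0$ and orthonormal systems $\{\xi_i\}$, $\{\eta_i\}$. The standard identification of support projections with Schmidt data shows that the support of $\Omega_\omega$ in $\unit_{\caH_\omega}\otimes B(\caH_\omega)$ is $\unit_{\caH_\omega}\otimes P$, where $P$ projects onto $\overline{\spn}\{\eta_i\}$; by hypothesis $P=s_\omega$, so $\{\eta_i\}$ is an orthonormal basis of $s_\omega\caH_\omega$, and claim (1) gives symmetrically that $\{\xi_i\}$ is an orthonormal basis of $s_\omega\caH_\omega$. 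Thus $\Omega_\omega\in s_\omega\caH_\omega\otimes s_\omega\caH_\omega$ with both Schmidt systems spanning $s_\omega\caH_\omega$ and all $\lambda_i>0$. Separating for $\caM$ follows since $(s_\omega\otimes x)\Omega_\omega=\sum_i\lambda_i\,\xi_i\otimes x\eta_i=0$ forces $x\eta_i=0$ for all $i$, hence $x=0$ on $s_\omega\caH_\omega$; cyclicity follows since $(s_\omega\otimes\ketbra{\phi}{\eta_j})\Omega_\omega=\lambda_j\,\xi_j\otimes\phi$ produces every $\xi_j\otimes\phi$, whose closed span is $p_\omega\hat\caH_\omega$. (Alternatively, once cyclicity is shown, separating is automatic from claim (5): $\Gamma_\omega$ restricts to a unitary of $p_\omega\hat\caH_\omega$ fixing $\Omega_\omega$ with $\Gamma_\omega\caM\Gamma_\omega^*=\caM'$, so cyclicity for $\caM$ transfers to cyclicity for $\caM'$, equivalently separating for $\caM$.) The main obstacle—indeed the only place genuine input is required—is claim (1): for a general vector the left and right local supports need not coincide, and it is precisely the reflection unitary $\Gamma_\omega$, through its flip form, that forces $\{\xi_i\}$ and $\{\eta_i\}$ to span the same subspace $s_\omega\caH_\omega$. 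This coincidence is exactly what makes $\Omega_\omega$ separating for $\caM$ and so underlies the entire lemma.
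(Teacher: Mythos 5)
Your proof is correct, and invoking Theorem \ref{indint} is legitimate here (it precedes Lemma \ref{bas} and its proof does not use the lemma), so the only question is how your route compares with the paper's. For items \emph{1}, \emph{2}, \emph{3}, \emph{5} you give what is in substance the paper's own argument: the paper proves \emph{1} by computing $\braket{\Omega_\omega}{\lmk(1-s_\omega)\otimes\unit_{\caH_\omega}\rmk\Omega_\omega}=\braket{\Omega_\omega}{\lmk\unit_{\caH_\omega}\otimes(1-s_\omega)\rmk\Omega_\omega}=0$ from (\ref{gxg}) and $\Gamma_\omega\Omega_\omega=\Omega_\omega$, which is exactly the concrete form of your principle that a unitary fixing $\Omega_\omega$ transports support projections; it gets \emph{3} from Tomita's commutant theorem and \emph{5} from (\ref{gxg}) and the definition of $p_\omega$, as you do (you work from the flip form (\ref{gsd}) while the paper only needs the weaker conjugation relation (\ref{gxg}); this is immaterial). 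The genuine divergence is item \emph{4}. The paper argues abstractly: by item \emph{1} the support projection of $\Omega_\omega$ in $B(\caH_\omega)\otimes\unit_{\caH_\omega}$ cuts down to the identity $p_\omega$ of $\caM'$ on $p_\omega\hat\caH_\omega$, so $\Omega_\omega$ is separating for $\caM'$ and hence cyclic for $\caM$ by Proposition 2.5.3 of \cite{BR1}, the separating half of \emph{4} being the symmetric statement built into the definition of $s_\omega$ (which the paper leaves implicit). You instead expand $\Omega_\omega$ in a Schmidt decomposition, use the hypothesis on $s_\omega$ together with item \emph{1} to conclude that both Schmidt families are orthonormal bases of $s_\omega\caH_\omega$, and then verify cyclicity and separation by direct computation with rank-one operators. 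Both are sound. Your version is longer but more self-contained and explicit: it spells out both halves of \emph{4} rather than only cyclicity, and it establishes en route item \emph{1} of Lemma \ref{thir}, which the paper proves separately later from the same two ingredients. The paper's version buys brevity by quoting the standard dictionary between support projections, separating vectors, and cyclicity for the commutant.
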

\begin{proof}
To show {\it 1.},
let $s'_\omega$ be a projection in $B(\caH_\omega)$ such that 
the support projection of $\Omega_\omega$ in 
$B(\caH_\omega)\otimes \unit_{\caH_\omega}$ is $s'_\omega\otimes \unit_{\caH_\omega}$.
From (\ref{gxg}) and (\ref{guni}) with $A=\unit_{\caA}$, we have
\begin{align}
\braket{\Omega_\omega}{\lmk (1-s_\omega )\otimes\unit_{\caH_\omega} 
\rmk \Omega_\omega}
=
\braket{\Omega_\omega}{\Gamma_\omega \lmk(1- s_\omega)\otimes \unit_{\caH_\omega} 
\rmk \Gamma_\omega^*\Omega_\omega}
=\braket{\Omega_\omega}{\lmk \unit_{\caH_\omega} \otimes (1-s_\omega)
\rmk \Omega_\omega}=0.
\end{align}
Therefore, we have $1-s_\omega\le 1-s'_\omega$. Similarly, we obtain
$1-s'_\omega\le 1-s_\omega$. Hence we obtain $s_\omega=s'_\omega$.
{\it 2.} is clear from the definition.
{\it 3.} 
follows from Tomita's commutant Theorem (See Theorem 5.9 of \cite{takesaki}).
Since $\Omega_\omega$ is separating for $\caM'$, it is cyclic for
$\caM$ in $p_\omega\hat \caH_\omega$, {\it 4}.
(Proposition 2.5.3 \cite{BR1}.)
{\it 5.} is from the definition of $p_\omega=s_\omega\otimes s_\omega$ and (\ref{gxg}).
\end{proof}

From {\it 4.} of Lemma \ref{bas},  we can define modular conjugation $J_\omega$ and
modular operator $\Delta_\omega$ associated to
$(\caM,\Omega_\omega)$.
Let us investigate their properties.
\begin{lem}\label{thir}
Let $\omega$ be
a reflection invariant pure state on $\caA$ which satisfies the split property with respect to $\caA_R$ and
$\caA_L$. 
Let $(\caH_\omega, \pi_\omega, \Omega_\omega, \Gamma_\omega)$  be a reflection-spilt representation associated to
$\omega$.
Let $s_\omega$ be a projection in $B(\caH_\omega)$ such that 
the support projection of $\Omega_\omega$ in 
$\unit_{\caH_\omega}\otimes B(\caH_\omega)$ is $\unit_{\caH_\omega}\otimes s_\omega$.
Set $\caM:=s_\omega\otimes B(s_\omega\caH_\omega)$, and
$p_\omega=s_\omega\otimes s_\omega$.
Let $J_\omega$, $\Delta_\omega$ be modular conjugation, modular operator on 
$p_\omega\hat\caH_\omega$ associated to
$(\caM,\Omega_\omega)$.
Let
\begin{align}
\Omega_\omega=\sum_{k\in \Lambda} \sqrt{\lambda_k} 
\xi_k\otimes \zeta_k
\end{align}
be a Schmidt decomposition of $\Omega_\omega$.
Here $\Lambda$ is a countable set and the sequence $\{\lambda_k\}_{k\in\Lambda}\subset\bbR_{>0}$
satisfies $\sum_{k\in\Lambda}\lambda_k=1$. 
Furthermore, each of $\{\xi_k\}_{k\in\Lambda}$ and $\{\zeta_k\}_{k\in\Lambda}$
are orthonormal sets of $\caH_\omega$.
We also define a density matrix $\rho_\omega$ by
\begin{align}
\rho_\omega=\sum_{k\in\Lambda} \lambda_k \ketbra{\zeta_k}{\zeta_k}.
\end{align}
Then we have the following.:
\begin{enumerate}
\item
Both of $\{\xi_k\}_{k\in\Lambda}$ and $\{\zeta_k\}_{k\in\Lambda}$ are orthonormal basis of $s_\omega\caH_\omega$.
There exists a unitary $u$ on $s_\omega\caH_\omega$ such that
$\xi_k=u\zeta_k$, for each $k\in\Lambda$. 
\item 
The action of the modular operator is given by
\begin{align}
\Delta_\omega^{\frac 12}\lmk s_\omega\otimes x\rmk\Omega_\omega
=\sum_{k\in\Lambda} u\zeta_k\otimes \rho_\omega^{\frac 12} x\zeta_k,\quad
x\in B(s_\omega\caH_\omega).
\end{align}
In particular, if the rank of $\rho_\omega$
is finite  then we have
\begin{align}
\Delta_\omega^{\frac 12}\lmk s_\omega\otimes x\rmk\Omega_\omega
=\lmk s_\omega\otimes \rho_\omega^{\frac 12}x\rho_\omega^{-\frac 12}\rmk\Omega_\omega,
\quad
x\in B(s_\omega\caH_\omega).
\end{align}
\item
Let $c$ be the complex conjugation on $s_\omega \caH_\omega$ given by
$c\zeta_k=\zeta_k$ for $k\in\Lambda$.
Then we have
\begin{align}
\lmk s_\omega\otimes x\rmk\Omega_\omega
=\sum_{l\in\Lambda} u\rho_\omega^{\frac 12} c^* x^*c\zeta_l\otimes \zeta_l,\quad
x\in B(s_\omega\caH_\omega).
\end{align}
\item
The adjoint of the modular conjugation on $\caM$
is given by
\begin{align}
J_\omega\lmk
s_\omega\otimes x
\rmk
J_\omega^*
=uc^* xc u^*\otimes s_\omega,\quad x\in B(s_\omega\caH_\omega).
\end{align}

\end{enumerate}
\end{lem}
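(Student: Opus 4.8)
The plan is to compute the Tomita operator $S$ of $(\caM,\Omega_\omega)$ explicitly on the core $\caM\Omega_\omega$ using the Schmidt decomposition, and then read off its polar decomposition $S=J_\omega\Delta_\omega^{\frac12}$; all four assertions follow once the candidate modular data are pinned down. I begin with \emph{1}. By Lemma \ref{bas}, $\Omega_\omega$ is cyclic and separating for $\caM$ in $p_\omega\hat\caH_\omega=s_\omega\caH_\omega\otimes s_\omega\caH_\omega$. The support projection of $\Omega_\omega$ in $\unit_{\caH_\omega}\otimes B(\caH_\omega)$ equals the projection onto $\overline{\lmk B(\caH_\omega)\otimes\unit_{\caH_\omega}\rmk\Omega_\omega}=\caH_\omega\otimes\overline{\spn\{\zeta_k\}}$, and this is $\unit_{\caH_\omega}\otimes s_\omega$ by hypothesis, so $\{\zeta_k\}$ is an orthonormal basis of $s_\omega\caH_\omega$. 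Running the same argument on the left factor, together with \emph{1.} of Lemma \ref{bas} (the support of $\Omega_\omega$ in $B(\caH_\omega)\otimes\unit_{\caH_\omega}$ is $s_\omega\otimes\unit_{\caH_\omega}$), shows $\{\xi_k\}$ is likewise an orthonormal basis of $s_\omega\caH_\omega$. As both are indexed by $\Lambda$, setting $u\zeta_k:=\xi_k$ extends to a unitary $u$ on $s_\omega\caH_\omega$.

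Assertion \emph{3.} is a purely algebraic identity. Expanding $(s_\omega\otimes x)\Omega_\omega=\sum_k\sqrt{\lambda_k}\,\xi_k\otimes x\zeta_k$ and re-expanding $x\zeta_k=\sum_l x_{lk}\zeta_l$ with $x_{lk}=\braket{\zeta_l}{x\zeta_k}$, I would collect the terms carrying a fixed right vector $\zeta_l$ and recognize the resulting left coefficient as $u\rho_\omega^{\frac12}c^*x^*c\zeta_l$, using $\rho_\omega^{\frac12}\zeta_k=\sqrt{\lambda_k}\zeta_k$, $c\zeta_l=\zeta_l$, and $u\zeta_k=\xi_k$. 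This is the claimed formula.

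For \emph{2.} I would take as candidate modular operator the positive self-adjoint operator $\Delta_\omega$ that is diagonal in the product basis $\{\xi_a\otimes\zeta_b\}$ of $p_\omega\hat\caH_\omega$ with eigenvalue $\lambda_b/\lambda_a$ (formally $\rho_L^{-1}\otimes\rho_\omega$ with $\rho_L:=u\rho_\omega u^*=\sum_k\lambda_k\ketbra{\xi_k}{\xi_k}$); since every $\lambda_a>0$, this is well defined irrespective of invertibility of $\rho_\omega$. A one-line computation gives $\Delta_\omega\Omega_\omega=\Omega_\omega$ and, on the core, $\Delta_\omega^{\frac12}(s_\omega\otimes x)\Omega_\omega=\sum_k\sqrt{\lambda_k}\,\rho_L^{-\frac12}\xi_k\otimes\rho_\omega^{\frac12}x\zeta_k=\sum_k u\zeta_k\otimes\rho_\omega^{\frac12}x\zeta_k$, where the factor $\sqrt{\lambda_k}$ cancels the singularity of $\rho_L^{-\frac12}$ so the expression stays meaningful even when $\rho_\omega$ is not invertible. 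The finite-rank formula is then immediate from $\rho_\omega^{-\frac12}\zeta_k=\lambda_k^{-\frac12}\zeta_k$.

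Finally, for \emph{4.} I would introduce the antiunitary $J_\omega$ determined on the product basis by $J_\omega(\xi_a\otimes\zeta_b)=\xi_b\otimes\zeta_a$ (antiunitary because it permutes an orthonormal basis, with $J_\omega^2=\unit$) and verify on the core, combining \emph{2.} and \emph{3.}, that $J_\omega\Delta_\omega^{\frac12}(s_\omega\otimes x)\Omega_\omega=(s_\omega\otimes x^*)\Omega_\omega=S(s_\omega\otimes x)\Omega_\omega$. Evaluating $J_\omega(s_\omega\otimes x)J_\omega^*$ on the basis then gives $uc^*xcu^*\otimes s_\omega$, which is the desired formula and which consistently lands in $\caM'=B(s_\omega\caH_\omega)\otimes s_\omega$ (Lemma \ref{bas}, \emph{3.}). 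The main obstacle is precisely this identification step: because $\Delta_\omega$ is in general unbounded and $\rho_\omega$ need not be invertible, one cannot manipulate $\Delta_\omega^{\frac12}$ freely and must argue entirely on $\caM\Omega_\omega$; to make uniqueness of the polar decomposition rigorous I would also compute $S^*$ on the core $\caM'\Omega_\omega$ afforded by Lemma \ref{bas}, \emph{3.}, thereby checking $\Delta_\omega=S^*S$ and concluding that $J_\omega$, $\Delta_\omega$ are the modular conjugation and modular operator.
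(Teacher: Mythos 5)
Your proposal is correct, and for the two substantive items, \emph{2.} and \emph{4.}, it takes a genuinely different route from the paper. You write down candidate modular data explicitly: $\Delta_\omega$ as the closure of the operator diagonal in the product basis $\{\xi_a\otimes\zeta_b\}$ with eigenvalues $\lambda_b/\lambda_a$, and $J_\omega$ as the antilinear basis swap, and you then identify $(J_\omega,\Delta_\omega)$ with the true modular data via uniqueness of the polar decomposition of the Tomita operator $S$. The paper never constructs candidates: it first identifies the modular \emph{automorphism group} as $\alpha_t=\Ad\lmk s_\omega\otimes \rho_\omega^{it}\rmk$ by verifying the KMS condition for the state $\braket{\Omega_\omega}{\cdot\;\Omega_\omega}$ (using truncations $x_n=Q_n x Q_n$ and a Phragmen-Lindel\"{o}f limit argument) and invoking the uniqueness of KMS dynamics; item \emph{2.} then follows by analytic continuation, $\Delta_\omega^{1/2}\lmk s_\omega\otimes x_n\rmk\Omega_\omega=\alpha_{-i/2}\lmk s_\omega\otimes x_n\rmk\Omega_\omega$, and item \emph{4.} from the identity $J_\omega\lmk s_\omega\otimes x\rmk\Omega_\omega=\Delta_\omega^{1/2}\lmk s_\omega\otimes x^*\rmk\Omega_\omega$ combined with \emph{3.} and the fact that $\Omega_\omega$ is separating for $\caM'$, so no candidate $J$ ever needs to be recognized as the modular conjugation. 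Your route is the classical explicit type I computation, shorter and more concrete; its price is exactly the domain bookkeeping you flag at the end: the inclusion $S_0\subseteq J_\omega\Delta_\omega^{1/2}$ checked on $\caM\Omega_\omega$ alone does not pin down the polar decomposition, so one must either show $\caM\Omega_\omega$ is a core for your $\Delta_\omega^{1/2}$, or, as you propose, also verify on the core $\caM'\Omega_\omega$ of $S^*$ that $x'\Omega_\omega\mapsto (x')^*\Omega_\omega$ is contained in $J_\omega\Delta_\omega^{-1/2}$, and then use $J_\omega\Delta_\omega^{1/2}J_\omega=\Delta_\omega^{-1/2}$ and adjoints to squeeze $S=J_\omega\Delta_\omega^{1/2}$; with that supplied, uniqueness of the polar decomposition finishes the proof. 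The paper trades this functional-analytic care for the approximation estimates in the KMS verification; items \emph{1.} and \emph{3.} are essentially identical in both treatments (the paper merely handles the double-sum rearrangement in \emph{3.} carefully via the truncations $Q_n x Q_m$ and norm limits).
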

\begin{proof}

By the definition of $s_\omega$ and {\it 1.} of Lemma \ref{bas}, we see that
each of $\{\xi_k\}_{k\in\Lambda}$ and $\{\zeta_k\}_{k\in\Lambda}$
are orthonormal basis of $s_\omega \caH_\omega$.
Therefore, there is a unitary $u$ on $s_\omega \caH_\omega$
such that $\xi_k=u\zeta_k$, for all $k\in\Lambda$.
This $u$ is given as
$u=\sum_{k\in\Lambda}\ketbra{\xi_k}{\zeta_k}$
where the summation converges in the strong topology.
This proves {\it 1.}

Let $\Lambda_n$, $n\in\nan$ be an increasing sequence of finite subsets of $\Lambda$ such that
$\Lambda_n\nearrow \Lambda$. 
Set 
\begin{align}
Q_n:=\sum_{k\in\Lambda_n} \ketbra{\zeta_k}{\zeta_k}.
\end{align}
For $x\in B(s_\omega\caH_\omega)$ and $n\in\nan$, set $x_n:=Q_n x Q_n$.
Note that the sequence $x_n\in B(s_\omega\caH_\omega)$ approximates $x$ in the $\sigma$-strong$^*$ topology.

We would like to specify the action of the modular operator {\it 2.} We claim
\begin{align}\label{ikms}
\Delta_\omega^{\frac 12}\lmk s_\omega\otimes x\rmk\Omega_\omega
=\sum_{k\in\Lambda} u\zeta_k\otimes \rho_\omega^{\frac 12} x\zeta_k,\quad
x\in B(s_\omega\caH_\omega).
\end{align}
Note that the right hand side converges in norm because $\rho_\omega$ is in the trace class.
To prove (\ref{ikms}), we first specify the modular automorphism $\sigma$ with respect to
$(\caM,\Omega_\omega)$.
We define a $W^*$-dynamics $\alpha$ on $\caM$ by
\begin{align}
\alpha_t(s_\omega\otimes x)
:=s_\omega\otimes \Ad(\rho_\omega^{it})(x),\quad t\in\bbR,\; x\in B(s_\omega\caH_\omega),
\end{align}
and show that $\alpha=\sigma$.
To do that, we recall the uniqueness of the $W^*$-dynamics which satisfies the KMS condition.
Let $\varphi$ be a state on $\caM$ given by
\begin{align}
\varphi\lmk s_\omega\otimes x\rmk
=\braket{\Omega_\omega}{\lmk
s_\omega\otimes x
\rmk
\Omega_\omega},
\quad x\in B(s_\omega \caH_\omega).
\end{align}
Note that 
\begin{align}\label{rp}
\Tr_{\caH_\omega}\lmk\rho_\omega x\rmk=
\braket{\Omega_\omega}{\lmk
s_\omega\otimes x\rmk
\Omega_\omega}=\varphi\lmk s_\omega\otimes x\rmk,\quad
x\in B(s_\omega\caH_\omega).
\end{align}
From this, we can see that $\varphi$ is $\alpha$-invariant.
We show that $\alpha$ satisfies the KMS-condition for $\varphi$.
This follows from the standard argument like in Proposition 5.3.7 of \cite{BR2}.: 
For any $x,y\in B(s_\omega\caH_\omega)$, and $n\in\nan$, 
setting $x_n:=Q_n x Q_n$,
we may define an entire analytic function
\begin{align}
F_{x_n,y}(z):=\braket{\Omega_\omega}{ \alpha_z\lmk s_\omega\otimes x_n\rmk
\lmk s_\omega\otimes y\rmk\Omega_\omega},\quad z\in\bbC,
\end{align}
because 
$\bbR\ni t\mapsto \alpha_t(s_\omega\otimes x_n)\in \caM$ has an analytic continuation
$\alpha_z(s_\omega\otimes x_n)=s_\omega\otimes \Ad(\rho_\omega^{iz})(x_n)\in\caM$, $z\in\bbC$.
This $F_{x_n,y}(z)$  is bounded and continuous on $\bar\caD$ and
analytic on $\caD$.
Furthermore, it satisfies the boundary condition (\ref{fxy}) i.e.,
we have
\begin{align}\label{bound}
F_{x_n,y}(t)=\braket{\Omega_\omega}{\alpha_t\lmk s_\omega\otimes x_n\rmk \lmk
s_\omega\otimes y\rmk
\Omega_\omega},\quad
F_{x_n,y}(t+i)=\braket{\Omega_\omega}{ \lmk s_\omega\otimes y\rmk
\alpha_t\lmk  s_\omega\otimes x_n)\rmk\Omega_\omega},
\end{align}
for all $t\in\bbR$.
The second property holds because of (\ref{rp}) and the property of the trace.

By the $\alpha$-invariance of $\varphi$ and the $\sigma$-strong$^*$ convergence of
$x_n$ to $x$, (using the Cauchy-Schwartz inequality,)
one can show from (\ref{bound}) that
$F_{x_n,y}(t)$ and $F_{x_n,y}(t+i)$, as functions of $t\in\bbR$,
are Cauchy sequence of continuous bounded functions on $\bbR$
with respect to the uniform norm.
Therefore, 
by the Phragmen-Lindel\"{o}f theorem, $\{F_{x_n,y}(z)\}_n$  is a Cauchy sequence of continuous bounded functions on $\bar\caD$ with respect to the uniform norm.
Therefore, $F_{x_n,y}(z)$ has a limit $F_{x,y}(z)$ on $z\in \bar \caD$, which 
is bounded and continuous on $\bar \caD$ and analytic on $\caD$.
We also have
\begin{align}
&F_{x,y}(t)=\lim_{n\to\infty}F_{x_n,y}(t)=\braket{\Omega_\omega}{ \alpha_{t}(s_\omega\otimes x)\lmk s_\omega\otimes y\rmk \Omega_\omega},\\
&F_{x,y}(t+i)=\lim_{n\to\infty}F_{x_n,y}(t+i)=\braket{\Omega_\omega}{ \lmk
s_\omega\otimes y\rmk \alpha_{t}(s_\omega\otimes x)\Omega_\omega},
\end{align}
for any $t\in\bbR$.
Therefore, $\alpha$ satisfies the KMS condition for $\varphi$, and from the uniqueness, we get 
$\alpha=\sigma$.
Now let us prove (\ref{ikms}).
For each $x\in B(s_\omega\caH_\omega)$, we again consider $x_n=Q_nx Q_n$, $n\in\nan$.
By the entire analyticity of $s_\omega\otimes x_n$ with respect to $\alpha=\sigma$,
we have
\begin{align}\label{61}
\Delta_\omega^{\frac 12} \lmk s_\omega\otimes x_n\rmk\Omega_\omega
=\alpha_{ -\frac i 2} \lmk s_\omega\otimes x_n\rmk \Omega_\omega
=\lmk s_\omega\otimes \Ad(\rho_\omega^{\frac {1}2})(x_n)\rmk\Omega_\omega
=\sum_{k\in\Lambda} u\zeta_k\otimes \rho_\omega^{\frac 12} x_n\zeta_k.
\end{align}
(The first equation is a standard argument. See proof of Theorem 5.5 of \cite{DJP} for example.)
The left hand side of (\ref{61}) converges to $\Delta_\omega^{\frac 12} \lmk s_\omega\otimes x\rmk\Omega_\omega$
 as $n\to\infty$ because of 
 \begin{align}
 \lV
 \Delta_\omega^{\frac 12} \lmk s_\omega\otimes x_n\rmk\Omega_\omega
 -\Delta_\omega^{\frac 12} \lmk s_\omega\otimes x\rmk\Omega_\omega
 \rV
 =\lV
 J_\omega\Delta_\omega^\frac 12\lmk s_\omega\otimes x_n\rmk\Omega_\omega
 -J_\omega\Delta_\omega^\frac 12\lmk s_\omega\otimes x\rmk\Omega_\omega
 \rV
 =\lV
 \lmk s_\omega\otimes (x_n^*-x^*)\rmk\Omega_\omega
 \rV,
 \end{align}
and the $\sigma$-strong$^*$-convergence $x_n\to x$.
The right hand of (\ref{61}) converges to $\sum_{k\in\Lambda} u\zeta_k\otimes \rho_\omega^{\frac 12} x\zeta_k$
because of
\begin{align}
\lV
\sum_{k\in\Lambda} u\zeta_k\otimes \rho_\omega^{\frac 12} (x_n-x)\zeta_k
\rV^2
=\sum_{k\in\Lambda}  \lV \rho_\omega^{\frac 12} (x_n-x)\zeta_k\rV^2,
\end{align}
and the $\sigma$-strong$^*$-convergence $x_n\to x$.
Hence we have proven (\ref{ikms}).

Next we show {\it 3.}
\begin{align}\label{6ii}
\lmk s_\omega\otimes x\rmk\Omega_\omega
=\sum_{l\in\Lambda} u\rho_\omega^{\frac 12} c^* x^*c\zeta_l\otimes \zeta_l,\quad
x\in B(s_\omega\caH_\omega).
\end{align}
The right hand side converges in norm.
To prove (\ref{6ii}), first we consider
\begin{align}\label{tot2}
\lmk
s_\omega\otimes Q_n xQ_m
\rmk\Omega_\omega
=\sum_{k\in\Lambda_m}\sum_{l\in\Lambda_n} 
\sqrt{\lambda_k}u\zeta_k\otimes
\ket{\zeta_l}\braket{\zeta_l}{x\zeta_k},
\end{align}
for $n,m\in\nan$.
Since we have $\braket{\zeta_l}{x\zeta_k}=\braket{c\zeta_l}{xc\zeta_k}=\braket{c^* xc \zeta_k}{\zeta_l}$ 
by $c\zeta_k=\zeta_k$, 
we have
\begin{align}
(\ref{tot2})=
\sum_{l\in\Lambda_n} \sum_{k\in\Lambda_m}
\sqrt{\lambda_k}\ket{u\zeta_k}\braket{c^* xc \zeta_k}{\zeta_l}\otimes\ket{\zeta_l}
=\sum_{l\in\Lambda_n} u Q_m \rho_\omega^{\frac 12} Q_m c^* x^*c\zeta_l\otimes \zeta_l.
\end{align}
Hence we obtain
\begin{align}\label{ext}
\lmk
s_\omega\otimes Q_n xQ_m
\rmk\Omega_\omega
=\sum_{l\in\Lambda_n} u Q_m \rho_\omega^{\frac 12} Q_m c^* x^*c\zeta_l\otimes \zeta_l.
\end{align}
Taking $m\to\infty$, and then $n\to\infty$,
we obtain (\ref{6ii}).

Next we consider the action of $J_\omega$ {\it 4}. 
We claim
\begin{align}\label{lml}
J_\omega\lmk
s_\omega\otimes x
\rmk
J_\omega^*
=uc^* xc u^*\otimes s_\omega,\quad x\in B(s_\omega\caH_\omega).
\end{align}
To prove this, note that
\begin{align}
J_\omega\lmk
s_\omega\otimes x
\rmk\Omega_\omega
=J_\omega J_\omega\Delta_\omega^\frac 12\lmk
s_\omega\otimes x^*
\rmk\Omega_\omega
=\Delta_\omega^{\frac 12}
\lmk
s_\omega\otimes x^*
\rmk\Omega_\omega
=\sum_{k\in\Lambda}u\zeta_k\otimes \rho_\omega^{\frac 12} x^*\zeta_k
=\lim_{n\to\infty} \sum_{k\in\Lambda_n}u\zeta_k\otimes \rho_\omega^{\frac 12} x^*\zeta_k,
\end{align}
for any $ x\in B(s_\omega\caH_\omega)$, by {\it 2}.
The right hand side converges in norm.
Therefore, for any $m\in\nan$, we have
\begin{align}
\lmk
s_\omega\otimes Q_m
\rmk
J_\omega\lmk
s_\omega\otimes x
\rmk\Omega_\omega
=\lim_{n\to\infty} 
\sum_{k\in\Lambda_n}u\zeta_k\otimes Q_m\rho_\omega^{\frac 12} x^*\zeta_k,\quad
 x\in B(s_\omega\caH_\omega).
\end{align}
Note that as in the proof of (\ref{ext}), we have
\begin{align}
&
\sum_{k\in\Lambda_n}u\zeta_k\otimes Q_m\rho_\omega^{\frac 12} x^*\zeta_k
=
\sum_{l\in \Lambda_m}\sum_{k\in\Lambda_n}u\zeta_k\otimes \ket{\zeta_l}\braket{\zeta_l}{\rho_\omega^{\frac 12} x^*\zeta_k}\nonumber\\
&=\sum_{l\in \Lambda_m}\sum_{k\in\Lambda_n}u\zeta_k\braket{c^*\rho_\omega^{\frac 12} x^*c\zeta_k}{\zeta_l}\otimes {\zeta_l}=\sum_{l\in \Lambda_m}\sum_{k\in\Lambda_n}u
\ketbra{\zeta_k}{\zeta_k} c^*x\rho_\omega^{\frac 12} c\zeta_l\otimes\zeta_l
=\lmk uQ_n u^*\otimes s_\omega\rmk
 \sum_{l\in \Lambda_m}uc^* x\rho_\omega^{\frac 12}c\zeta_l\otimes \zeta_l\nonumber\\
&=\lmk uQ_n u^*\otimes s_\omega\rmk
\sum_{l\in \Lambda_m}\sqrt{\lambda_l}uc^* xc u^* u \zeta_l\otimes \zeta_l
=\lmk uQ_n u^*uc^* xc u^*\otimes s_\omega\rmk
\sum_{l\in \Lambda_m}\sqrt{\lambda_l} u \zeta_l\otimes \zeta_l
=\lmk uQ_n u^*uc^* xc u^*\otimes s_\omega\rmk(s_\omega\otimes Q_m)\Omega_\omega,
\end{align}
for any $ x\in B(s_\omega\caH_\omega)$.
Taking $n\to\infty$, we obtain
\begin{align}
\lmk
s_\omega\otimes Q_m
\rmk
J_\omega\lmk
s_\omega\otimes x
\rmk\Omega_\omega
=\lim_{n\to\infty} 
\sum_{k\in\Lambda_n}u\zeta_k\otimes Q_m\rho_\omega^{\frac 12} x^*\zeta_k
=\lmk uc^* xc u^*\otimes Q_m\rmk\Omega_\omega,
\end{align}
for any $ x\in B(s_\omega\caH_\omega)$.
Taking $m\to\infty$, we obtain
\begin{align}
J_\omega\lmk
s_\omega\otimes x
\rmk
J_\omega^*
\Omega_\omega
=J_\omega\lmk
s_\omega\otimes x
\rmk\Omega_\omega
=\lmk uc^* xc u^*\otimes s_\omega\rmk\Omega_\omega,\quad x\in B(s_\omega\caH_\omega),
\end{align}
for any $ x\in B(s_\omega\caH_\omega)$.
Note that $J_\omega\lmk
s_\omega\otimes x
\rmk
J_\omega^*\in \Ad(J_\omega)(\caM)=\caM'$ and
$ uc^* xc u^*\otimes s_\omega\in\caM'$.
Since $\Omega_\omega$ is separating for $\caM'$,
we have
\begin{align}
J_\omega\lmk
s_\omega\otimes x
\rmk
J_\omega^*
=uc^* xc u^*\otimes s_\omega,\quad x\in B(s_\omega\caH_\omega),
\end{align}
proving the claim.

\end{proof}

The adjoint action of $J_\omega$ on $\caM$ introduces a $\bbZ_2$-index.
\begin{prop}\label{kdef}
Let $\omega$ be
a reflection invariant pure state on $\caA$ which satisfies the split property with respect to $\caA_R$ and
$\caA_L$. 
Let $(\caH_\omega, \pi_\omega, \Omega_\omega, \Gamma_\omega)$  be a reflection-spilt representation associated to
$\omega$.
Let $s_\omega$ be a projection in $B(\caH_\omega)$ such that 
the support projection of $\Omega_\omega$ in 
$\unit_{\caH_\omega}\otimes B(\caH_\omega)$ is $\unit_{\caH_\omega}\otimes s_\omega$.
Set $\caM:=s_\omega\otimes B(s_\omega\caH_\omega)$, and
$p_\omega=s_\omega\otimes s_\omega$.
Let $J_\omega$ be the modular conjugation on $p_\omega\hat\caH_\omega$ associated to
$(\caM,\Omega_\omega)$.
Then we have the following:
\begin{enumerate}
\item
There exists an anti-unitary $\theta: s_\omega\caH_\omega\to s_\omega\caH_\omega$
such that 
\begin{align}
J_\omega\lmk s_\omega\otimes x\rmk J_\omega^*
=\theta x\theta^*\otimes s_\omega,\label{jtj}
\end{align}
for all $x\in B(s_\omega\caH_\omega)$.
\item
For any anti-unitary $\theta: s_\omega\caH_\omega\to s_\omega\caH_\omega$ satisfying
(\ref{jtj}), we have
\begin{align}
J_\omega\lmk x\otimes s_\omega\rmk J_\omega^*
=s_\omega\otimes \theta x\theta^*,
\end{align}
for all $x\in B(s_\omega\caH_\omega)$.
\item There exists a $\kappa_\omega\in\{\pm 1\}$ satisfying $\theta^2=\kappa_\omega s_\omega$
for any anti-unitary $\theta: s_\omega\caH_\omega\to s_\omega\caH_\omega$
satisfying (\ref{jtj}).
\end{enumerate}
\end{prop}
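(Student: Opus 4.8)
The plan is to produce the anti-unitary $\theta$ explicitly from Lemma \ref{thir}, and then use the flip implemented by $\Gamma_\omega$ to transport the modular data from $\caM$ to its commutant. For the first assertion, set $\theta:=uc$, with $u$ the unitary and $c$ the complex conjugation supplied by Lemma \ref{thir}. Being the composition of a unitary and an anti-unitary, $\theta$ is anti-unitary on $s_\omega\caH_\omega$; since $c^*=c$ we have $\theta^*=cu^*$ and hence $\theta x\theta^*=ucxcu^*$. Thus {\it 4.} of Lemma \ref{thir} reads $J_\omega(s_\omega\otimes x)J_\omega^*=\theta x\theta^*\otimes s_\omega$, which is exactly (\ref{jtj}).

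The key point for the second assertion is that $\Gamma_\omega$ commutes with $J_\omega$. By {\it 5.} of Lemma \ref{bas}, $\Gamma_\omega$ restricts to a unitary on $p_\omega\hat\caH_\omega$; by (\ref{guni}) with $A=\unit_{\caA}$ it fixes $\Omega_\omega$; and by Theorem \ref{indint} it acts as $\sigma_\omega$ times the tensor flip, so $\Gamma_\omega(s_\omega\otimes x)\Gamma_\omega^*=x\otimes s_\omega$ and, by {\it 3.} of Lemma \ref{bas}, $\Gamma_\omega\caM\Gamma_\omega^*=\caM'$. A unitary that fixes the cyclic and separating vector and carries $\caM$ onto $\caM'$ intertwines the closed operator $S$ of $(\caM,\Omega_\omega)$ with that of $(\caM',\Omega_\omega)$; since the modular conjugation of $(\caM',\Omega_\omega)$ coincides with $J_\omega$, uniqueness of the polar decomposition yields $\Gamma_\omega J_\omega\Gamma_\omega^*=J_\omega$. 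Conjugating the identity of the first assertion by $\Gamma_\omega$ then gives
\begin{align}
J_\omega\lmk x\otimes s_\omega\rmk J_\omega^*
=\Gamma_\omega J_\omega\lmk s_\omega\otimes x\rmk J_\omega^*\Gamma_\omega^*
=\Gamma_\omega\lmk\theta x\theta^*\otimes s_\omega\rmk\Gamma_\omega^*
=s_\omega\otimes\theta x\theta^*,
\end{align}
which is the second assertion for this $\theta$.

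For the third assertion I would iterate, using $J_\omega^2=p_\omega$ from (\ref{add}). The right-hand side of the first assertion lies in $\caM'$, so applying $\Ad(J_\omega)$ to it and invoking the second assertion with $\theta x\theta^*$ in place of $x$ gives
\begin{align}
s_\omega\otimes x
=J_\omega^2\lmk s_\omega\otimes x\rmk\lmk J_\omega^2\rmk^*
=J_\omega\lmk\theta x\theta^*\otimes s_\omega\rmk J_\omega^*
=s_\omega\otimes\theta^2 x\lmk\theta^2\rmk^*,
\end{align}
so $\theta^2 x\lmk\theta^2\rmk^*=x$ for every $x\in B(s_\omega\caH_\omega)$. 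As $\theta^2$ is unitary this forces $\theta^2=\mu s_\omega$ with $|\mu|=1$; applying $\theta$ to $\theta^2=\mu s_\omega$ on either side and using anti-linearity gives $\bar\mu\theta=\mu\theta$, so $\mu\in\bbR$ and $\mu=\kappa_\omega\in\{\pm1\}$. Finally, any two anti-unitaries obeying (\ref{jtj}) differ by a phase that squares to $1$, so both $\theta x\theta^*$ and $\theta^2$ are independent of this choice; hence the second and third assertions hold for every admissible $\theta$.

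The step I expect to be the main obstacle is the commutation $\Gamma_\omega J_\omega=J_\omega\Gamma_\omega$, i.e. the covariance of the modular conjugation under the flip: one must verify carefully that $\Gamma_\omega$ genuinely conjugates the operator $S$ of $(\caM,\Omega_\omega)$ into that of $(\caM',\Omega_\omega)$ on a common core, and then separate the anti-unitary and positive factors of the polar decomposition. It is instructive that a direct computation of the commutant side, mirroring Lemma \ref{thir}, produces $s_\omega\otimes\theta^* x\theta$ rather than $s_\omega\otimes\theta x\theta^*$; the two expressions coincide precisely because $\theta^2$ is central, so this is another face of the third assertion and a useful consistency check. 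A secondary technical point is that the explicit $\theta=uc$ and the identities above must be justified when the Schmidt rank of $\Omega_\omega$ is infinite, exactly as the norm-convergent sums were controlled in Lemma \ref{thir}.
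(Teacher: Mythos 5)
Your proposal is correct and is essentially the paper's own proof: assertion \emph{1} is read off from Lemma \ref{thir} \emph{4} with $\theta=uc$; assertion \emph{2} rests on the commutation of $\Gamma_\omega p_\omega$ with $J_\omega$, which the paper establishes by exactly the same intertwining/polar-decomposition argument (the paper merely packages its consequence as the coincidence of two anti-$*$-automorphisms $\Theta_{R\to L}=\Theta_{L\to R}$ rather than conjugating (\ref{jtj}) directly by the flip); and assertion \emph{3} is the identical $J_\omega^2=p_\omega$ plus anti-linearity computation, with the same phase argument for independence of the choice of $\theta$. The one imprecision is your phrase that two anti-unitaries satisfying (\ref{jtj}) ``differ by a phase that squares to $1$'': the phase $c\in\bbT$ is in fact arbitrary, but the needed conclusion $\theta_1^2=\theta^2$ still holds because anti-linearity gives $(c\theta)^2=c\bar{c}\,\theta^2=\theta^2$.
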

\begin{proof}
Let $\Delta_\omega$ be the modular operator of
$(\caM,\Omega_\omega)$.
From {\it 5.} of Lemma \ref{bas}, 
$\tilde\Gamma:=\Gamma_\omega p_\omega $ defines a unitary operator
on $p_\omega\hat\caH_\omega$.
Note that $\tilde \Gamma\Omega_\omega=\Omega_\omega$, because of Lemma \ref{bas} {\it 2.} and Definition \ref{rsr}
{\it 4.}
We claim that 
\begin{align}\label{tgp}
J_\omega\tilde\Gamma=\tilde \Gamma J_\omega.
\end{align}
To see this, we recall (\ref{sdef}) and (\ref{fdef}).
For any $x\in\caM$, we have
\begin{align}\label{tochu}
\tilde\Gamma J_\omega\Delta_\omega^{\frac 12}x\Omega_\omega=\tilde\Gamma x^*\Omega_\omega=\tilde\Gamma x^*\tilde\Gamma^*\tilde\Gamma\Omega_\omega
=\tilde\Gamma x^*\tilde\Gamma^*\Omega_\omega.
\end{align}
Note that from (\ref{gxg}), the element $\tilde\Gamma x^*\tilde\Gamma^*$ belongs to $\caM'$.
Therefore, from (\ref{fdef}), we have
\begin{align}
(\ref{tochu})=J_\omega\Delta_\omega^{-\frac 12}\tilde\Gamma x\tilde\Gamma^*\Omega_\omega
=J_\omega\Delta_\omega^{-\frac 12}\tilde\Gamma x\Omega_\omega.
\end{align}
Since $\caM\Omega_\omega$ is a core of $J_\omega\Delta_\omega^{\frac12}$ and $\caM'\Omega_\omega$
 is a core of $J_\omega\Delta_\omega^{-\frac12}$,
 this means 
 \begin{align}
 \tilde\Gamma J_\omega\Delta_\omega^{\frac 12}
 =J_\omega\Delta_\omega^{-\frac 12}\tilde\Gamma
 =J_\omega\tilde\Gamma\tilde\Gamma^*\Delta_\omega^{-\frac 12}\tilde\Gamma.
 \end{align}
 By the uniqueness of the polar decomposition, we obtain 
 $\tilde\Gamma J_\omega=J_\omega\tilde\Gamma$, proving the claim.

Next we note that there are anti-$*$--automorphisms $\Theta_{L\to R},\Theta_{R\to L}$
on $B(s_\omega\caH_\omega)$ such that 
\begin{align}
&J_\omega\lmk s_\omega\otimes x\rmk J_\omega^*
=\Theta_{R\to L}(x)\otimes s_\omega,\label{jj1}\\
&J_\omega\lmk x\otimes s_\omega\rmk J_\omega^*
=s_\omega\otimes \Theta_{L\to R}(x)\label{jj2},
\end{align}
for any $x\in B(s_\omega\caH_\omega)$.
This is because of the Tomita-Takesaki theory, i.e.,
$J_\omega\caM J_\omega^*=\caM'$ and 
$J_\omega\caM' J_\omega^*=\caM$.
Note that $\Theta_{L\to R}\circ \Theta_{R\to L}=\id=\Theta_{R\to L}\circ\Theta_{L\to R}$, because of
$J_\omega^2=s_\omega$.
By (\ref{gxg}) and (\ref{tgp}), this $\Theta_{L\to R}$ and $\Theta_{R\to L}$
coinsides.:For any $x\in B(s_\omega\caH_\omega)$,
we have 
\begin{align}
&\Theta_{R\to L}(x)\otimes s_\omega
=(\Ad J_\omega)\lmk s_\omega\otimes x\rmk
=\lmk \Ad J_\omega\tilde\Gamma\rmk
\lmk x\otimes s_\omega\rmk
=\lmk \Ad \tilde\Gamma J_\omega\rmk
\lmk x\otimes s_\omega\rmk\nonumber\\
&=\Ad \tilde\Gamma \lmk
s_\omega\otimes \Theta_{L\to R}(x)
\rmk
=\Theta_{L\to R}(x)\otimes 
s_\omega.
\end{align}
Hence we have $\Theta_{R\to L}(x)=\Theta_{L\to R}(x)$ for any $x\in B(s_\omega\caH_\omega)$.

Now let us prove {\it 1.-3.} of the Lemma.
{\it 1}. is shown in Lemma \ref{thir} {\it 4}, as $\theta=uc^*$.
%
To prove the second and the third statement, let
${\theta}: s_\omega\caH_\omega\to s_\omega\caH_\omega$ be any anti-unitary 
such that 
\begin{align}\label{jtjj}
J_\omega\lmk s_\omega\otimes x\rmk J_\omega^*
={\theta} x{\theta}^*\otimes s_\omega
,\quad x\in B(s_\omega\caH_\omega).
\end{align}
From this and (\ref{jj1}), we obtain
\begin{align}
\Theta_{R\to L}(x)=\Theta_{L\to R}(x)={\theta} x{\theta}^*,\quad x\in B(s_\omega\caH_\omega).
\end{align}
Therefore, from (\ref{jj2}),
\begin{align}
J_\omega\lmk x\otimes s_\omega\rmk J_\omega^*
=s_\omega\otimes {\theta} x{\theta}^*,\quad x\in B(s_\omega\caH_\omega).
\end{align}
This proves the second statement.

Furthermore, we have 
\begin{align}
s_\omega\otimes x=J_\omega^2\lmk s_\omega\otimes x\rmk \lmk J_\omega^*\rmk^2
=J_\omega\lmk {\theta} x{\theta}^*\otimes s_\omega \rmk J_\omega^*
=s_\omega\otimes {\theta}^2 x\lmk{\theta}^*\rmk^2,
\end{align}
for any $x\in B(s_\omega\caH_\omega)$.
This means ${\theta}^2=\tilde{\kappa}_{\theta} s_\omega$ with some $\tilde{\kappa}_{\theta}\in\bbT$.
Then by the anti-linearity of ${\theta}$, we have 
\begin{align}
\tilde{\kappa}_{\theta}{\theta}={\theta}^2{\theta}={\theta}{\theta}^2={\theta} \tilde{\kappa}_{\theta}=\bar{\tilde{\kappa}}_{\theta}{\theta}.
\end{align}
This means $\tilde{\kappa}_{\theta}$ is real, namely $\tilde{\kappa}_{\theta}=\pm1$.

This $\tilde{\kappa}_{\theta}$ is independent of the choice of ${\theta}$ satisfying (\ref{jtj}) for 
if ${{\theta_1}}$ is another such anti-unitary, we have
\begin{align}
{{\theta_1}} x{{{\theta_1}}}^*\otimes s_\omega
=
J_\omega\lmk s_\omega\otimes x\rmk J_\omega^*
={\theta} x{\theta}^*\otimes s_\omega,
\end{align}
for all $x\in B(s_\omega\caH_\omega)$.
Hence ${\theta}^*{{\theta_1}}$ is a unitary operator on $s_\omega\caH_\omega$
which commutes with any $x\in B(s_\omega\caH_\omega)$, i.e., ${\theta}=c{{\theta_1}}$ for some
$c\in\bbT$.
We then have 
\begin{align}
\tilde{\kappa}_{\theta} s_\omega={\theta}^2=c{{\theta_1}} c{{\theta_1}}
=c\bar c{{\theta_1}} {{\theta_1}}=\tilde{\kappa}_{{{\theta_1}}}s_\omega,
\end{align}
and get $\tilde{\kappa}_{\theta}=\tilde{\kappa}_{{{\theta_1}}}=:\kappa_\omega$.
This proves the third statement.
\end{proof}
The sign $\kappa_\omega$  coincides with our $\bbZ_2$-index $\sigma_\omega$.
\begin{thm}\label{coffee}
Let $\omega$ be
a reflection invariant pure state on $\caA$ which satisfies the split property with respect to $\caA_R$ and
$\caA_L$. 
Let $(\caH_\omega, \pi_\omega, \Omega_\omega, \Gamma_\omega)$  be a reflection-spilt representation associated to
$\omega$.
Let $s_\omega$ be a projection in $B(\caH_\omega)$ such that 
the support projection of $\Omega_\omega$ in 
$\unit_{\caH_\omega}\otimes B(\caH_\omega)$ is $\unit_{\caH_\omega}\otimes s_\omega$.
Set $\caM:=s_\omega\otimes B(s_\omega\caH_\omega)$, and
$p_\omega=s_\omega\otimes s_\omega$.
Let $J_\omega$ be the modular conjugation on $p_\omega\hat\caH_\omega$
associated to
$(\caM,\Omega_\omega)$.
Let 
$\sigma_\omega$ be the $\bbZ_2$-index associated to $\omega$ in Definition \ref{index} and
$\kappa_\omega$ the $\bbZ_2$-index associated to $\omega$ in Proposition \ref{kdef}.
Then 
we have
$\kappa_\omega=\sigma_\omega$.
\end{thm}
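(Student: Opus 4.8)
The plan is to evaluate both indices explicitly in one fixed reflection-split representation, using the Schmidt decomposition and the anti-unitary $\theta$ from Lemma \ref{thir}, and then to match them through the unitary $u$ relating the two Schmidt bases. This is legitimate because $\sigma_\omega$ (Theorem \ref{indint}) and $\kappa_\omega$ (Proposition \ref{kdef}) have already been shown to be independent of the choices involved, so any convenient representation may be used.

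First I would record that $\Gamma_\omega\Omega_\omega=\Omega_\omega$, which is exactly (\ref{guni}) with $A=\unit_\caA$. Fix the Schmidt decomposition $\Omega_\omega=\sum_{k\in\Lambda}\sqrt{\lambda_k}\,\xi_k\otimes\zeta_k$ of Lemma \ref{thir}, with $\xi_k=u\zeta_k$ and $c\zeta_k=\zeta_k$. By Lemma \ref{bas} we have $p_\omega\Omega_\omega=\Omega_\omega$, so $\Omega_\omega\in s_\omega\caH_\omega\otimes s_\omega\caH_\omega$. Applying $\Gamma_\omega$ and using the flip form (\ref{gsd}) gives $\Gamma_\omega\Omega_\omega=\sigma_\omega\sum_k\sqrt{\lambda_k}\,\zeta_k\otimes\xi_k$, and comparing the coefficients of $\zeta_i\otimes\zeta_j$ in the basis $\{\zeta_k\}$, with $u_{ij}:=\langle\zeta_i,u\zeta_j\rangle$, yields the weighted relation
\[
\sqrt{\lambda_j}\,u_{ij}=\sigma_\omega\sqrt{\lambda_i}\,u_{ji},\qquad i,j\in\Lambda .
\]

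The crucial reduction is to eliminate the weights $\sqrt{\lambda_k}$. On $p_\omega\hat\caH_\omega=s_\omega\caH_\omega\otimes s_\omega\caH_\omega$ the relation (\ref{gsd}) says $\Gamma_\omega$ acts as $\sigma_\omega$ times the flip $F$, $F(\xi\otimes\eta)=\eta\otimes\xi$, so $\Gamma_\omega\Omega_\omega=\Omega_\omega$ forces $F\Omega_\omega=\sigma_\omega\Omega_\omega$. Since flipping a bipartite vector interchanges its two reduced density matrices, and a global phase does not affect them, the reduced density matrix of $\Omega_\omega$ on the first factor equals that on the second: the latter is $\rho_\omega$, while the former is $u\rho_\omega u^*$. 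Hence $u\rho_\omega u^*=\rho_\omega$, i.e. $u$ commutes with $\rho_\omega$, so $u_{ij}=0$ whenever $\lambda_i\neq\lambda_j$. Feeding this into the weighted relation cancels $\sqrt{\lambda_i}=\sqrt{\lambda_j}$ on the surviving entries and leaves $u_{ij}=\sigma_\omega u_{ji}$ for all $i,j$; writing the transpose with respect to $c$ as $u^{\mathrm t}:=cu^*c$ (so that $(u^{\mathrm t})_{ij}=u_{ji}$), this is $u=\sigma_\omega u^{\mathrm t}$.

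On the other side, Lemma \ref{thir}.4 identifies the anti-unitary of Proposition \ref{kdef} as $\theta=uc$, whence $\theta^2=ucuc=u\bar u$ with $\bar u:=cuc$ the entrywise conjugate of $u$. The defining property $\theta^2=\kappa_\omega s_\omega$ then reads $u\bar u=\kappa_\omega s_\omega$, which by the unitarity of $u$ (so that $\bar u=\kappa_\omega u^*$) is equivalent to $u^{\mathrm t}=\kappa_\omega u$. Combining this with $u=\sigma_\omega u^{\mathrm t}$ gives $u=\sigma_\omega\kappa_\omega u$, and since $u\neq 0$ we conclude $\sigma_\omega\kappa_\omega=1$, that is $\kappa_\omega=\sigma_\omega$. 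The step I expect to be the main obstacle is the weight-elimination: the raw coefficient identity entangles $u$ with the Schmidt spectrum $\{\lambda_k\}$, and it is precisely the reflection invariance of $\Omega_\omega$—encoded in $F\Omega_\omega=\sigma_\omega\Omega_\omega$ and the resulting coincidence of the two reduced density matrices—that yields $[u,\rho_\omega]=0$ and lets the $\sqrt{\lambda_k}$ drop out; without it, the comparison of $\sigma_\omega$ with $\kappa_\omega$ would be obstructed by spectral degeneracies.
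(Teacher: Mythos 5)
Your proposal is correct and follows essentially the same route as the paper's proof: both evaluate the two indices through the Schmidt data $(u,c,\rho_\omega)$ of Lemma \ref{thir}, extract the transpose-symmetry $u=\sigma_\omega\, cu^*c$ from $\Gamma_\omega\Omega_\omega=\Omega_\omega$ combined with the flip property of Theorem \ref{indint}, and then combine it with the identification $\theta=uc$ (part 4 of Lemma \ref{thir}) and $\theta^2=\kappa_\omega s_\omega$ from Proposition \ref{kdef}. The only difference is in the mechanics of the key step: the paper obtains $u=\sigma_\omega c^*u^*c$ in one stroke by rewriting $\Gamma_\omega\Omega_\omega$ via part 3 of Lemma \ref{thir} and invoking uniqueness of the polar decomposition of $u\rho_\omega^{1/2}$ (which simultaneously yields the commutation of $cuc$ with $\rho_\omega^{1/2}$), whereas you first establish $[u,\rho_\omega]=0$ by the reduced-density-matrix argument and then cancel the Schmidt weights entrywise --- both are valid and amount to the same content.
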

\begin{proof}
We use the notation used in Lemma \ref{thir}.
From $\Gamma_\omega\Omega_\omega=\Omega_\omega$,
we obtain $u=\sigma_\omega c^*u^*c$: first we have
\begin{align}
&\sum_{k\in \Lambda} 
u\rho_\omega^{\frac 12}\zeta_k\otimes \zeta_k
=\sum_{k\in \Lambda} \sqrt{\lambda_k} 
u\zeta_k\otimes \zeta_k
=\Omega_\omega=
\Gamma_\omega\Omega_\omega
=\sum_{k\in \Lambda} \sqrt{\lambda_k} 
\Gamma_\omega\lmk u\zeta_k\otimes \zeta_k\rmk
=\sum_{k\in \Lambda} \sqrt{\lambda_k} 
\sigma_\omega \lmk \zeta_k\otimes u\zeta_k\rmk\nonumber\\
&=\sigma_\omega 
\lmk
u^*\otimes u
\rmk
\sum_{k\in \Lambda} \sqrt{\lambda_k} 
 \lmk u\zeta_k\otimes \zeta_k\rmk
=\sigma_\omega \lmk
u^*\otimes s_\omega
\rmk
\lmk
s_\omega\otimes u
\rmk
\Omega_\omega
=\sigma_\omega \lmk
u^*\otimes s_\omega
\rmk\sum_{l\in\Lambda} u\rho_\omega^{\frac 12} c^* u^*c\zeta_l\otimes \zeta_l\nonumber\\
&=\sigma_\omega \sum_{l\in\Lambda} \rho_\omega^{\frac 12} c^* u^*c\zeta_l\otimes \zeta_l.
\end{align}
Here we used Theorem \ref{indint} and Lemma \ref{thir} {\it 3.}
From this we obtain
\begin{align}
u\rho_\omega^{\frac 12}\zeta_k
=\sigma_\omega \rho_\omega^{\frac 12} c^* u^*c\zeta_k,
\end{align}
for all $k\in\Lambda$.
Hence we have
\begin{align}
u\rho_\omega^{\frac 12}
=\sigma_\omega c^* u^* c c^*u c\rho_\omega^{\frac 12} c^* u^*c.
\end{align}
By the uniqueness of the polar decomposition, we obtain the claim
\begin{align}\label{ucuc}
u=\sigma_\omega c^* u^* c.
\end{align}

Now we are ready to complete the proof of the Theorem.
By Proposition \ref{kdef} and Lemma \ref{thir} {\it 4.}, 
we have $(uc)^2=(uc^*)^2=\kappa_\omega s_\omega$.
From this and (\ref{ucuc}),
 we have
\begin{align}
\kappa_\omega
=\braket{(uc)^2\zeta_k}{\zeta_k}
=\braket{ucuc\zeta_k}{\zeta_k}
=\braket{(uc)^*\zeta_k}{uc\zeta_k}
=\braket{c^*u^*c\zeta_k}{u\zeta_k}
=\braket{\sigma_\omega u\zeta_k}{u\zeta_k}
=\sigma_\omega,
\end{align}
for any $k\in\Lambda$.
This completes the proof.
\end{proof}
\begin{rem}\label{su}
From the proof, we see that one way to derive the index $\sigma_\omega$ for concrete state
$\omega$ is to consider the Schmidt decomposition and calculate $u$.
Using (\ref{ucuc}), we can obtain $\sigma_\omega$.
\end{rem}
\section{$\bbZ_2$-index in Matrix Product States}\label{poco}
In this section, we prove that the $\bbZ_2$-index $\sigma_\omega$
for a matrix product state $\omega$ is the same as the $\bbZ_2$-index found in \cite{po}.
First let us recall known facts on matrix product states.
Let $k\in\bbN$ be a number
and $\vv=(v_{\mu})_{\mu=1,\ldots, d}\in \Mat_k^{\times d}$ a $d$-tuple of
$k\times k$ matrices.
For each $l\in\nan$, we set
\begin{equation}
{\mathcal K}_l(\vv) :=\spn\left\{v_{\mu_0}v_{\mu_{1}}\ldots v_{\mu_{l-1}}\mid
(\mu_0,\mu_1,\ldots,\mu_{l-1})\subset\{1,\ldots, d\}^{\times l}\right\}.
\end{equation}
We say $\vv$ is primitive if ${\mathcal K}_l(\vv)=\Mat_k$ for $l$ large enough.
We denote by $\Primz_u(d,k)$ the set of 
all primitive $d$-tuples $\vv$ of $k\times k$ matrices
which are normalized, i.e.,
\[
\sum_{\mu=1,\ldots, d} v_\mu v_\mu^*=1.
\]

For $\vv\in\Primz_u(d,k)$, there exists a unique $T_\vv$-invariant state
$\hat \rho_\vv$, and it is faithful. (See \cite{Wolf:2012aa} for example.)We denote the density matrix corresponding to
$\hat \rho_\vv$ by $\rho_\vv$.
Each $\vv\in \Primz_u(d,k)$ generates a translationally invariant state $\omega_\vv$ by
\begin{align}
\omega_{\vv}\lmk
\bigotimes_{i=0}^{l-1}
\ketbra{\psi_{\mu_i}}{\psi_{\nu_i}}
\rmk=
\hat\rho_\vv\lmk v_{\mu_0}\cdots v_{\mu_{l-1}} v_{\nu_{l-1}}^*\cdots v_{\nu_0}^*\rmk,\quad
\mu_i,\nu_i=1,\ldots,d,\quad i=0,\ldots,l-1,\quad l\in\nan. 
\end{align}
A translationally invariant state which has this representation is called a matrix product state.
This representation is unique up to
unitary and phase \cite{fnwpure}:
If both of $\vv^{(1)}\in \Primz_u(d,k_1)$ and $\vv^{(2)}\in \Primz_u(d,k_2)$
generate the same matrix product state, then $k_1=k_2$
and there exist a unitary $U:\cc^{k_1}\to\cc^{k_2}$ and $e^{i\theta}\in\bbT$
such that
\begin{align}\label{unique}
U v_{\mu}^{(1)}=e^{i\theta} v_{\mu}^{(2)}U,\quad
\mu=1,\ldots,d.
\end{align}

Let $\omega$ be a reflection invariant matrix product state generated by $\vv\in \Primz_u(d,k)$.
It is a unique ground state of some translation invariant finite range interaction.
 i.e., there is an interaction
$\Phi_\vv$ given by some fixed local positive element $h_\vv\in\caA_{[0,m-1]}$ with some
$m\in\nan$ as
\begin{align}\label{hamdef}
\Phi_{\vv}(X):=\left\{
\begin{gathered}
\beta_x\lmk h_\vv\rmk,\quad \text{if}\quad  X=[x,x+m-1]\cap\bbZ \quad \text{for some}\quad  x\in\bbZ\\
0,\quad\text{otherwise}
\end{gathered}\right.
\end{align}
for each $X\in {\mathfrak S}_{\bbZ}$ and $\omega$ is a unique $\tau^{\Phi_\vv}$-ground state. (See \cite{Fannes:1992vq} and \cite{Ogata3} Corollary 5.6 \cite{Ogata1} Theorem 1.18, Lemma 3.25.)
For this interaction $h_\vv$, $1-h_{\vv}$ is equal to 
the support projection of $\omega\vert_{\caA_{[0,m-1]}}$. (
Note that $\left. \omega\right\vert_{\caA_L}=\Xi_L(\hat \rho_\vv)$ with the $\Xi_L$ in Lemma 3.14 of \cite{Ogata1} and the $T_\vv$-invariant state $\hat \rho_\vv$.
From  the proof of Lemma 3.19 of \cite{Ogata1} equation (48), we see that
$1-h_{\vv}$ is equal to 
the support projection of $\omega\vert_{\caA_{[0,m-1]}}$. Note that primitive $\vv$ belongs to $\ClassA$, Remark 1.16 of \cite{Ogata1}).
Therefore, from the reflection invariance of $\omega$, $\Phi_\vv$ is reflection invariant.
The Hamiltonian given by this interaction is frustration-free,
i.e.,
for each finite interval $I$ with $|I|\ge m$,
the local Hamiltonian $\lmk H_{\Phi_\vv}\rmk_{I}$ has a nontrivial kernel, which is the ground state space
of $\lmk H_{\Phi_\vv}\rmk_{I}$.
We denote by $G_{I,\vv}$, the orthogonal projection onto this kernel.
By Lemma 3.19 of \cite{Ogata1}, and its proof (equation (48)),
the support of the restriction $\left.\omega\right\vert_{\caA_I}$
is equal to $G_{I,\vv}$ and  there exists some constant $d_\vv>0$ such that
\begin{align}\label{fb}
\psi\le d_\vv \cdot \omega,
\end{align}
for any frustration free state $\psi$ on $\caA_R$, i.e., a state $\psi$ satisfying $\psi(\beta_x(h_\vv))=0$ for any
$0\le x\in\bbZ$.
(See proof of Lemma 2.3 of \cite{Ogata2}.)

%
%
%
%
Now let us come back to our problem.
With the analogous argument as in \cite{po}, we obtain the following.
See \cite{TasakiBook} for a nice description.
\begin{lem}\label{lem5}
Let $\omega$ be a reflection invariant matrix product state generated by
$\vv\in  \Primz_u(d,k)$. Let $\hat \rho_\vv$ be the $T_\vv$-invariant state
given by a density matrix $\rho_\vv$.
Then there exist $e^{it}\in\bbT$ and 
$\theta:\bbC^k\to\bbC^k$ an anti-unitary
such that
\begin{align}\label{vt}
v_\mu=e^{it}\rho_\vv^{-\frac 12} \theta v_\mu^* \theta^* \rho_\vv^{\frac 12},\quad
\mu=1,\ldots, d.
\end{align}
For any $e^{it}\in\bbT$ and 
$\theta:\bbC^k\to\bbC^k$ an anti-unitary
satisfying (\ref{vt}), we have
\begin{align}
\theta^2=\zeta_\omega\unit,
\end{align}
with some $\zeta_\omega\in\{\pm 1\}$.
The value $\zeta_\omega$ does not depend on the choice of $\vv$,
 $e^{it}\in\bbT$ and 
$\theta:\bbC^k\to\bbC^k$.
\end{lem}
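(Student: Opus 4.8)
The plan is to adapt the matrix-product argument of \cite{po}: reformulate the reflection invariance $\omega=\omega\circ\gamma$ as an identity between two MPS representations and then apply the uniqueness statement (\ref{unique}). Because $\gamma$ reverses the order of the sites, translation invariance of $\omega$ yields, for all $l$ and all indices,
\[
\hat\rho_\vv\left(v_{\mu_0}\cdots v_{\mu_{l-1}}\,v_{\nu_{l-1}}^*\cdots v_{\nu_0}^*\right)
=\hat\rho_\vv\left(v_{\mu_{l-1}}\cdots v_{\mu_0}\,v_{\nu_0}^*\cdots v_{\nu_{l-1}}^*\right).
\]
Using cyclicity and transpose-invariance of the trace, together with $v_\mu^{T}=c\,v_\mu^*\,c$ and $\rho_\vv^{T}=\overline{\rho_\vv}=c\,\rho_\vv\,c$ (where $c$ is entrywise complex conjugation on $\bbC^k$), I would recognise the right-hand side as the matrix product state generated by the reflected tuple
\[
v_\mu':=\overline{\rho_\vv}^{-1/2}\,v_\mu^{T}\,\overline{\rho_\vv}^{1/2},\qquad\mu=1,\dots,d,
\]
whose invariant density is $\overline{\rho_\vv}$. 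Transposing the defining relations $\sum_\mu v_\mu v_\mu^*=\unit$ and $\sum_\mu v_\mu^*\rho_\vv v_\mu=\rho_\vv$ shows that $\vv'$ is normalised and that $\overline{\rho_\vv}$ is its unique invariant density, while primitivity of $\vv'$ follows from that of $\vv$ because transposition and conjugation by $\overline{\rho_\vv}^{\pm1/2}$ are bijections of $\Mat_k$, so $\mathcal K_l(\vv')=\Mat_k$ for large $l$. Thus $\vv'\in\Primz_u(d,k)$ and $\omega_{\vv'}=\omega\circ\gamma=\omega$.

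The existence of $\theta$ then follows from uniqueness. Applying (\ref{unique}) to $\vv$ and $\vv'$ gives a unitary $U$ and $e^{i\phi}\in\bbT$ with $Uv_\mu=e^{i\phi}v_\mu'U$; since gauge-equivalent tuples have unitarily conjugate invariant densities, this forces $U\rho_\vv U^*=\overline{\rho_\vv}$. Substituting $v_\mu'$ into $v_\mu=e^{i\phi}U^* v_\mu' U$, replacing $v_\mu^{T}$ by $c\,v_\mu^*\,c$ and $\overline{\rho_\vv}^{\pm1/2}$ by $c\,\rho_\vv^{\pm1/2}\,c$, and setting $\theta:=U^*c$ (an anti-unitary on $\bbC^k$), one arrives at $v_\mu=e^{i\phi}\,\theta\,\rho_\vv^{-1/2}v_\mu^*\rho_\vv^{1/2}\,\theta^*$. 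The identity $U\rho_\vv U^*=\overline{\rho_\vv}$ is exactly what yields $\theta\rho_\vv^{1/2}=\rho_\vv^{1/2}\theta$, and commuting the $\rho_\vv^{\pm1/2}$ outward through $\theta$ turns the last display into (\ref{vt}).

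For the remaining assertions I would fix an arbitrary anti-unitary $\theta$ satisfying (\ref{vt}). Feeding (\ref{vt}) into $\sum_\mu v_\mu v_\mu^*=\unit$ shows that $\theta^*\rho_\vv\theta$ is again a fixed point of $x\mapsto\sum_\mu v_\mu^* x v_\mu$; by uniqueness of the invariant density it is a positive multiple of $\rho_\vv$, and a normalisation check gives $\theta^*\rho_\vv\theta=\rho_\vv$, hence $\theta\rho_\vv^{1/2}=\rho_\vv^{1/2}\theta$. Iterating (\ref{vt}) once (taking the adjoint of (\ref{vt}), substituting, and using anti-linearity of $\theta$ on the phase) then collapses all $\rho_\vv^{\pm1/2}$ and leaves $\theta^2 v_\mu(\theta^2)^*=e^{-2it}v_\mu$ for every $\mu$. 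Since products of the $v_\mu$ of large length span $\Mat_k$ and in particular realise $\unit$, this forces $e^{2it}=1$ and makes $\theta^2$ commute with all of $\Mat_k$, so $\theta^2=\zeta_\omega\unit$; anti-linearity of $\theta$ forces $\zeta_\omega$ to be real, hence $\zeta_\omega\in\{\pm1\}$, exactly as in Proposition \ref{kdef}. Independence is then routine: for fixed $\vv$, any two admissible $\theta_1,\theta_2$ give a unitary $\theta_2^*\theta_1$ conjugating $v_\mu^*$ to a scalar times $v_\mu^*$, so the same spanning argument makes $\theta_2^*\theta_1$ central and $\theta_1^2=\theta_2^2$; for two generating tuples, the intertwiner $U$ of (\ref{unique}), together with the relation $U\rho^{(1)}U^*=\rho^{(2)}$ between their invariant densities, conjugates an admissible $\theta$ for one into an admissible $\theta$ for the other, and conjugation by a unitary preserves the central value $\theta^2$.

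The step I expect to be most delicate is the first half: keeping the transpose-and-conjugation bookkeeping under control so that the reflected tuple genuinely lands in the canonical form $v_\mu'=\overline{\rho_\vv}^{-1/2}v_\mu^{T}\overline{\rho_\vv}^{1/2}$ with the correct dressing, and, above all, extracting the commutation $\theta\rho_\vv=\rho_\vv\theta$ (equivalently $U\rho_\vv U^*=\overline{\rho_\vv}$). This commutation is precisely what permits the $\rho_\vv^{\pm1/2}$ factors to be placed in the exact position demanded by (\ref{vt}) and what makes the doubling computation for $\theta^2$ close up; without it neither the normal form nor the $\bbZ_2$-valuedness of $\zeta_\omega$ would go through.
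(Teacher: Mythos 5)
Your proposal is correct and follows essentially the same route as the paper's proof: build the reflected tuple $\overline{\rho_\vv}^{-1/2}v_\mu^{T}\overline{\rho_\vv}^{1/2}$, check it is normalized, primitive, and generates $\omega$ via reflection plus translation invariance, invoke the uniqueness theorem (\ref{unique}) to obtain the intertwining unitary $U$, set $\theta=U^*c$, and use primitivity together with anti-linearity to force $\theta^2=\pm\unit$ and the independence of the choices. The only differences are cosmetic: the paper fixes $c$ to be the conjugation diagonalizing $\rho_\vv$ (so $\overline{\rho_\vv}=\rho_\vv$ and the bookkeeping you flag as delicate disappears), and it pins down $\theta_0^2=\pm\unit$ via uniqueness of the polar decomposition, whereas you first derive $\theta^*\rho_\vv\theta=\rho_\vv$ from uniqueness of the invariant density and let the iteration collapse directly; both variants are sound.
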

\begin{defn}
By this Lemma, we define a $\bbZ_2$-index $\zeta_\omega$.
\end{defn}
\begin{proof}
Let $\rho_\vv=\sum_{j=1}^k{\lambda_j}\ketbra{\xi_j}{\xi_j}$
be the spectral decomposition of $\rho_\vv$, where
$\lambda_j>0$ and $\{\xi_j\}_{j=1}^d$ is an orthonormal basis of $\bbC^k$.
Let $c:\bbC^k\to\bbC^k$ be the complex conjugation such that $c\xi_j=\xi_j$
for all $j=1,\ldots,k$. Note that $c^*=c$.

Set 
\begin{align}
\bar v_\mu:=c v_\mu c,\quad
\tilde v_\mu
:=\rho_\vv^{-\frac 12} \lmk \bar v_\mu\rmk^* \rho_\vv^{\frac 12},\quad
\mu=1,\ldots, d.
\end{align}
We claim $\tilde \vv \in \Primz_u(d,k)$  and it generates $\omega$.
Since ${\mathcal K}_l(\vv)=\Mat_k$ for $l$ large enough,
${\mathcal K}_l(\tilde \vv)=\Mat_k$ for $l$ large enough.
Hence $\tilde\vv$ is primitive.
Using $c\rho_\vv c=\rho_\vv$ and $\sum_{\mu} v_\mu^* \rho_\vv  v_\mu =\rho_\vv$, we have
\begin{align}
\sum_{\mu}\tilde v_\mu \tilde v_\mu^*
=\sum_{\mu}\rho_\vv^{-\frac 12} c v_\mu^*c \rho_\vv c v_\mu c \rho_\vv^{-\frac 12}
=\sum_{\mu}\rho_\vv^{-\frac 12} c v_\mu^* \rho_\vv  v_\mu c \rho_\vv^{-\frac 12}
=\rho_\vv^{-\frac 12} c\rho_\vv  c \rho_\vv^{-\frac 12}
=\unit_k
\end{align}
Hence we have $\tilde \vv \in \Primz_u(d,k)$.
The state $\hat\rho_\vv$ is $T_{\tilde \vv}$-invariant because
\begin{align}
\sum_{\mu} \tilde v_\mu^* \rho_\vv \tilde v_\mu
=\sum_\mu \rho_\vv^{\frac 12 }cv_{\mu} c \rho_\vv^{-\frac 12} \rho_\vv\rho_\vv^{-\frac 12}
c v_\mu^* c \rho_\vv^{\frac 12}=\rho_\vv,
\end{align}
from $\sum_\mu v_\mu v_\mu^*=1$.
Now we show that $\tilde \vv$ generates $\omega$.
For any $l\in\nan$ and $\mu_i,\nu_i=1,\ldots d$, $i=0,\ldots, l-1$,
from the reflection invariance and translation invariance, we have
\begin{align}\label{og}
&\omega\lmk
\bigotimes_{i=0}^{l-1} \ketbra{\psi_{\mu_i}}{\psi_{\nu_i}}
\rmk
=\omega\circ\gamma\lmk
\bigotimes_{i=0}^{l-1} \ketbra{\psi_{\mu_i}}{\psi_{\nu_i}}
\rmk
=\omega\lmk
\bigotimes_{i=-l}^{-1} \ketbra{\psi_{\mu_{-i-1}}}{\psi_{\nu_{-i-1}}}
\rmk
=\omega\lmk
\bigotimes_{i=0}^{l-1} \ketbra{\psi_{\mu_{l-i-1}}}{\psi_{\nu_{l-i-1}}}
\rmk\nonumber\\
&=\hat\rho_\vv\lmk
v_{\mu_{l-1}}v_{\mu_{l-2}}\cdots v_{\mu_1}v_{\mu_0}v_{\nu_0}^* v_{\nu_1}^*
\cdots v_{\nu_{l-1}}^*
\rmk
=\sum_{j=1}^k \lambda_j 
\braket{\xi_j}{ v_{\mu_{l-1}}v_{\mu_{l-2}}\cdots v_{\mu_1}v_{\mu_0}v_{\nu_0}^* v_{\nu_1}^*
\cdots v_{\nu_{l-1}}^* \xi_j}
\end{align}
Note that 
\begin{align}
&\braket{\xi_j}{ v_{\mu_{l-1}}v_{\mu_{l-2}}\cdots v_{\mu_1}v_{\mu_0}v_{\nu_0}^* v_{\nu_1}^*
\cdots v_{\nu_{l-1}}^* \xi_j}
=\braket{c\xi_j}{ v_{\mu_{l-1}}v_{\mu_{l-2}}\cdots v_{\mu_1}v_{\mu_0}v_{\nu_0}^* v_{\nu_1}^*
\cdots v_{\nu_{l-1}}^* c\xi_j}\nonumber\\
&
=\braket{\xi_j}{ cv_{\nu_{l-1}}v_{\nu_{l-2}}\cdots v_{\nu_1}v_{\nu_0}v_{\mu_0}^* v_{\mu_1}^*
\cdots v_{\mu_{l-1}}^* c\xi_j}=\braket{\xi_j}{ \bar v_{\nu_{l-1}}\bar v_{\nu_{l-2}}\cdots \bar v_{\nu_1}\bar v_{\nu_0}\bar
v_{\mu_0}^* \bar v_{\mu_1}^*
\cdots \bar v_{\mu_{l-1}}^* \xi_j}.
\end{align}
Substituting this to (\ref{og}), we have
\begin{align}
&(\ref{og})
=\hat\rho_\vv\lmk
\bar v_{\nu_{l-1}}\bar v_{\nu_{l-2}}\cdots \bar v_{\nu_1}\bar v_{\nu_0}\bar
v_{\mu_0}^* \bar v_{\mu_1}^*
\cdots \bar v_{\mu_{l-1}}^*
\rmk
=\Tr\lmk
\rho_\vv^{\frac 12}
\lmk
\bar v_{\nu_{l-1}}\bar v_{\nu_{l-2}}\cdots \bar v_{\nu_1}\bar v_{\nu_0}
\rho_\vv^{-\frac 12}\rho_\vv\rho_\vv^{-\frac 12}
\bar
v_{\mu_0}^* \bar v_{\mu_1}^*
\cdots \bar v_{\mu_{l-1}}^*
\rmk\rho_\vv^{\frac 12}
\rmk\nonumber\\
&=
\hat\rho_\vv
\lmk
\tilde
v_{\mu_0} \tilde v_{\mu_1}
\cdots \tilde v_{\mu_{l-1}}
\tilde v_{\nu_{l-1}}^*\tilde v_{\nu_{l-2}}^*\cdots \tilde v_{\nu_1}^*\tilde v_{\nu_0}^*
\rmk.
\end{align}
Hence  $\tilde \vv$ generates $\omega$, proving the claim.

Now, as both of $\vv$ and $\tilde\vv$ generates same state $\omega$, by the uniqueness (\ref{unique}),
there exist a unitary $U$ on $\bbC^k$ and $e^{it}\in\bbT$
such that
\begin{align}\label{uvvu}
U v_\mu=e^{it}\tilde v_\mu U,\quad \mu=1,\ldots, d.
\end{align}
From the fact that $\hat\rho_\vv$ is $T_{\tilde \vv}$-invariant 
and (\ref{uvvu}), we see that the state $\hat\rho_\vv\circ \Ad U$ is $T_\vv$-invariant.
By the uniqueness of $T_\vv$-invariant state (by $\vv\in \Primz_u(d,k)$),
we get 
\begin{align}\label{ur}
U\rho_\vv U^*=\rho_\vv.
\end{align}
Set $\theta:=U^* c:\bbC^k\to\bbC^k$ an anti-unitary operator.
From (\ref{uvvu}) and the definition of $\tilde\vv$, and (\ref{ur}),
we obtain (\ref{vt}):
\begin{align}
v_\mu=e^{it} U^*\tilde v_\mu U
=e^{it}U^*\rho_\vv^{-\frac 12} \lmk \bar v_\mu\rmk^* \rho_\vv^{\frac 12}U
=e^{it}\rho_\vv^{-\frac 12} ( U^* c) v_\mu^* (U^*c)^*\rho_\vv^{\frac 12}
=e^{it}\rho_\vv^{-\frac 12} \theta v_\mu^*\theta^*\rho_\vv^{\frac 12}.
\end{align}

Now for any
$e^{it_0}\in\bbT$ and 
$\theta_0:\bbC^k\to\bbC^k$ an anti-unitary
satisfying 
\begin{align}
v_\mu=e^{it_0}\rho_\vv^{-\frac 12} {\theta_0} v_\mu^* {\theta_0}^* \rho_\vv^{\frac 12},\quad
\mu=1,\ldots, d,
\end{align}
we show
${\theta_0}^2=\unit$ or ${\theta_0}^2=-\unit$. 
Taking adjoint of (\ref{vt}), we have
\begin{align}
v_\mu^*
=e^{-it_0}\rho_\vv^{\frac 12} {\theta_0} v_\mu{\theta_0}^*\rho_\vv^{-\frac 12}.
\end{align}
Substituting this to (\ref{vt}), we obtain
\begin{align}
v_\mu=e^{it_0}\rho_\vv^{-\frac 12} {\theta_0} 
e^{-it_0}\rho_\vv^{\frac 12} {\theta_0} v_\mu{\theta_0}^*\rho_\vv^{-\frac 12}
{\theta_0}^*\rho_\vv^{\frac 12}
=e^{2it_0}\rho_\vv^{-\frac 12} {\theta_0} 
\rho_\vv^{\frac 12} {\theta_0} v_\mu{\theta_0}^*\rho_\vv^{-\frac 12}
{\theta_0}^*\rho_\vv^{\frac 12}
\end{align}
Since $\vv$ is primitive, this means 
$e^{2it_0}=1$ and 
$\rho_\vv^{-\frac 12} {\theta_0} 
\rho_\vv^{\frac 12}{\theta_0} =b\unit$ for some $b\in\bbC$.
Decomposing $b=e^{is}|b|$ with $e^{is}\in\bbT$, we have
\begin{align}
{\theta_0}^2 \lmk {\theta_0}^* \rho_\vv^{\frac 12} {\theta_0}\rmk
=e^{is}|b|\rho_\vv^{\frac 12}.
\end{align}
By the uniqueness of the polar decomposition and the faithfulness of $\rho_\vv$,
we get ${\theta_0}^2=e^{is}\unit$.
But then 
\begin{align}
e^{is}{\theta_0}={\theta_0}^2{\theta_0}={\theta_0}^3={\theta_0}{\theta_0}^2
={\theta_0} e^{is}=e^{-is}{\theta_0}.
\end{align}
Therefore, $e^{is}$ is real and we get that ${\theta_0}^2=\unit$ or ${\theta_0}^2=-\unit$. 

To prove the independence of this sign of
$\vv$,
 $e^{it}$ and 
$\theta$,
let 
$\oo\in  \Primz_u(d,k')$ be a generator of $\omega$, with a $T_\oo$-invariant state
$\hat \rho_\oo$ given by a density matrix $\rho_\omega$. Let
$e^{iu}\in\bbT$ and 
$\xi:\bbC^{k'}\to\bbC^{k'}$ an anti-unitary
such that
\begin{align}\label{vto}
\omega_\mu=e^{iu}\rho_\oo^{-\frac 12} \xi \omega_\mu^* \xi^* \rho_\oo^{\frac 12}.
\end{align}
Since both of $\vv$ and $\oo$ generates same $\omega$, from the uniqueness (\ref{unique}),
$k=k'$ and
there exist a unitary $V$ on $\bbC^k$ and $e^{i\lambda}\in\bbT$ such that
$v_\mu=e^{i\lambda }V^* \omega_\mu V$.
From
 \begin{align}
\hat\rho_\vv= \hat \rho_\vv\circ T_\vv
=\hat \rho_\vv\circ \Ad (V^*)\circ T_\oo\circ \Ad(V),
 \end{align}
 $\hat \rho_\vv\circ \Ad (V^*)$ is a $T_\oo$-invariant state.
 By the uniqueness of a $T_\oo$-invariant state,
 we get
 $\rho_\oo=V \rho_\vv V^*$.
 Now we have
 \begin{align}
 e^{i\lambda }V^* \omega_\mu V
 =v_\mu=e^{it}\rho_\vv^{-\frac 12} \theta v_\mu^* \theta^* \rho_\vv^{\frac 12}
 =e^{it}\rho_\vv^{-\frac 12} \theta 
  \lmk e^{i\lambda }V^* \omega_\mu V\rmk^*
  \theta^* \rho_\vv^{\frac 12}
 =e^{it+i\lambda} \rho_\vv^{-\frac 12} \theta 
  V^* \omega_\mu^* V
  \theta^* \rho_\vv^{\frac 12}.
 \end{align} 
 From this, (\ref{vto}), and $\rho_\oo=V \rho_\vv V^*$,
 we have
\begin{align}
 e^{it} \rho_\oo^{-\frac 12} V\theta 
  V^* \omega_\mu^* V
  \theta^* V^*\rho_\oo^{\frac 12}
=
 e^{it} V\rho_\vv^{-\frac 12} \theta 
  V^* \omega_\mu^* V
  \theta^* \rho_\vv^{\frac 12}V^*
  =\omega_\mu
  =e^{iu}\rho_\oo^{-\frac 12} \xi \omega_\mu^* \xi^* \rho_\oo^{\frac 12}.
\end{align}
Therefore, we get
 \begin{align}
 e^{it} V\theta 
  V^* \omega_\mu^* V
  \theta^* V^*
  =e^{iu} \xi \omega_\mu^* \xi^* .
\end{align}
Since $\oo$ is primitive, this means 
$\xi^* V\theta 
  V^*=e^{i\eta}\unit$ for some $e^{i\eta}\in \bbT$.
  Then we have
  \begin{align}
  V \theta^2V^*=V\theta V^* V\theta V^*
  =e^{-i\eta}\xi e^{-i\eta}\xi
  =\xi^2 .
  \end{align}
  This proves the claim.
\end{proof}

Since a matrix product state $\omega$ generated by a normalized primitive $d$-tuple
is a unique gapped ground state by \cite{Fannes:1992vq},\cite{Ogata3}, 
 it is pure and satisfies the split property.
Therefore, if furthermore $\omega$ is reflection invariant, 
we can associate $\omega$, our $\bbZ_2$-index $\sigma_\omega$ in Definition \ref{index}.
We then have the following theorem.
\begin{thm}
For a reflection invariant matrix product state $\omega$ generated by a normalized primitive $d$-tuple of matrices, we have
\[
\sigma_\omega=\zeta_\omega.
\]
\end{thm}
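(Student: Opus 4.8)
The plan is to reduce the claim to the coincidence $\kappa_\omega=\zeta_\omega$, since Theorem \ref{coffee} already gives $\sigma_\omega=\kappa_\omega$. Both $\kappa_\omega$ (Proposition \ref{kdef}) and $\zeta_\omega$ (Lemma \ref{lem5}) are signs obtained as the square of an anti-unitary on a $k$-dimensional space that is pinned down by a relation up to a phase, and each is \emph{independent} of the particular admissible anti-unitary chosen. I will therefore try to exhibit a single anti-unitary that is admissible for both characterizations.

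First I would pin down the boundary data of the reflection-split representation in MPS terms. Using the concrete description of the half-chain state $\omega\vert_{\caA_R}$ of a primitive MPS from \cite{Ogata1} (the maps $\Xi_L,\Xi_R$, the class $\ClassA$, and the frustration-free support projections together with the domination bound (\ref{fb})), I would compute the Schmidt decomposition $\Omega_\omega=\sum_k\sqrt{\lambda_k}\,\xi_k\otimes\zeta_k$ of Lemma \ref{thir}. I expect the Schmidt rank to equal the bond dimension $k$, so that $s_\omega\caH_\omega$ is $k$-dimensional and, under a natural identification $s_\omega\caH_\omega\cong\bbC^k$ sending $\{\zeta_k\}$ to an eigenbasis of $\rho_\vv$, the reduced density matrix $\rho_\omega$ becomes $\rho_\vv$ and the entanglement spectrum $\{\lambda_k\}$ becomes the spectrum of $\rho_\vv$. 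This places us in the finite-rank case of Lemma \ref{thir}: $\Delta_\omega^{1/2}(s_\omega\otimes x)\Omega_\omega=(s_\omega\otimes\rho_\vv^{1/2}x\rho_\vv^{-1/2})\Omega_\omega$, and the modular anti-unitary is $\theta_{\mathrm{mod}}=uc^*$ with $\xi_k=u\zeta_k$ and $c$ the conjugation fixing $\{\zeta_k\}$.

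Next I would identify how the site-$0$ operators act at the boundary, so as to recognise the matrices $v_\mu$. Using $\hat\pi_\omega=(\pi_\omega\circ\gamma_{L\to R})\otimes\pi_\omega$ and the fact that $\gamma_{L\to R}$ carries site $-1$ to site $0$, I would match matrix elements $\braket{\Omega_\omega}{\hat\pi_\omega(\cdots)\Omega_\omega}$ against the MPS formula $\Tr(\rho_\vv v_{\mu_0}\cdots v_{\nu_0}^*)$ to express the compression of $\pi_\omega$ of single-site operators to $s_\omega\caH_\omega\cong\bbC^k$ through the $v_\mu$. Reflection invariance then forces the conjugate tuple $\bar v_\mu=cv_\mu c$ and the reflected tuple $\tilde v_\mu=\rho_\vv^{-1/2}(\bar v_\mu)^*\rho_\vv^{1/2}$ to reappear, exactly the tuple built in the proof of Lemma \ref{lem5}. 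The aim of this step is to show that $\theta_{\mathrm{mod}}=uc^*$ satisfies the defining relation (\ref{vt}), i.e. $v_\mu=e^{it}\rho_\vv^{-1/2}\theta_{\mathrm{mod}}v_\mu^*\theta_{\mathrm{mod}}^*\rho_\vv^{1/2}$ for a suitable $e^{it}\in\bbT$; the intertwiner $U$ of (\ref{unique}) relating $\vv$ and $\tilde\vv$ should be the unitary underlying $u$.

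Once $\theta_{\mathrm{mod}}$ is shown to be admissible for (\ref{vt}), Lemma \ref{lem5} gives $\theta_{\mathrm{mod}}^2=\zeta_\omega\unit_k$, while Proposition \ref{kdef} gives $\theta_{\mathrm{mod}}^2=\kappa_\omega s_\omega$; under $s_\omega\caH_\omega\cong\bbC^k$ these say $\kappa_\omega=\zeta_\omega$, and with Theorem \ref{coffee} we obtain $\sigma_\omega=\zeta_\omega$. Alternatively, the same conclusion follows from Remark \ref{su}: compute $\sigma_\omega$ directly from the Schmidt data via the relation $u=\sigma_\omega c^* u^* c$ of (\ref{ucuc}) and compare with the sign produced by $\theta=U^*c$ in Lemma \ref{lem5}. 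The hard part will be the bookkeeping of the third step, namely turning the abstract reflection-split and modular data into the concrete bond-space matrices $v_\mu$ and $\rho_\vv$ and verifying that the modular conjugation $J_\omega$, which by construction swaps the left and right factors through $\rho_\vv$ and the conjugation $c$, is implemented on the bond space $\bbC^k$ precisely by the MPS anti-unitary of (\ref{vt}). This is where the detailed structure of the MPS GNS representation from \cite{Ogata1} must be invoked with care.
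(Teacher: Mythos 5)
Your high-level strategy coincides with the paper's: by Theorem \ref{coffee} it suffices to prove $\kappa_\omega=\zeta_\omega$, and since both signs are squares of anti-unitaries that are pinned down only up to phase and are independent of the admissible choice, one exhibits a single anti-unitary admissible simultaneously for Proposition \ref{kdef} and for the relation (\ref{vt}) of Lemma \ref{lem5}. Your preliminary expectations are also correct: the Schmidt rank equals the bond dimension, $\rho_\omega$ agrees with $\rho_\vv$ up to a unitary, and the candidate anti-unitary is $\theta=uc^*$ from Lemma \ref{thir}. Where you diverge from the paper is in how the bond-space matrices enter: you propose to identify $s_\omega\caH_\omega\cong\bbC^k$ and verify (\ref{vt}) for the \emph{original} tuple $\vv$, identifying the modular unitary $u$ with the intertwiner $U$ of (\ref{unique}); the paper instead constructs a \emph{new} normalized primitive tuple $\bbB$ intrinsically inside the GNS representation and verifies (\ref{vt}) for $\theta$ and $\bbB$, after which the independence clauses of Lemma \ref{lem5} (independence of generating tuple and of admissible anti-unitary) finish the argument without ever relating $\bbB$ to $\vv$ or $u$ to $U$.

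The genuine gap is your third step, which you flag as ``the hard part'' and leave unexecuted: it is not bookkeeping, it is essentially the entire proof. To show that $uc^*$ satisfies (\ref{vt}) you must realize the action of single-site observables on the bond space, i.e.\ you need concrete operators in $B(\caH_\omega)$ playing the role of the $v_\mu$, and neither the Schmidt decomposition nor the reflection-split structure supplies them. The paper manufactures them as follows: translation invariance gives an endomorphism of $B(\caH_\omega)$ implementing $\beta_1$, Arveson's Lemma \ref{lem:arv} yields isometries $S_\mu$ with $S_\mu^*S_\nu=\delta_{\mu\nu}\unit$, $\pi_\omega\lmk e_{\mu\nu}\otimes \unit_{[1,\infty)}\rmk=S_\mu S_\nu^*$ and $\sum_\mu S_\mu\pi_\omega(A)S_\mu^*=\pi_\omega\circ\beta_1(A)$; these are compressed to the frustration-free subspace $\caK$, which is shown to equal $s_\omega\caH_\omega$ and to carry a primitive normalized tuple $\bbB$ (here the domination bound (\ref{fb}) is needed for faithfulness of $\rho_\omega$ and primitivity); then the crucial identity
\begin{align*}
\lmk S_\nu^*\otimes\unit\rmk\Omega_\omega=a\lmk\unit\otimes S_\nu^*\rmk\Omega_\omega,
\qquad a\in\bbT,\ \nu=1,\dots,d,
\end{align*}
is established by the matrix-element computation (\ref{ss4}), which uses translation invariance together with the form $\hat\pi_\omega=\lmk\pi_\omega\circ\gamma_{L\to R}\rmk\otimes\pi_\omega$; finally the modular relations convert $\lmk B_\nu^*\otimes\unit\rmk\Omega_\omega$ into $\lmk s_\omega\otimes\rho_\omega^{1/2}\theta B_\nu\theta^*\rho_\omega^{-1/2}\rmk\Omega_\omega$, yielding (\ref{bget}), i.e.\ (\ref{vt}) for $\bbB$ with $\theta=uc^*$. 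Your plan of ``matching matrix elements against the MPS formula'' would have to reproduce an equivalent mechanism (either these Cuntz-type isometries or the explicit realization of the half-chain GNS triple from \cite{Ogata1}); as written, the assertions that reflection invariance ``forces'' the reflected tuple $\tilde v_\mu$ to reappear and that $U$ ``should be'' the unitary underlying $u$ are expectations, not arguments, so the proposal does not yet constitute a proof.
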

\begin{proof}Let $\omega$ be a reflection
 invariant matrix product state generated by $\vv\in \Primz_u(d,k)$.
Let $(\caH_\omega, \pi_\omega, \Omega_\omega, \Gamma_\omega)$  be a reflection-spilt representation associated to
$\omega$.
Then
\[
\lmk \hat \caH_\omega:=\caH_\omega\otimes \caH_\omega,
\hat\pi_\omega:=\lmk \pi_\omega\circ\gamma_{L\to R}\rmk\otimes \pi_\omega,
\Omega_\omega
\rmk
\]
is a GNS triple of $\omega$.
Let $s_\omega$ be a projection in $B(\caH_\omega)$ such that 
the support projection of $\Omega_\omega$ in 
$\unit_{\caH_\omega}\otimes B(\caH_\omega)$ is $\unit_{\caH_\omega}\otimes s_\omega$.
Set $\caM:=s_\omega\otimes B(s_\omega\caH_\omega)$, and
$p_\omega=s_\omega\otimes s_\omega$.
We define a density matrix $\rho_\omega$ on $\caH_\omega$
by
\begin{align}\label{rhod}
\Tr_{\caH_\omega}\lmk\rho_\omega x\rmk=
\braket{\Omega_\omega}{\lmk
\unit_{\caH_\omega}\otimes x\rmk
\Omega_\omega},\quad
x\in B(\caH_\omega).
\end{align}

Since $\omega$ is translation invariant, there is a unitary $V$ on $\hat \caH_\omega$
such that 
\[
V\hat \pi_\omega(A)V^*
=\hat\pi_\omega\circ \beta_1\lmk A\rmk,\quad
A\in \caA.
\]
From this,
we obtain a homomorphism from
$\hat\pi_\omega(\caA_R)''=\unit_{\caH_\omega}\otimes B(\caH_\omega)$
onto $\lmk \hat\pi_\omega\circ\beta_1(\caA_R)\rmk''\subset \unit_{\caH_\omega}\otimes B(\caH_\omega)$
\begin{align}
\unit_{\hat H_\omega}\otimes B(\caH_\omega)\ni \unit_{\caH_\omega} \otimes x
\mapsto 
V \lmk \unit_{\caH_\omega} \otimes x\rmk V^*
\in \unit_{H_\omega}\otimes B(\caH_\omega).
\end{align}
Therefore, there exists an endomorphism $\Theta$ on $B(\caH_\omega)$
such that
\begin{align}\label{theta}
V \lmk \unit_{\caH_\omega} \otimes x\rmk V^*
=\unit_{\caH_\omega} \otimes\Theta(x),\quad x\in B(\caH_\omega),
\end{align}
and
$\Theta(B(\caH_\omega))' =\pi_{\omega}(\Mat_d\otimes\unit_{\caA_{[1,\infty)}})$.
(Recall Lemma 2.6.8 of \cite{BR1}.)
Note that $\Theta\circ\pi_\omega(A)=\pi_\omega\circ\beta_1(A)$
for any $A\in\caA_R$.
We recall the following fact:
\begin{lem}[\cite{arv}]\label{lem:arv}
Let $\caH$ be a separable infinite dimensional Hilbert space, and $n\in \nan$.
Let $\Phi : B(\caH)\to B(\caH)$ be a
unital endomorphism of $B(\caH)$
such that$
\lmk \Phi\lmk B(\caH)\rmk\rmk'
$
is isomorphic to $\Mat_n$.
Let $\{E_{ij}\}_{i,j=1,\ldots, n}$
be a system of matrix units of  $
\lmk \Phi\lmk B(\caH)\rmk\rmk'
$.
Then there exist $S_i\in B(\caH)$, $i=1,\ldots,n$ such that
\begin{align}\label{eq:cuntz}
S_i^*S_j=\delta_{ij},\quad E_{ij}=S_iS_j^*,\quad 
\sum_{j=1}^n S_j x S_j^*=\Phi(x),\quad x\in B(\caH).
\end{align}
\end{lem}
Applying this to our $\Theta$ in (\ref{theta}),
we obtain
operators $\cs_\mu\in B(\caH_\omega)$ with $\mu=1,\ldots d$
satisfying the following:
\begin{align}
&\cs_{\mu}^*\cs_{\nu}=\delta_{\mu\nu}\unit,\label{s1}\\
&\sum_{\mu=1,\ldots,d} \cs_{\mu} \pi_{\omega}(A) \cs_{\mu}^*=\pi_\omega\circ\beta_{1}(A),\quad A\in \caA_R \label{st}.\\
&\pi_\omega\lmk e_{\mu\nu}\otimes\unit_{[1,\infty)}\rmk
=\cs_\mu \cs_\nu^*\quad
\text{for all} \quad \mu,\nu=1,\ldots,d.\label{sr}
\end{align}
(See \cite{arv,bjp,BJ}, Proof of Proposition 3.5 of \cite{Matsui1} and Lemma 3.5 of \cite{Matsui3}.)
Here $ e_{\mu\nu}\otimes\unit_{[1,\infty)}$ indicates an element $e_{\mu\nu}$ in $\caA_{\{0\}}=\Mat_{d}$
embedded into $\caA_R$.
From (\ref{s1}), (\ref{st}) and (\ref{sr}), we have
\begin{align}\label{ssss}
\pi_\omega\lmk\bigotimes_{k=0}^{l-1}e_{\mu_k,\nu_k}\rmk
=\cs_{\mu_0}\cdots \cs_{\mu_{l-1}}\cs_{\nu_{l-1}}^*\cdots \cs_{\nu_0}^*,
\end{align}for all $l\in\nan$, $\mu_k,\nu_k=1,\ldots,d$.

Now we restrict these $\cs_\mu$ to {\it a frustration-free subspace}
$\caK$ of $\caH_\omega$.
Recall that $\omega$ is the frustration free ground state of the translation invariant finite range interaction $\Phi_\vv$ (\ref{hamdef}). Namely, there is a self-adjoint element $h_\vv\in \caA_{[0,m-1]}$, 
such that
$\omega(\beta_x(h_\vv))=0$ for all $x\in\bbZ$.
We consider the following frustration-free
subspace of $\caH_\omega$:
\begin{align*}
\caK:=\cap_{\bbZ\ni x\ge 0}\ker \pi_\omega\lmk
\beta_x\lmk h_\vv\rmk
\rmk.
\end{align*}
Note that the support of $\rho_\omega$ defined in  (\ref{rhod}),  is in $\caK$, because $\omega$ is frustration-free.
Let $P_\caK$ be the orthogonal projection onto $\caK$.
As in \cite{Matsui3} (Lemma 3.2 and the argument in the proof of Lemma 3.6 ), $\caK$ is a finite dimensional space,
and 
$\cs_\mu^*$ preserves $\caK$:
\begin{align}\label{psb}
\cs_\mu^* P_\caK=P_{\caK}\cs_\mu^* P_\caK ,\quad \mu=1,\ldots,d.
\end{align}
We denote $(\cs_\mu^* P_\caK)^*\in B(\caK)$ by $B_\mu$, $\mu=1,\ldots,d$.
Note that $\rho_\omega$ is of finite rank because $\caK$ is finite dimensional.

We claim that $\bbB=(B_\mu)_{\mu=1,\ldots,d}\in \Primz_u(d,\dim\caK)$.To prove this, it suffices to show that
${\rho_\omega}$ is faithful on $\caK$ and for the completely positive unital map $T_\bbB$ defined
by $T_\bbB(x)=\sum_{\mu=1,\ldots,d} B_\mu x B_\mu^*$, $x\in B(\caK)$,
we have $T_\bbB^N(x)\to \Tr_{\caH_\omega}\lmk \rho_\omega x\rmk\unit $, as $N\to\infty$,
for each $x\in B(\caK)$. (See Lemma C.5 of \cite{Ogata1}.)
First we show that ${\rho_\omega}$ is faithful on $\caK$.
If ${\rho_\omega}$ is not faithful on $\caK$, 
then there exists a unit vector $\xi\in\caK$ which is orthogonal to
the support of ${\rho_\omega}$.
By the definition of $\caK$, this $\xi$ defines a frustration free state
$\psi=\braket{\xi}{\pi_\omega\lmk\cdot\rmk\xi}$ on $\caA_R$.
Let $p$ be the orthogonal projection onto the one-dimensional space $\bbC\xi$.
As $\pi_\omega(\caA_R)''=B(\caH_\omega)$, by Kaplansky's density Theorem, (Theorem 2.4.16 of \cite{BR1}) there exists
a net $\{x_\alpha\}_{\alpha}$ of positive elements in the unit ball of $\caA_{R}$
such that $\pi_\omega\lmk x_\alpha\rmk\to p$ in the $\sigma w$-topology.
For this net, we have $\lim_{\alpha}\omega(x_\alpha)=0$
and $\lim_\alpha\psi(x_\alpha)=1$.
This contradicts to (\ref{fb}).
Hence ${\rho_\omega}$ is faithful on $\caK$.
Next we show $T_\bbB^N(x)\to \Tr_{\caH_\omega}\lmk \rho_\omega x\rmk\unit $, as $N\to\infty$ for all $x\in B(\caK)$.
By $\pi_\omega(\caA_R)''=B(\caH_\omega)$ and the finite dimensionality of $\caK$,
we have $B(\caK)=P_\caK \pi_\omega\lmk \caA_{R}\cap\caA_{\rm loc}\rmk P_\caK$.
Therefore, for each $x\in B(\caK)$, there is an element $A\in \caA_{R}\cap\caA_{\rm loc}$
such that $x=P_{\caK}\pi_\omega\lmk A \rmk P_\caK$.
Since $\omega$ is a factor state and translation invariant, we have
$\sigma w-\lim_{N\to\infty}\pi_\omega\circ \beta_N(A)=\omega(A)\unit$.
Therefore, for any $\eta\in\caK$, we have
\begin{align}
\braket{\eta}{T_{\bbB}^N\lmk x\rmk\eta}
=
\braket{\eta}{T_{\bbB}^N\lmk P_{\caK}\pi_\omega \lmk A\rmk P_{\caK}\rmk\eta}
=\braket{\eta}{\pi_\omega\circ\beta_N\lmk A\rmk\eta}
\to \omega(A)\lV \eta\rV^2
=\Tr_{\caH_\omega}\lmk \rho_\omega x\rmk\lV \eta\rV^2,\quad N\to\infty.
\end{align}
Hence $\bbB\in \Primz_u(d,\dim\caK)$.

The above proof for the primitivity also tells us that $\rho_\omega$ is the $T_\bbB$-invariant state.
From (\ref{ssss}) and the definition of $\bbB$ and (\ref{psb}), we see that 
$\bbB$ is a $d$-tuple generating $\omega$.
Furthermore, as $\rho_\omega$ is faithful on $\caK$, we have $s_\omega=P_\caK$.

Let $J_\omega$ (resp. $\Delta_\omega$) be the modular conjugation
(resp. modular operator) on $p_\omega\hat\caH_\omega$ associated to
$(\caM,\Omega_\omega)$.
By Proposition \ref{kdef}
there exists an anti-unitary $\theta: s_\omega\caH_\omega\to s_\omega\caH_\omega$
such that 
\begin{align}\label{jtheta}
J_\omega\lmk s_\omega\otimes x\rmk J_\omega^*
=\theta x\theta^*\otimes s_\omega,\quad
J_\omega\lmk x\otimes s_\omega\rmk J_\omega^*
=s_\omega\otimes \theta x\theta^*,
\end{align}
for all $x\in B(s_\omega\caH_\omega)$.
By Proposition \ref{kdef} and Theorem \ref{coffee}, we have
$\theta^2=\kappa_\omega s_\omega=\sigma_\omega s_\omega$.
Recall also from Lemma \ref{thir}
\begin{align}\label{dac}
\Delta_\omega^{\frac 12}\lmk s_\omega\otimes x\rmk\Omega_\omega
=\lmk s_\omega\otimes \rho_\omega^{\frac 12}x\rho_\omega^{-\frac 12}\rmk\Omega_\omega,\quad
x\in B(s_\omega\caH_\omega).
\end{align}

Now we prove that for the $\theta$ in (\ref{jtheta}), there exist an $a\in\bbT$
such that
 \begin{align}\label{bget}
a\rho_\omega^{-\frac 12} \theta  B_\nu^*\theta^* \rho_\omega^{\frac 12}
= B_\nu,\quad \nu=1,\ldots, d.
\end{align}
From this, we obtain the claim of the Theorem:
\begin{align}
\sigma_\omega s_\omega=\theta^2=\zeta_\omega s_\omega.
\end{align}

To show (\ref{bget}), we first show
\begin{align}\label{ss4}
\ketbra
{\lmk \cs_\nu^* \otimes \unit\rmk
\Omega_\omega}{\lmk \cs_\mu^*\otimes \unit\rmk\Omega_\omega}
=\ketbra{\lmk  \unit\otimes\cs_\nu^* \rmk
\Omega_\omega}
{\lmk
\unit\otimes \cs_\mu^*
\rmk
\Omega_\omega},\quad \mu,\nu=1,\ldots, d.
\end{align}

 For any $l\in\nan$, $\mu,\nu, \mu_{-l},\ldots, \mu_{-1}, \nu_{-l},\ldots, \nu_{-1}, \lambda_0,\ldots,
\lambda_{l-1}, \eta_0,\ldots, \eta_{l-1}=1,\ldots,d$, we have
\begin{align}\label{nagaii}
&\braket{\lmk \cs_\mu^*\otimes \unit\rmk\Omega_\omega}{
\lmk
\pi_\omega\circ{\gamma_{L\to R}}\lmk\bigotimes_{j=-l}^{-1}e_{\mu_j,\nu_j}\rmk
\otimes \pi_\omega\lmk\bigotimes_{j=0}^{l-1}e_{\lambda_j,\eta_j}
\rmk
\rmk
\lmk \cs_\nu^* \otimes \unit\rmk\Omega_\omega}\nonumber\\
&=\braket{\Omega_\omega}{
\lmk
\cs_\mu \cs_{\mu_{-1}}\cdots \cs_{\mu_{-l}}\cs_{\nu_{-l}}^*\cdots \cs_{\nu_{-1}}^*\cs_\nu^*\otimes \pi_\omega\lmk\bigotimes_{j=0}^{l-1}e_{\lambda_j,\eta_j}\rmk
\rmk
\Omega_\omega}\nonumber\\
&=\braket{\Omega_\omega}{
\lmk
\pi_\omega\circ{\gamma_{L\to R}}\lmk\bigotimes_{j=-l-1}^{-2}e_{\mu_{j+1},\nu_{j+1}}\otimes e_{\mu\nu}\rmk
\otimes \pi_\omega\lmk\bigotimes_{j=0}^{l-1}e_{\lambda_j,\eta_j}
\rmk
\rmk
\Omega_\omega}\nonumber\\
&=\omega
\lmk
\bigotimes_{j=-l-1}^{-2}e_{\mu_{j+1},\nu_{j+1}}\otimes e_{\mu\nu}
\otimes \bigotimes_{j=0}^{l-1}e_{\lambda_j,\eta_j}
\rmk
\end{align}
In the third and the fourth line $e_{\mu\nu}$ is localized at site $j=-1$.
Since $\omega$ is translation invariant, we have
\begin{align}\label{net}
(\ref{nagaii})
=\omega
\lmk
\bigotimes_{j=-l}^{-1}e_{\mu_j,\nu_j}\otimes e_{\mu\nu}
\otimes \bigotimes_{j=1}^{l}e_{\lambda_{j-1},\eta_{j-1}}
\rmk.
\end{align}
Here, $e_{\mu\nu}$ is localized at site $j=0$.
Then we have
\begin{align}
&(\ref{net})
=\braket{\Omega_\omega}{
\lmk
\pi_\omega\circ{\gamma_{L\to R}}\lmk\bigotimes_{j=-l}^{-1}e_{\mu_j,\nu_j}\rmk
\otimes \pi_\omega\lmk e_{\mu\nu}\otimes \bigotimes_{j=1}^{l}e_{\lambda_{j-1},\eta_{j-1}}\rmk
\rmk
\Omega_\omega}\nonumber\\
&=\braket{\Omega_\omega}{
\lmk
\pi_\omega\circ{\gamma_{L\to R}}\lmk\bigotimes_{j=-l}^{-1}e_{\mu_j,\nu_j}\rmk
\otimes 
\cs_\mu \cs_{\lambda_{0}}\cdots \cs_{\lambda_{l-1}}\cs_{\eta_{l-1}}^*\cdots \cs_{\eta_{0}}^*\cs_\nu^*
\rmk
\Omega_\omega}\nonumber\\
&=
\braket{\Omega_\omega}
{
\lmk
\pi_\omega\circ{\gamma_{L\to R}}\lmk\bigotimes_{j=-l}^{-1}e_{\mu_j,\nu_j}\rmk
\otimes 
\cs_\mu 
\pi_\omega\lmk \bigotimes_{j=0}^{l-1}e_{\lambda_j,\eta_j}\rmk
\cs_\nu^*
\rmk
{\Omega_\omega}}\nonumber\\
&=\braket{\lmk
\unit\otimes \cs_\mu^*
\rmk
\Omega_\omega}
{
\lmk
\pi_\omega\circ{\gamma_{L\to R}}\lmk\bigotimes_{j=-l}^{-1}e_{\mu_j,\nu_j}\rmk
\otimes 
\pi_\omega\lmk \bigotimes_{j=0}^{l-1}e_{\lambda_j,\eta_j}\rmk
\rmk
{\lmk \unit\otimes \cs_\nu^*\rmk\Omega_\omega}}
\end{align}
Hence we obtain
\begin{align}\label{nagai}
&\braket{\lmk \cs_\mu^*\otimes \unit\rmk\Omega_\omega}{
\lmk
\pi_\omega\circ{\gamma_{L\to R}}\lmk\bigotimes_{j=-l}^{-1}e_{\mu_j,\nu_j}\rmk
\otimes \pi_\omega\lmk\bigotimes_{j=0}^{l-1}e_{\lambda_j,\eta_j}
\rmk\rmk
\lmk \cs_\nu^* \otimes \unit\rmk
\Omega_\omega}\nonumber\\
&=
\braket{\lmk
\unit\otimes \cs_\mu^*
\rmk
\Omega_\omega}
{
\lmk
\pi_\omega\circ{\gamma_{L\to R}}\lmk\bigotimes_{j=-l}^{-1}e_{\mu_j,\nu_j}\rmk
\otimes 
\pi_\omega\lmk \bigotimes_{j=0}^{l-1}e_{\lambda_j,\eta_j}\rmk
\rmk
{\lmk \unit\otimes \cs_\nu^*\rmk\Omega_\omega}},
\end{align}
for any $l\in\nan$, $\mu,\nu, \mu_{-l},\ldots, \mu_{-1}, \nu_{-l},\ldots, \nu_{-1}, \lambda_0,\ldots,
\lambda_{l-1}, \eta_0,\ldots, \eta_{l-1}=1,\ldots,d$.
Since $\hat \pi_\omega(\caA_{\rm loc})$ is dense in $B(\hat \caH_\omega)$
with respect to the $\sigma$-weak topology,
this means
\begin{align}
\ketbra
{\lmk \cs_\nu^* \otimes \unit\rmk
\Omega_\omega}{\lmk \cs_\mu^*\otimes \unit\rmk\Omega_\omega}
=\ketbra{\lmk  \unit\otimes\cs_\nu^* \rmk
\Omega_\omega}
{\lmk
\unit\otimes \cs_\mu^*
\rmk
\Omega_\omega},
\end{align}
proving the claim.

From (\ref{ss4}) with  $\mu=\nu=1,\ldots,d$, we see that there is $a_\nu\in \bbT$
such that
\begin{align}
\lmk \cs_\nu^* \otimes \unit\rmk
\Omega_\omega
=a_\nu\lmk  \unit\otimes\cs_\nu^* \rmk
\Omega_\omega.
\end{align}
Substituting this to (\ref{ss4}), we find 
$a_\mu=a_\nu=:a$, if $\lmk \cs_\nu^* \otimes \unit\rmk
\Omega_\omega$, $\lmk \cs_\mu^* \otimes \unit\rmk
\Omega_\omega$ are not zero.
Hence we get a constant $a\in\bbT$ such that
\begin{align}
\lmk \cs_\nu^* \otimes \unit\rmk
\Omega_\omega
=a\lmk  \unit\otimes\cs_\nu^* \rmk
\Omega_\omega,\quad \nu=1,\ldots, d.
\end{align}
By the definition of $\bb$ and recalling $s_\omega=P_\caK$, we obtain 
\begin{align}\label{bv}
\lmk B_\nu^* \otimes \unit\rmk
\Omega_\omega
=a\lmk  \unit\otimes B_\nu^* \rmk
\Omega_\omega,\quad \nu=1,\ldots, d.
\end{align}

On the other hand, by (\ref{fdef}), (\ref{add}), (\ref{dac}), (\ref{jtheta}) we have
\begin{align}\label{floz}
&\lmk B_\nu^* \otimes \unit\rmk
\Omega_\omega
=\lmk B_\nu^* \otimes s_\omega\rmk
\Omega_\omega
=J_\omega\Delta_\omega^{-\frac 12} \lmk B_\nu\otimes s_\omega\rmk 
\Omega_\omega
=\Delta_{\omega}^{\frac 12}J_\omega
\lmk
B_\nu\otimes s_\omega 
\rmk J_\omega^*\Omega_\omega\nonumber\\
&=\Delta_{\omega}^{\frac 12}
\lmk
s_\omega\otimes\theta B_\nu\theta^*
\rmk\Omega_\omega
=\lmk
s_\omega\otimes \rho_\omega^{\frac 12} \theta  B_\nu\theta^* \rho_\omega^{-\frac 12}
\rmk\Omega_\omega.
\end{align}

Combining (\ref{bv}) and (\ref{floz}), we obtain
\begin{align}
\lmk
s_\omega\otimes \rho_\omega^{\frac 12} \theta  B_\nu\theta^* \rho_\omega^{-\frac 12}
\rmk\Omega_\omega
=a\lmk  s_\omega\otimes B_\nu^* \rmk
\Omega_\omega,\quad \nu=1,\ldots, d.
\end{align}
Since $\Omega_\omega$ is separating for $\caM$, we obtain
\begin{align}
\rho_\omega^{\frac 12} \theta  B_\nu\theta^* \rho_\omega^{-\frac 12}
=a B_\nu^*,\quad \nu=1,\ldots, d.
\end{align}
Taking adjoint, we obtain (\ref{bget}).
This completes the proof of the Theorem.
\end{proof}

{\bf Acknowledgment.}\\
{The author is grateful to Hal Tasaki for fruitful discussion which was essential for the present work, and for the helpful comments on the manuscript. The beginning of the introduction heavily relies on his help.
This work was supported by JSPS KAKENHI Grant Number 16K05171. 
}
\bigskip



\begin{thebibliography}{10}
\bibitem[AKLT]{Affleck:1988vr}
I.~Affleck, T.~Kennedy, E.H. Lieb, and H.~Tasaki.
\newblock {Valence bond ground states in isotropic quantum antiferromagnets}.
\newblock {\em Comm. Math. Phys.}, 115, 477--528, 1988.


\bibitem[A]{arv}W.B.~ Arveson. \newblock{Continuous analogues of Fock space I.}
\newblock{ Mem. Amer. Math. Soc.}, {\bf 409}, 1989.



\bibitem[BMNS]{bmns}
S.~ Bachmann, S.~Michalakis, B.~Nachtergaele, and R.~Sims.
\newblock{Automorphic Equivalence within Gapped Phases of Quantum Lattice Systems.}\newblock{Communications in Mathematical Physics}
{\bf 309}, 835--871, 2012. 



\bibitem[BN]{bn}
S.~ Bachmann and B.~Nachtergaele.
\newblock{On gapped phases with a continuous symmetry and boundary operators}
\newblock{J. Stat. Phy.}
{\bf 154}, 91--112, 2014. 

\bibitem[BJP]{bjp}
O.~Bratteli  P.~Jorgensen, G.~Price. 
\newblock{Endomorphisms of $B(\caH)$.}
\newblock{Quantization, nonlinear partial differential equations, and operator algebra.}
93--138, 
Proc. Sympos. Pure Math., {\bf 59}, 1996. 

\bibitem[BJ]{BJ}O.~Bratteli,
P. E. T.~Jorgensen.
\newblock{ Endomorphisms of $B(H)$
II. Finitely Correlated States on $O_n$.}
\newblock{Journal of functional analysis.} {\bf 145}, 323--373 1997.

\bibitem[BR1]{BR1}
 O.~Bratteli,  D. ~W.~Robinson.
\newblock {\em Operator Algebras and Quantum Statistical 
 Mechanics 1.} Springer-Verlag, 1986.
 \bibitem[BR2]{BR2}
 O.~Bratteli,  D.~W.~Robinson.
 \newblock {\em Operator Algebras and Quantum Statistical 
 Mechanics 2.} Springer-Verlag, 1996.
 
 \bibitem[DJP]{DJP}
J. Derezinski, V. Jaksic and C.-A. Pillet:
\newblock {Perturbation theory of $W^*$-dynamics, Liouvilleans and KMS-states}.
\newblock {\em Reviews in Mathematical Physics},15-05, 447--489, 2003.
 \bibitem[GW]{GuWen2009}
Z.-C.~Gu,  and X.-G.~Wen,
\newblock{\em Tensor-entanglement-filtering renormalization approach and symmetry-protected topological order},
Phys. Rev. B,  {\bf 80}, 155131 2009.
\bibitem[CGW]{ChenGuWEn2011}
X. Chen, Z.-C. Gu, and X.-G. Wen,
\newblock{ Classification of gapped symmetric phases in one-dimensional spin systems},
Phys. Rev. B {\bf 83}, 035107 2011.

\bibitem[DL]{dl}
S.~Doplicher, R.~Longo. \newblock{Standard and split inclusions of von Neumann algebras.} Invent. Math. 
{\bf 75} 493--536. 1984.
\bibitem[FNW]{Fannes:1992vq}
M.~Fannes, B.~Nachtergaele, and R.F. Werner.
\newblock {Finitely correlated states on quantum spin chains}.
\newblock {\em Comm. Math. Phys.}, {\bf 144}, 443--490, 1992.


\bibitem[FNW2]{fnwpure}
M.~Fannes, B.~Nachtergaele, and R.F. Werner.
\newblock {Finitely correlated pure states}.
\newblock {\em Journal of functional analysis.}, {\bf 120}, 511--534, 1994.



\bibitem[Hal1]{Haldane1983a}
F.D.M. Haldane, 
\newblock{Continuum dynamics of the 1-D Heisenberg antiferromagnet: identification with the $O(3)$ nonlinear sigma model},
Phys. Lett. {\bf 93A}, 464--468 1983.

\bibitem[Hal2]{Haldane1983b}
F.D.M. Haldane, 
\newblock{Nonlinear field theory of large-spin Heisenberg antiferromagnets: semiclassically quantized solitons of the one-dimensional easy-axis N\'eel state},
Phys. Rev. Lett. {\bf 50} 1153--1156 1983.

\bibitem[H1]{area}
M.~Hastings.
\newblock{An area law for one-dimensional quantum systems.}
\newblock{Journal of Statistical Mechanics.}  P08024, 2007.

\bibitem[H2]{h1}M.~Hastings.
\newblock{Quasi-adiabatic Continuation for Disordered Systems: Applications to Correlations, Lieb-Schultz-Mattis, and Hall Conductance.}
 http://arxiv/org/abs/1001.5280v2 [math-ph], 2010.

\bibitem[K]{Kennedy1990}
T. Kennedy,
\newblock{Exact diagonalization of open spin 1 chains},
J.~Phys.: Cond. Matt. {\bf 2}, 5737--5745, 1990.

\bibitem[KN]{kn}T.~Koma and B.~Nachtergaele
\newblock{The Spectral Gap of the Ferromagnetic XXZ-Chain}
\newblock{Letters in Mathematical Physics}
{\bf 40}, 1--16,
1997.

\bibitem[KT1]{kt} T.~Kennedy and H.~Tasaki.
\newblock{Hidden $\bbZ_2\times\bbZ_2$-symmetry breaking in Haldane-gap antiferromagnets.}
\newblock{Phys. Rev. B}, {\bf 45} 304--307, 1992.

 \bibitem[KT2]{kt2} T.~Kennedy and H.~Tasaki.
\newblock{Hidden symmetry breaking and the Haldane phase in $S= 1$ quantum spin chains.}
\newblock{Communications in Mathematical Physics}, {\bf 147} 431--484, 1992.


\bibitem[M1]{Matsui3}T.~Matsui.
\newblock{A characterization of matrix product pure states.}
\newblock{Infinite dimensional analysis and quantum probability.}
{\bf 1} 647--661. 1998.

\bibitem[M2]{Matsui1}
T.~Matsui.
\newblock{The split property and the symmetry breaking of the quantum spin chain.}
\newblock{Communications in Mathematical Physics}, {\bf 218}
393--416, 2001.


\bibitem[M3]{Matsui2}T.~Matsui.
\newblock{Boundedness of entanglement entropy and split property of quantum spin chains}.
\newblock{
Reviews in Mathematical Physics},
1350017, 2013.


\bibitem[NOS]{nos}B.~Nachtergaele, Y.~Ogata, and R.~Sims.
\newblock{Boundedness of entanglement entropy and split property of quantum spin chains}.
\newblock{J. Stat. Phys},
{\bf 124}, 1--13, 2006.

\bibitem[NR]{denNijsRommelse}
M. den Nijs and K. Rommelse,
\newblock{Preroughening transitions in crystal surfaces and valence-bond phases in quantum spin chains},
Phys. Rev. B {\bf 40}, 4709, 1989.


\bibitem[O1]{Ogata1}
Y.~Ogata.
\newblock {A class of asymmetric gapped Hamiltonians on quantum spin chains and its classification I}.
\newblock{Communications in Mathematical Physics}, {\bf 348}, 847--895, 2016.

\bibitem[O2]{Ogata2}
Y.~Ogata.
\newblock {A class of asymmetric gapped Hamiltonians on quantum spin chains and its classification II}.
\newblock{Communications in Mathematical Physics}, {\bf 348}, 897--957, 2016. 


\bibitem[O3]{Ogata3}
Y.~Ogata.
\newblock {A class of asymmetric gapped Hamiltonians on quantum spin chains and its classification III}.
\newblock{Communications in Mathematical Physics}, {\bf 352}, 1205--1263, 2017.

\bibitem[O4]{Ogata4}
Y.~Ogata.
\newblock{ A ${\mathbb Z}_2$-index of symmetry protected topological phases with
  time reversal symmetry for quantum spin chains}
\newblock{arXiv:1810.01045}
  
\bibitem[OT]{ot}Y.~Ogata and H.~Tasaki
\newblock{Lieb-Schultz-Mattis type theorems for quantum spin chains without continuous symmetry}.
\newblock{Communications in Mathematical Physics}, 2019.



\bibitem[PTBO1]{po}F.~Pollmann, A.~Turner, E.~Berg, and M.~Oshikawa
\newblock{Entanglement spectrum of a topological phase in one dimension}.
Phys. Rev. B {\bf 81}, 064439, 2010.

\bibitem[PTBO2]{po2}F.~Pollmann, A.~Turner, E.~Berg, and M.~Oshikawa
\newblock{Symmetry protection of topological phases in one-dimensional quantum spin systems}.
Phys. Rev. B {\bf 81}, 075125, 2012.

\bibitem[PWSVC]{Perez-Garcia2008}
D. Perez-Garcia, M.M. Wolf, M. Sanz, F. Verstraete, and J.I. Cirac,
{\em String order and symmetries in quantum spin lattices},\/
Phys. Rev. Lett. {\bf 100}, 167202 2008.

\bibitem[T1]{takesaki}
M.~Takesaki,
\newblock{Theory of operator algebras. I.}
\newblock{\em{Encyclopaedia of Mathematical Sciences}.} Springer-Verlag, Berlin, 2002.
\bibitem[T2]{takesaki2}
M.~Takesaki,
\newblock{Theory of operator algebras. II.}
\newblock{\em{Encyclopaedia of Mathematical Sciences}.} Springer-Verlag, Berlin, 2003.

\bibitem[Tas1]{ta}
H.~Tasaki
\newblock{Topological phase transition and Z2 index for S = 1 quantum spin chains}
\newblock{arXiv:1804.04337}

\bibitem[Tas2]{TasakiBook}
H.~Tasaki,
{\em Physics and mathematics of quantum many-body systems}, (to be published from Springer).


\bibitem[W]{Wolf:2012aa}
M.M. Wolf.
\newblock{Quantum channels {\&} operations.}
\newblock{ Unpublished.}  2012.



\end{thebibliography}
\end{document}